\declaretheorem{theorem}
\theoremstyle{plain}
\newtheorem{lemma}[theorem]{Lemma}
\newtheorem{fact}[theorem]{Fact}
\newtheorem{corollary}[theorem]{Corollary}
\newtheorem{remark}[theorem]{Remark}
\theoremstyle{definition}
\newtheorem{definition}[theorem]{Definition}
\newcommand{\OPT}{\mathrm{OPT}}
\newcommand{\TSP}{\mathrm{TSP}}
\newcommand{\SOL}{\mathrm{SOL}}
\newcommand{\E}{\mathbb{E}}
\newcommand{\opt}{\mathrm{opt}}
\newcommand{\spine}{\mathrm{spine}}
\newcommand{\dist}{\mathrm{dist}}
\newcommand{\cost}{\mathrm{cost}}
\newcommand{\DP}{\mathrm{DP}}
\newcommand{\eps}{\epsilon}
\newcommand{\biggamma}{\Gamma}
\title{A PTAS for Capacitated Vehicle Routing on Trees\footnote{This is the full version of the extended abstract that was accepted at the 49th EATCS International Colloquium on Automata, Languages, and Programming (ICALP) 2022.}
\vspace{5mm}}
\date{}
\author{Claire Mathieu\footnote{CNRS Paris, France, e-mail: \texttt{claire.mathieu@irif.fr}.}  \and Hang Zhou\footnote{Ecole Polytechnique, Institut Polytechnique de Paris, France, e-mail: \texttt{hzhou@lix.polytechnique.fr}.}}
\begin{document}

\maketitle

\begin{abstract}
    We give a polynomial time approximation scheme (PTAS) for the unit demand capacitated vehicle routing problem (CVRP) on trees, for the entire range of the tour capacity. The result extends to the splittable CVRP.
\end{abstract}

%\thispagestyle{empty}
%\newpage{}
%\setcounter{page}{1}

\section{Introduction}

Given an edge-weighted graph with a vertex called \emph{depot}, a subset of vertices with demands, called \emph{terminals}, and an integer \emph{tour capacity} $k$, the \emph{capacitated vehicle routing problem (CVRP)} asks for a minimum length collection of tours starting and ending at the depot such that those tours together cover all the demand and the total demand covered by each tour is at most $k$.
In the \emph{unit demand} version, each terminal has unit demand, which is covered by a single tour;\footnote{Thus we may identify the demand coverd with the number of terminals covered.}
in the \emph{splittable} version, each terminal has a positive integer demand and the demand at a terminal may be covered by multiple tours.

The CVRP was introduced by Dantzig and Ramser in 1959~\cite{dantzig1959truck} and is arguably one of the most important problems in Operations Research.
Books have been dedicated to vehicle routing problems, e.g.,\ \cite{toth2002vehicle,golden2008vehicle,crainic2012fleet,anbuudayasankar2016models}.
Yet, these problems remain challenging, both from a practical and a theoretical perspective.

Here we focus on the special case when the underlying metric is a tree. That case has been the object of much research.
The splittable tree CVRP was proved NP-hard in 1991~\cite{labbe1991capacitated}, so researchers turned to approximation algorithms.
Hamaguchi and Katoh~\cite{hamaguchi1998capacitated} gave a simple lower bound: every edge must be traversed by enough tours to cover all terminals whose shortest paths to the depot contain that edge.
Based on this lower bound, they designed a 1.5-approximation  in polynomial time~\cite{hamaguchi1998capacitated}.
The approximation ratio  was improved to $(\sqrt{41}-1)/4$ by Asano, Katoh, and Kawashima~\cite{asano2001new} and further to $4/3$ by Becker~\cite{becker2018tight}, both results again based on the lower bound from~\cite{hamaguchi1998capacitated}.
On the other hand, it was shown in~\cite{asano2001new} that using this lower bound one cannot achieve an approximation ratio better than $4/3$.
More recently, researchers tried to go beyond a constant factor so as to get a $(1+\epsilon)$-approximation, at the cost of relaxing some of the constraints.
When the tour capacity is allowed to be violated by an $\eps$ fraction, there is a bicriteria PTAS for the unit demand tree CVRP due to Becker and Paul~\cite{becker2019framework}.
When the running time is allowed to be quasi-polynomial,
Jayaprakash and Salavatipour~\cite{jayaprakash2021approximation} very recently gave a \emph{quasi-polynomial time approximation scheme (QPTAS)} for the unit demand and the splittable versions of the tree CVRP.
In this paper, we close this line of research by obtaining a $(1+\epsilon)$-approximation without relaxing any of the constraints -- in other words, a \emph{polynomial-time approximation scheme (PTAS)}.

\begin{theorem}
\label{thm:main}
There is an approximation scheme for the unit demand capacitated vehicle routing problem (CVRP) on trees with polynomial running time.
\end{theorem}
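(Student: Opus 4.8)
The plan is to follow the general template for approximation schemes on trees --- preprocess into a structured instance, prove that some near-optimal solution has bounded ``local complexity'', and then search for the best such solution by a dynamic program --- but with every step engineered to work uniformly across the whole range of the capacity $k$, which is the crux of getting a \emph{polynomial} (rather than quasi-polynomial) running time. First I would carry out routine reductions, each costing only a $(1+\eps)$ factor: make the tree binary, put the depot at a leaf, round edge weights so that $\OPT$ is polynomially bounded, guess a constant-factor estimate of $\OPT$ (e.g.\ via Becker's $4/3$-approximation and the Hamaguchi--Katoh lower bound $\sum_e \lceil \demand(T_e)/k\rceil \cdot w(e)$, where $T_e$ is the subtree below edge $e$), and --- importantly for what follows --- reduce to an instance whose relevant part has depth $O(\eps^{-1}\log n)$.

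Next comes the structural heart of the argument. I would decompose the tree into a connected \emph{spine} $\spine$ containing the depot --- roughly, all vertices $v$ with $\demand(T_v) > \eps k$ --- together with the \emph{components}, the connected pieces of $T\setminus\spine$, each of demand at most $\eps k$ (a small number of atypical ``big'' subtrees may need to be peeled off and treated separately). The goal is a structural theorem stating that, losing only a $(1+\eps)$ factor, there is a feasible solution in which: (i) each component is served by the tours in a \emph{canonical} way --- all but $O_\eps(1)$ of the tours entering a component do so carrying the same (nearly full) residual capacity, pick up an integral number of units following an Euler-tour-like pattern, and exit, so that a component's effect on the rest of the instance is captured by one of only polynomially many ``(capacity used, number of visits)'' profiles; and (ii) along every edge of the spine, if the residual capacities of the partial tours crossing that edge are bucketed into $O(\eps^{-1}\log k)$ geometrically spaced classes, then the resulting vector of class-counts, after an adaptive rounding, takes one of only polynomially many values. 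Because the number of partial tours crossing an edge $e$ can, in a near-optimal solution, be kept to $O(\demand(T_e)/k + 1)$, in the two extreme regimes ($k$ constant, or $k$ close to $n$) one of the two sources of complexity is already small; the intermediate regime is exactly where both the geometric bucketing and the count-rounding are needed.

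Given the structural theorem, the algorithm is a bottom-up dynamic program on the spine. The state at a spine edge is the rounded class-count vector of the partial tours that must be continued upward toward the depot, possibly together with a record of how the big subtrees below the edge have been handled; the transitions are: at a component attachment, add the component's contribution by selecting one of its precomputed canonical profiles (the profiles themselves computed by a small auxiliary DP inside the component); at a branching vertex, merge the two children's vectors while deciding which pairs of partial tours are concatenated; and at the depot, require that every partial tour be closed, paying its return cost. By the structural theorem the number of states per edge and the number of transitions are polynomial, the spine has $O(\eps^{-1}\log n)$ depth, and hence the whole DP runs in time polynomial in $n$ (with the exponent depending on $\eps$) and returns a value within $(1+\eps)$ of $\OPT$.

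The main obstacle is proving the structural theorem, and two points inside it need the most care. The first is the reorganization that makes components canonical without inflating cost: one must show, by exchange and shifting arguments, that rerouting a component's ``fractional'' service onto $O_\eps(1)$ tours and then topping tours up to near-capacity is affordable, the extra cost being charged to the Hamaguchi--Katoh lower bound contributed by the component's connecting edge --- here the demand bound $\eps k$ per component is essential. The second is ensuring that the \emph{adaptive rounding} of the class-count vectors does not compound along the spine: since the spine has been made $O(\eps^{-1}\log n)$-deep one can afford a $(1+\eps/\log n)$-type loss at each level so that the total loss is $1+O(\eps)$, but making the rounding consistent across merges --- so that a rounded sum of rounded vectors is still a valid rounded vector --- requires fixing the ``nice'' count values as a single global geometric grid rather than rounding each vector independently. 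Finally, the atypical cases --- the ``big'' subtrees, and the degenerate regime $\eps k < 1$ where the component decomposition collapses (but then $k$ is a constant, so there are only $O_\eps(1)$ residual-capacity classes and no bucketing is needed) --- are handled by separate, simpler arguments that I would fold in as special cases.
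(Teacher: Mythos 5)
Your proposal essentially reconstructs the Jayaprakash--Salavatipour QPTAS framework and asserts that it runs in polynomial time, but the central counting claim fails. If the residual capacities crossing a spine edge are bucketed into $\Theta(\eps^{-1}\log k)$ geometric classes, then the vector of class-counts (each count ranging over polynomially many values) takes $n^{\Theta_\eps(\log k)}$ possible values, which is quasi-polynomial, not polynomial; no ``adaptive rounding'' of the counts alone can fix this, because the dimension of the vector is already logarithmic. This is precisely the obstruction that makes the prior work a QPTAS. The paper's route around it is to prove a stronger structure theorem first: there is a $(1+O(\eps))$-approximate solution in which every tour that enters a component picks up $\Omega_\eps(k)$ terminals there, so each component is served by only $O_\eps(1)$ tours and, after adaptive rounding, the subtour demands at each branching vertex come from a set of only $O_\eps(1)$ values --- making the DP lists constant-length rather than $\log k$-dimensional.

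Two further choices in your plan would break the structure theorem you need. First, you take components of demand at most $\eps k$ (the Becker--Paul granularity). To eliminate or reroute the small subtours you must reconnect the removed pieces to the depot, and the affordable budget for that is $\sum_c \dist(r_c)$ summed over components; this sum is $O(\eps)\cdot\opt$ only because the paper makes components \emph{big} ($\Theta(k/\eps)$ terminals each, so there are few of them relative to the Hamaguchi--Katoh lower bound). With $\eps k$-demand components the sum can exceed $\opt$ by a factor $\Omega(1/\eps^2)$ and the charging fails --- this is exactly why Becker--Paul must force each small cluster onto a single tour and end up violating capacity. Second, you settle for spine depth $O(\eps^{-1}\log n)$ and a $(1+\eps/\log n)$ loss per level; making the per-level rounding that fine inflates the number of distinct rounded values back to $\omega(1)$ and reintroduces the quasi-polynomial state space. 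The paper instead first reduces to instances with \emph{bounded distances} ($D_{\max}/D_{\min} \le (1/\eps)^{1/\eps-1}$, via a Baker-type layering of terminals by distance, losing $1+O(\eps)$), and only then compresses the tree of components to $O_\eps(1)$ levels, so a constant loss per level suffices. Your proposal omits the bounded-distance reduction entirely, and without it neither the constant height reduction nor the cost analysis of reconnecting removed pieces (which is bounded against $D_{\min}$ times the number of tours) goes through.
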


\begin{corollary}
\label{cor:main}
There is an approximation scheme for the splittable capacitated vehicle routing problem (CVRP) on trees with running time polynomial in the number of vertices $n$ and the tour capacity $k$.
\end{corollary}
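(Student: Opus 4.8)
The plan is to reduce the splittable tree CVRP to the unit demand tree CVRP and then invoke \Cref{thm:main}. Let $I$ be a splittable instance on a tree with $n$ vertices, depot $\rho$, capacity $k$, and terminal demands $(d_v)$. First I would pass to an instance in which every demand is smaller than $k$: for each terminal $v$ write $d_v = a_v k + b_v$ with $a_v = \lfloor d_v / k \rfloor$ and $0 \le b_v < k$, and let $I'$ be the instance obtained by replacing each $d_v$ with $b_v$. I claim that
\[
  \OPT(I) \;=\; \OPT(I') + \sum_v 2 a_v \, \dist(\rho, v).
\]
The inequality ``$\le$'' is immediate: to any solution of $I'$, append for each $v$ exactly $a_v$ copies of the back-and-forth tour from $\rho$ to $v$ carrying a full load of $k$ units each; the result is feasible for $I$. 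For ``$\ge$'', I would argue by induction on $\sum_v a_v$ via a local exchange argument: given an optimal solution of $I$ and a terminal $v$ with $d_v \ge k$, one shows that there is an optimal solution in which some tour delivers a full load of $k$ units to $v$ and to no other terminal; deleting that tour (whose cost is at least $2\dist(\rho, v)$) yields a feasible solution of the instance with $d_v$ decreased by $k$, of cost at most $\OPT(I) - 2\dist(\rho, v)$, and the induction closes. This is essentially the classical observation underlying the lower bound of Hamaguchi and Katoh~\cite{hamaguchi1998capacitated}.

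Second, in $I'$ every demand is at most $k - 1$, so the total demand is less than $nk$. Replace each terminal $v$ of $I'$ by $b_v$ unit-demand terminals co-located with $v$ (equivalently, attach to $v$ that many new leaves via edges of weight $0$, each carrying unit demand). Call the resulting unit demand instance $J$; it has $O(nk)$ vertices and at most $nk$ terminals. Since all demands and $k$ are integers, $I'$ has an optimal splittable solution in which every delivered amount is an integer, and such solutions are exactly the feasible solutions of $J$; hence $\OPT(J) = \OPT(I')$.

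Finally, I would run the approximation scheme of \Cref{thm:main} on $J$ with parameter $\eps$. Its running time is polynomial in the size of $J$, which is polynomial in $n$ and $k$, and it returns a solution of $J$ of cost at most $(1 + \eps)\OPT(J)$. Reinterpreting this solution as a splittable solution of $I'$ and re-appending, for each $v$, the $a_v$ dedicated back-and-forth tours gives a feasible solution of $I$ of cost at most $(1+\eps)\OPT(I') + \sum_v 2 a_v \dist(\rho, v) \le (1+\eps)\OPT(I)$, still in time polynomial in $n$ and $k$. The only genuinely non-routine ingredient is the identity above -- specifically its ``$\ge$'' direction, i.e.\ that peeling off full tours is without loss in the optimum -- which I expect to be the main obstacle; everything else is bookkeeping. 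Note that the blow-up to $O(nk)$ co-located terminals is affordable precisely because the peeling step first brings every demand below $k$: without it, the number of unit-demand terminals -- and hence the running time of the resulting scheme -- could be exponential in the input size.
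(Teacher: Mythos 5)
Your second step (replacing each terminal of residual demand $b_v<k$ by $b_v$ co-located unit-demand leaves joined by weight-$0$ edges, and invoking integrality of optimal splittable solutions) is sound and is essentially what the paper does. The fatal problem is your first step: the identity $\OPT(I)=\OPT(I')+\sum_v 2a_v\dist(\rho,v)$ is false, and so is the key lemma you invoke for its ``$\ge$'' direction (that some optimal solution contains a tour delivering a full load of $k$ to $v$ and nothing else whenever $d_v\ge k$). Counterexample: the tree consists of the depot $\rho$, an edge of weight $D$ to a vertex $w$, a terminal $v$ attached to $w$ by an edge of tiny weight $\eps'$, and terminals $u_1,u_2,u_3$ each attached to $w$ by an edge of weight $D$; set $k=6$, $d_v=6$, $d_{u_i}=4$. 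Any solution costs $2D\cdot(\#\text{tours})+2D\cdot(\#u\text{-visits})+2\eps'\cdot(\#v\text{-visits})$; since the total demand is $18=3k$ one needs at least $3$ tours and at least $3$ visits to the $u_i$'s, and a short case check shows the only way to attain $3$ tours and $3$ $u$-visits is to use three tours, each carrying $4$ units from one $u_i$ and $2$ units from $v$. Hence $\OPT(I)=12D+6\eps'$ and \emph{no} optimal solution contains a dedicated full tour to $v$. On the other hand $\OPT(I')=12D$ (either two tours with four $u$-visits or three dedicated tours), so $\OPT(I')+\sum_v 2a_v\dist(\rho,v)=14D+2\eps'>\OPT(I)$: the ``$\ge$'' direction fails. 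Since your algorithm always outputs a solution of cost at least $\OPT(I')+\sum_v 2a_v\dist(\rho,v)$, on this family it is at best a $7/6$-approximation, so the proposed scheme is not a PTAS.

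The paper sidesteps this by citing the weaker --- and correct --- preprocessing of Jayaprakash and Salavatipour, which only reduces to instances with $d(v)\le k\cdot n$; peeling off dedicated full tours is lossless only down to that much larger threshold (intuitively, once far more than $n$ tours must visit $v$, almost all of them can be assumed dedicated, whereas at the scale $d_v\approx k$ the example above shows it is genuinely better to split $v$'s demand across tours that are going past $v$ anyway). After capping at $k\cdot n$, expanding each terminal into $d(v)$ unit-demand leaves yields a tree with $O(kn^2)$ vertices, still polynomial in $n$ and $k$, so your concern that one must first push every demand below $k$ to keep the blow-up polynomial is unfounded. To repair your write-up, replace the peeling identity by that cap (with a proof or a citation for it) and keep the rest.
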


To the best of our knowledge, this is the first PTAS for the CVRP in a non-trivial metric and for the entire range of the tour capacity.
Previously, PTASs for small capacity as well as QPTASs were given for the CVRP in several metrics, see \cref{sec:related-work}.

\subsection{Related Work}
Our algorithms build on \cite{jayaprakash2021approximation} and \cite{becker2019framework} but with the addition of
significant new ideas, as we now explain.

\subsubsection{Comparison with the QPTAS in \cite{jayaprakash2021approximation}}
Jayaprakash and Salavatipour noted in~\cite{jayaprakash2021approximation} that
\begin{center}``\emph{it is not clear if it (the QPTAS) can be turned into a
PTAS without significant new ideas.}''
\end{center}

The running time in~\cite{jayaprakash2021approximation} is $n^{O_\eps(\log^4 n)}$.
Where do those four $\log n$ factors in the exponent come from?
At a high level, the QPTAS in~\cite{jayaprakash2021approximation} consists of three parts:  (1) reducing the height of the tree; (2) designing a bicriteria QPTAS;  (3) going from the bicriteria QPTAS to a true QPTAS.
Our approach builds on \cite{jayaprakash2021approximation} but differs from it in each of the three parts, so that in the end we get rid of all of four $\log n$ factors, hence a PTAS.

(1) Jayaprakash and Salavatipour~\cite{jayaprakash2021approximation} reduce the input tree height from $O(n)$ to $O_\eps(\log^2 n)$; whereas instead of the input tree, we consider a \emph{tree of components} (\cref{fact:decomposition}) and reduce its height to $O_\eps(1)$, see \cref{fig:transform-global}.
Pleasingly, the height reduction (\cref{sec:simplifyingtree}) is much simpler than in \cite{jayaprakash2021approximation}.
The analysis differs from \cite{jayaprakash2021approximation} and uses the structure of a near-optimal solution established in Section~\ref{sec:components} and the \emph{bounded distance} property (\cref{def:bounded-distance} and \cref{thm:reduction-bounded-dist}).

(2) In the \emph{adaptive rounding} used in \cite{jayaprakash2021approximation}, they consider the entire range $[1,k]$ of the demands of subtours and partition the subtour demands into \emph{buckets}, resulting in $\Omega_\eps(\log k)$ different subtour demands after rounding.
In our approach, we define \emph{large} and \emph{small} subtours inside components, depending on whether their demands are $\Omega_\eps(k)$ (\cref{def:large-small}).
Then we transform the solution structure to eliminate small subtours (\cref{sec:components}), hence only $O_\eps(1)$ different subtour demands after rounding.
This elimination requires a delicate handling of small subtours.
Thanks to the additional structure, our analysis of the adaptive rounding is simpler than in \cite{jayaprakash2021approximation}, and in particular, we do not need the concept of buckets.

(3) Jayaprakash and Salavatipour show that the \emph{orphan tokens}, which are removed from the tours exceeding capacity, can probably be covered by duplicating a small random set of tours in the optimal solution.
Their approach requires remembering the demands of $\Omega(\log n)$ subtours passing through each edge.\footnote{See the proofs of Lemma~2 and Lemma~3 in the full version of \cite{jayaprakash2021approximation} at \url{https://arxiv.org/abs/2106.15034}.}
To avoid this $\Omega(\log n)$ factor, our approach to cover the orphan tokens (\cref{sec:construction-opt-1}) is different, see \cref{fig:ITP}.
The analysis of our approach (\cref{sec:opt1-feasible,sec:analysis-opt1}) contains several novelties of this paper.

\begin{figure}[t]
    \centering
    \includegraphics[scale=0.25]{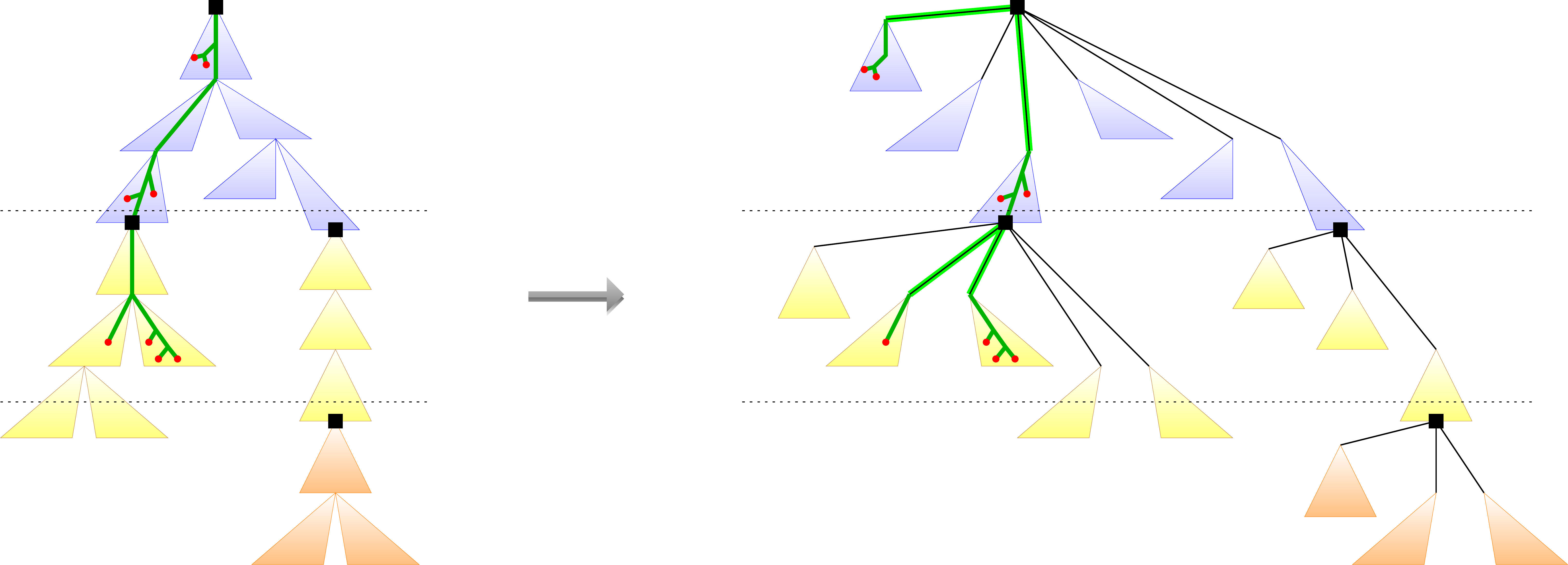}
    \caption{Height reduction for a tree of components.
    The left figure represents the initial tree of components, where each triangle represents a component.
    We partition the components into classes (indicated by blue, yellow, and orange), according to the distances from the roots of the components to the root of the tree, and we reduce the height within each class to 1 (right figure), see~\cref{sec:simplifyingtree}.
    The thick green path in the left figure represents a tour in an optimal solution.
    The red circular nodes are the terminals visited by that tour.
    The corresponding tour in the new tree (right figure)
    spans the same set of terminals.
    %becomes more expensive.
%    We bound the extra cost in \cref{lem:transform-t}.
    }
    \label{fig:transform-global}
\end{figure}

\begin{figure}[t]
    \centering
    \begin{subfigure}[b]{0.55\textwidth}
    \centering
    \includegraphics[scale=0.6]{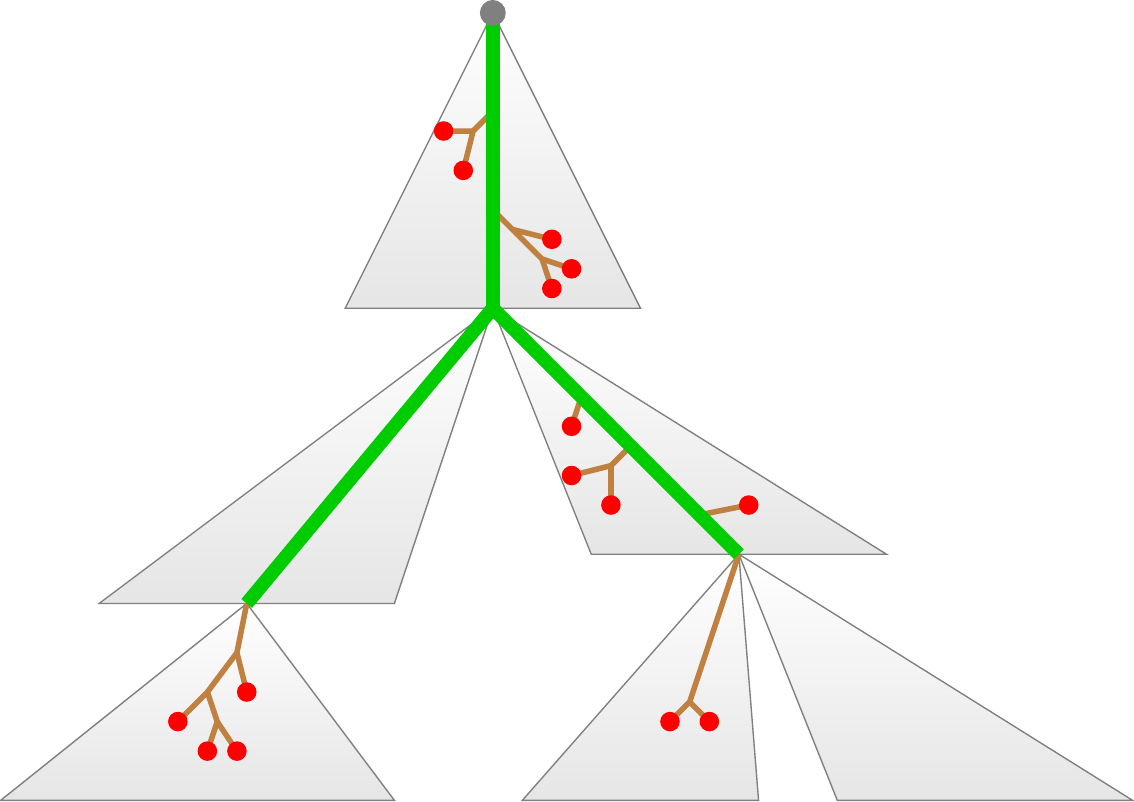}
    \caption{}
    \label{fig:ITP1}
    \end{subfigure}
    \begin{subfigure}[b]{0.4\textwidth}
    \centering
    \includegraphics[scale=0.6]{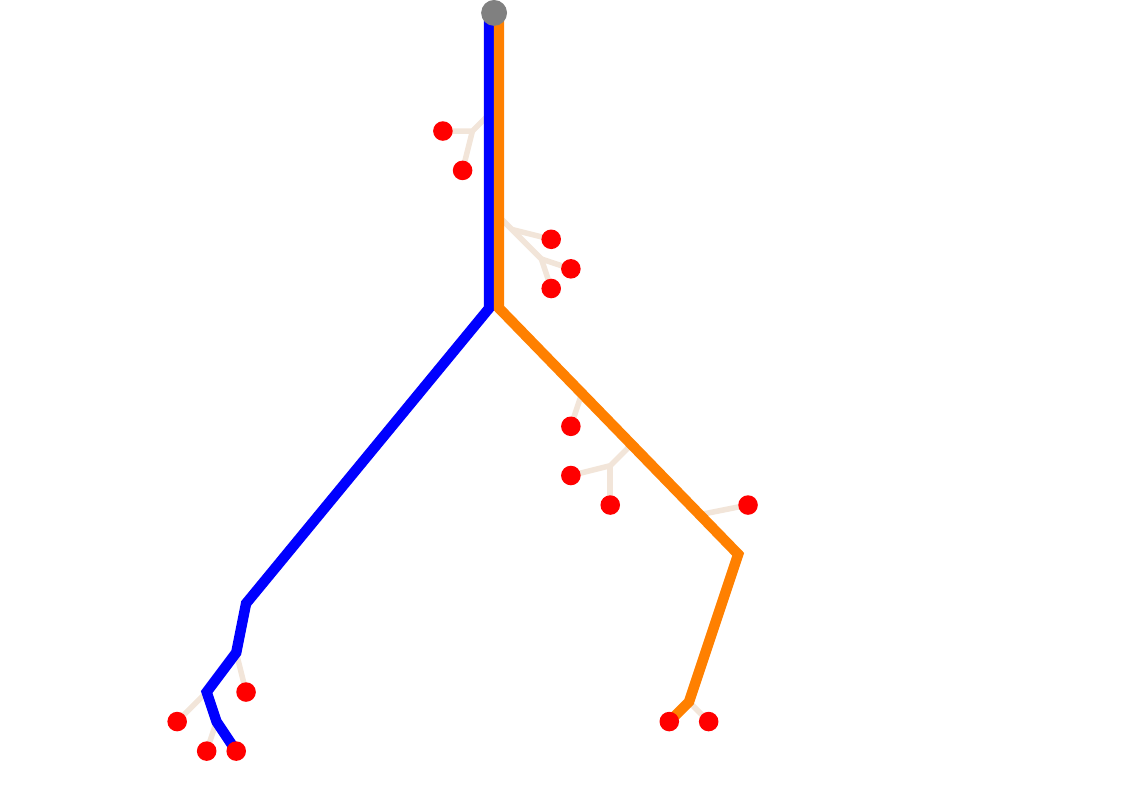}
    \caption{}
    \label{fig:ITP2}
    \end{subfigure}
    \caption{Covering the \emph{orphan tokens} (in red). In \cref{fig:ITP1}, the orphan tokens are contained in the small pieces (in brown) that are removed from the tours exceeding capacity.
    We add the thick paths (in green) to connect all of the small pieces to the root of the tree.
    The cost of the thick paths is an $O(\eps)$ fraction of the optimal cost (\cref{lem:sum-dist-rc,cor:cost-Q}), thanks to the bounded distance property.
    The induced traveling salesman tour is a double cover of the tree spanning the orphan tokens.
    Next, we apply the \emph{iterated tour partitioning (ITP)} algorithm on that tour.
    In \cref{fig:ITP2}, the two paths (in blue and in orange) represent the connections to the depot added by the ITP algorithm.
    Their cost is again an $O(\eps)$ fraction of the optimal cost (\cref{lemma:analyzeITP}), thanks to the bound on the number of orphan tokens and the bounded distance property. %See \cref{sec:analysis-opt1}.
    }
    \label{fig:ITP}
\end{figure}

\subsubsection{Comparison with the Bicriteria PTAS in \cite{becker2019framework}}
Why is the algorithm in \cite{becker2019framework} a bicriteria PTAS, but not a PTAS?

Becker and Paul~\cite{becker2019framework} start by decomposing the tree into \emph{clusters}.
(1) They require that each \emph{leaf} cluster is visited by a single tour.
When the violation of the tour capacity is not allowed, this requirement does not preserve a $(1+\eps)$-approximate solution, see \cref{fig:BP19}.
(2) They also require that each \emph{small} internal cluster is visited by a single tour.
To that end, they modify the optimal solution by reassigning all terminals of a small cluster to some existing tour at the cost of possibly violating the tour capacity.
Such modifications do not seem achievable in the design of a PTAS.

In this paper, we start by defining \emph{components} (\cref{fact:decomposition}), inspired by  \emph{clusters} in \cite{becker2019framework}.
Unlike \cite{becker2019framework}, we allow terminals in any component to be visited by multiple tours.
However, allowing many subtours inside a component could result in an exponential running time for a dynamic program.
To prevent that, we modify the solution structure inside a component so that the number of subtours becomes bounded (\cref{thm:opt1}).
Instead of considering all subtours simultaneously as in \cite{becker2019framework}, we distinguish \emph{small} subtours from \emph{large} subtours (\cref{def:large-small}).
Inspired by \cite{becker2019framework}, we combine small subtours and reallocate them to existing tours such that the violation of the tour capacity is an $O(\eps)$ fraction, see Steps~1 to 3 of the construction in \cref{sec:construction-opt-1}.
Next, we use the \emph{iterated tour partitioning (ITP)} and its postprocessing to reduce the demand of the tours exceeding capacity (\cref{fig:ITP}), which is a novelty in this paper, see Steps~4 to 6 of the construction in \cref{sec:construction-opt-1}.
The ITP algorithm and its postprocessing are analyzed in \cref{sec:opt1-feasible,sec:analysis-opt1}.
In particular, we bound the cost due to the ITP algorithm thanks to the \emph{bounded distance} property (\cref{thm:reduction-bounded-dist}) and to the parameters in our component decomposition that are different from those in~\cite{becker2019framework}, see \cref{remark:components}.
Besides the above novelties in our approach, the height reduction (\cref{fig:transform-global}, see also \cref{sec:simplifyingtree}), the adaptive rounding (\cref{sec:analysisrounding}), the reduction to bounded distances (\cref{sec:preprocessing}), as well as part of the dynamic program (\cref{sec:DP}) are new compared with \cite{becker2019framework}.
These additional techniques are essential in the design of our PTAS, because of the more complicated solution structure inside components in our approach compared with the solution structure inside clusters in \cite{becker2019framework}.

\subsubsection{Other Related Work}
\label{sec:related-work}

\paragraph{Constant-factor approximations in general metric spaces.} The CVRP is a generalization of the \emph{traveling salesman problem (TSP)}.
In general metric spaces, Haimovich and Rinnooy~Kan~\cite{haimovich1985bounds} introduced a simple heuristics, called \emph{iterated tour partitioning (ITP)}.
Altinkemer and Gavish~\cite{altinkemer1990heuristics} showed that the approximation ratio of the ITP algorithm  for the unit demand and the splittable CVRP is at most $1+\left(1-\frac{1}{k}\right)C_{\TSP}$, where $C_{\TSP}\geq 1$ is the approximation ratio of a TSP algorithm.
Bompadre, Dror, and Orlin~\cite{bompadre2006improved} improved this bound to $1+\left(1-\frac{1}{k}\right)C_{\TSP}-\Omega\left(\frac{1}{k^3}\right)$.
The ratio for the unit demand and the splittable CVRP on general metric spaces was recently improved by Blauth, Traub, and Vygen~\cite{blauth2021improving} to $1+C_{\TSP}-\eps$, for some small constant $\eps>0$.

\paragraph{QPTASs.}
Das and Mathieu~\cite{das2015quasipolynomial} designed a QPTAS for the  CVRP in the Euclidean space;
Jayaprakash and Salavatipour~\cite{jayaprakash2021approximation} designed a QPTAS for the CVRP in trees and extended that algorithm to QPTASs in graphs of bounded treewidth, bounded doubling or highway dimension.
When the tour capacity is fixed, Becker, Klein, and Saulpic~\cite{becker2017quasi} gave a QPTAS for planar graphs and bounded-genus graphs.
%, and Cohen-Addad~et~al.~\cite{cohen2020light} gave a QPTAS for minor-free graphs.

\paragraph{PTASs for small capacity.}
In the Euclidean space, there have been PTAS algorithms for the CVRP with small capacity $k$:
 work by Haimovich and Rinnooy Kan~\cite{haimovich1985bounds}, when $k$ is constant; by
 Asano et al.~\cite{asano1997covering}  extending techniques in~\cite{haimovich1985bounds}, for $k=O(\log n/\log\log n)$; and by
Adamaszek, Czumaj, and Lingas~\cite{adamaszek2010ptas}, when  $k\leq 2^{\log^{f(\eps)}(n)}$.
For higher dimensional Euclidean metrics, Khachay and Dubinin~\cite{khachay2016ptas} gave a PTAS for fixed dimension $\ell$ and $k=O(\log^{\frac{1}{\ell}}(n))$.
Again when the capacity is bounded, Becker, Klein and Schild~\cite{becker2019ptas} gave a PTAS for planar graphs; Becker, Klein, and Saulpic~\cite{becker2018polynomial} gave a PTAS for graphs of bounded highway dimension; and Cohen-Addad~et~al.~\cite{cohen2020light} gave PTASs for bounded genus graphs and bounded treewidth graphs.

\paragraph{Unsplittable CVRP.}
In the \emph{unsplittable} version of the CVRP, every terminal has a positive integer demand, and the entire demand at a terminal should be served by a single tour.
On general metric spaces, the best-to-date approximation ratio for the unsplittable CVRP is roughly $3.194$ due to the recent work of Friggstad et al.~\cite{friggstad2021improved}.
For tree metrics, the unsplittable CVRP is APX-hard: indeed, it is NP-hard to approximate the unsplittable tree CVRP to better than a 1.5 factor~\cite{golden1981capacitated} using a reduction from the bin packing problem.
Labbé, Laporte and Mercure~\cite{labbe1991capacitated} gave a 2-approximation for the unsplittable tree CVRP.
The approximation ratio for the unsplittable tree CVRP was improved to $(1.5+\eps)$ very recently by Mathieu and Zhou~\cite{MZ22}, building upon several techniques in the current paper.

%Hence our focus on the unit demand and the splittable versions of the CVRP in the design of a PTAS.

\subsection{Overview of Our Techniques}
\label{sec:overview}
The main part of our work focuses on the unit demand tree CVRP, and we extend our results to the splittable tree CVRP in the end of this work.
%Let $T$ denote the tree.
\begin{definition}[bounded distances]
\label{def:bounded-distance}
Let $D_{\min}$ (resp.\ $D_{\max}$) denote the minimum (resp.\ maximum) distance between the depot and any terminal in the tree.
We say that an instance has \emph{bounded distances} if
$D_{\max}<\left(\frac{1}{\eps}\right)^{\frac{1}{\eps}-1}\cdot D_{\min}.$
\end{definition}

\cref{thm:main} follows directly from \cref{thm:PTAS-bounded-dist,thm:reduction-bounded-dist}.

\begin{theorem}
\label{thm:PTAS-bounded-dist}
There is a polynomial time $(1+4\eps)$-approximation algorithm for the unit demand CVRP on the tree $T$ with bounded distances.
\end{theorem}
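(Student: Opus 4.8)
\emph{Plan.} The strategy is to chain together a sequence of structure‑preserving modifications, each losing only a $(1+O(\eps))$ factor, until we reach a restricted class of solutions simple enough to be searched exactly by a polynomial‑time dynamic program; running that dynamic program and undoing the modifications then yields the claimed $(1+4\eps)$‑approximation. Every modification is powered by the bounded‑distance hypothesis of \cref{def:bounded-distance}, which guarantees $\log(D_{\max}/D_{\min}) = O_\eps(1)$ and hence lets us charge additive errors proportional to $D_{\max}$ times a polynomially‑bounded number of tours against $\OPT$, since $\OPT=\Omega(D_{\max})$ and, more sharply, the Hamaguchi--Katoh per‑edge lower bound is available.

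First I would invoke \cref{fact:decomposition} to replace $T$ by its tree of components, so that the recursion runs over the polynomially many components rather than over the $\Theta(n)$ levels of the original tree, and then apply the height reduction of \cref{sec:simplifyingtree} to bring the height of the tree of components down to $O_\eps(1)$ (\cref{fig:transform-global}). The point to check here is that an optimal solution can be transported to the flattened tree of components, spanning the same terminal sets, with only an $O(\eps)$ increase in cost; this follows from the structure of a near‑optimal solution established in \cref{sec:components} together with the bounded‑distance property. The heart of the structural work is next: fixing a near‑optimal solution, I would classify the subtours inside each component as \emph{large} or \emph{small} according to whether their demand is $\Omega_\eps(k)$ (\cref{def:large-small}), merge the small subtours and re‑attach them to existing tours, and collect the few orphan tokens created by tours that would thereby exceed capacity, re‑covering them cheaply by the iterated tour partitioning algorithm and its postprocessing (Steps~1--6 of \cref{sec:construction-opt-1}, \cref{fig:ITP}); the bounded‑distance property and the component parameters are exactly what keep both the added connections and the ITP connections down to an $O(\eps)$ fraction of $\OPT$. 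This is summarized by \cref{thm:opt1}: there is a $(1+O(\eps))$‑approximate solution in which each component is crossed by only $O_\eps(1)$ large subtours, and after the adaptive rounding of \cref{sec:analysisrounding} the demands of the subtours crossing any component boundary take only $O_\eps(1)$ distinct values.

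Finally I would run the dynamic program of \cref{sec:DP} over the bounded‑height tree of components. Its state at the edge joining a component to its parent records the profile of subtours crossing that edge — their number and, for each, its rounded accumulated demand — which by the previous paragraph is one of only polynomially many configurations; processing a component amounts to choosing how its internal subtours merge and connect and to combining the profiles of the children, all in polynomial time because the height and the number of distinct demand values are $O_\eps(1)$. The output is the minimum‑cost solution consistent with \cref{thm:opt1}, which costs at most $(1+O(\eps))\OPT$; rescaling $\eps$ gives $(1+4\eps)\OPT$ in polynomial time. I expect the main obstacle to be the structural rewriting inside components — eliminating the small subtours while keeping both the capacity violation and the cost of re‑covering the orphan tokens within an $O(\eps)$ fraction of $\OPT$ — since the delicate handling of small subtours, the choice of component parameters, and the ITP postprocessing all have to be balanced simultaneously; a secondary technical point is keeping the dynamic‑program state space polynomial when a component has high degree, which requires aggregating the small‑subtour demand rather than tracking individual small subtours.
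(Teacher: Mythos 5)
Your proposal follows essentially the same route as the paper: decompose $T$ into components (\cref{fact:decomposition}), establish the structure theorem for near-optimal solutions inside components via small/large subtours and ITP (\cref{thm:opt1}), reduce the height of the tree of components (\cref{sec:simplifyingtree}, \cref{fact:hat-T}), apply adaptive rounding to get $O_\eps(1)$ demand values (\cref{thm:opt3}), and run the dynamic program on $\hat T$ before mapping the solution back to $T$. The only cosmetic difference is that you invoke a final rescaling of $\eps$, whereas the paper tracks the constants explicitly through $\opt_1\le(1+\eps)\opt$, $\opt_2\le(1+3\eps)\opt$, $\opt_3\le(1+4\eps)\opt$ to land exactly on $(1+4\eps)$.
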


%The next theorem holds for the unit demand version, the splittable version, and the unsplittable version of the tree CVRP.
\begin{theorem}
\label{thm:reduction-bounded-dist}
For any $\rho\geq 1$, if there is a polynomial time $\rho$-approximation algorithm for the unit demand (resp.\ splittable, or unsplittable)
CVRP on trees with \emph{bounded distances}, then there is a polynomial time $(1+5\eps) \rho$-approximation algorithm for the unit demand (resp.\ splittable, or unsplittable) CVRP on trees with \emph{general distances}.
\end{theorem}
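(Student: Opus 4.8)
The plan is a geometric-scaling reduction. Let $R:=\left(\tfrac1\eps\right)^{\frac1\eps-1}$ be the threshold in \cref{def:bounded-distance} and $D_{\min}\le D_{\max}$ the smallest and largest distances from the depot to a terminal. First I would draw $U$ uniformly from $[0,1)$ and, for $j\in\mathbb Z$, let $B_j$ be the set of terminals at distance in $\big[D_{\min}R^{j+U},D_{\min}R^{j+1+U}\big)$; only $O\big(\log_R(D_{\max}/D_{\min})\big)\le n$ of the $B_j$ are nonempty. For each nonempty $B_j$, let $I_j$ be the subinstance on the \emph{same} tree $T$, the \emph{same} depot and the \emph{same} capacity $k$, with terminal set $B_j$ (keeping the original demands); since all its terminals lie at distance in $[D_{\min}R^{j+U},D_{\min}R^{j+1+U})$, its extreme distances $D_{\min}^{(j)},D_{\max}^{(j)}$ satisfy $D_{\max}^{(j)}<D_{\min}R^{j+1+U}=R\cdot D_{\min}R^{j+U}\le R\cdot D_{\min}^{(j)}$, so $I_j$ has bounded distances. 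I would then run the assumed $\rho$-approximation on each $I_j$ to get $\SOL_j$, and output $\bigcup_j\SOL_j$. Finally I would derandomize $U$: the partition $\{B_j\}$ (hence each $I_j$, hence $\sum_j\OPT(I_j)$) changes only at the $\le n$ values of $U\in[0,1)$ where a band boundary hits a terminal distance, so trying one representative $U$ from each of those $O(n)$ intervals and keeping the cheapest output loses nothing.

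Feasibility and running time are immediate: the $B_j$ partition the terminals, so all terminals are served; each tour of $\SOL_j$ serves demand at most $k$; and the $\rho$-approximation is called polynomially many times (in $n$, and also in $k$ in the splittable case). The construction is identical for the unit-demand, splittable and unsplittable versions. Hence it suffices to prove
\[
\E_U\Big[\textstyle\sum_j\OPT(I_j)\Big]\ \le\ (1+5\eps)\,\OPT(I),
\]
since then, for the cheapest tried offset, $\cost\big(\bigcup_j\SOL_j\big)=\sum_j\cost(\SOL_j)\le\rho\sum_j\OPT(I_j)\le(1+5\eps)\,\rho\,\OPT(I)$, using that the piecewise-constant function $U\mapsto\sum_j\OPT(I_j)$ attains its minimum, which is at most its expectation, at one of the tried offsets.

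For the displayed inequality I would fix an optimal solution of $I$ and a tour $\tau$ in it, and let $T_\tau$ be the minimal subtree of $T$ containing the depot and every terminal visited by $\tau$, so $\cost(\tau)=2\,w(T_\tau)$ with $w(\cdot)$ the total edge weight. For each $j$, let $\tau|_j$ be the tour double-covering the minimal subtree $\mathrm{St}_\tau(j)\subseteq T_\tau$ joining the depot to those terminals of $\tau$ in $B_j$; then $\cost(\tau|_j)=2\,w(\mathrm{St}_\tau(j))$, and since $\tau|_j$ serves only terminals of $B_j$ and at most the demand $\tau$ serves, $\{\tau|_j\}_\tau$ is feasible for $I_j$, so $\OPT(I_j)\le\sum_\tau\cost(\tau|_j)$. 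Summing over $j$ and $\tau$, $\sum_j\OPT(I_j)\le 2\sum_\tau\sum_{e\in T_\tau}w(e)N_\tau(e)$, where $N_\tau(e)=\#\{j:e\in\mathrm{St}_\tau(j)\}=\#\{j:B_j\text{ contains a terminal of }\tau\text{ below }e\}$. Let $b_e$ be the distance from the depot to the far endpoint of $e$ and $\rho_e\ge b_e$ the largest distance reached in $T_\tau$ below $e$; then all terminals of $\tau$ below $e$ lie at distance in $[b_e,\rho_e]$, so $N_\tau(e)$ is at most $1$ plus the number of band boundaries inside $(b_e,\rho_e)$. Those boundaries form a unit grid with uniformly random shift in $\log_R$-scale, so $\E_U[N_\tau(e)]\le 1+\log_R(\rho_e/b_e)$, giving $\E_U\big[\sum_j\OPT(I_j)\big]\le 2\sum_\tau\sum_{e\in T_\tau}w(e)+2\sum_\tau\sum_{e\in T_\tau}w(e)\log_R(\rho_e/b_e)=\OPT(I)+2\sum_\tau\sum_{e\in T_\tau}w(e)\log_R(\rho_e/b_e)$.

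It then remains to bound the error term by $\tfrac1{\ln R}\,\OPT(I)$, which with $\tfrac1{\ln R}=\tfrac1{(1/\eps-1)\ln(1/\eps)}\le 5\eps$ for $\eps$ below an absolute constant (assumed WLOG) finishes the proof. For each $\tau$ it suffices to show $\sum_{e\in T_\tau}w(e)\log_R(\rho_e/b_e)\le\tfrac1{\ln R}\,w(T_\tau)$. Writing $\log_R(\rho_e/b_e)=\tfrac1{\ln R}\int_{b_e}^{\rho_e}\tfrac{dt}{t}$ and exchanging sum and integral, the left side is $\tfrac1{\ln R}\int_0^\infty\tfrac{g(t)}{t}\,dt$ with $g(t)=\sum_{e\in T_\tau:\,b_e\le t\le\rho_e}w(e)$, the weight of those edges of $T_\tau$ that lie within distance $t$ of the depot and whose subtree extends beyond distance $t$. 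Such edges belong to the minimal subtree of $T_\tau$ joining the depot to the points at distance exactly $t$ below which $T_\tau$ continues; writing $p_\tau(t)$ for the number of edges of $T_\tau$ straddling distance $t$ (equivalently, the number of such points), that subtree's edges each lie on a depot-to-such-point path, so its weight is at most $t\cdot p_\tau(t)$, whence $g(t)\le t\cdot p_\tau(t)$; therefore $\int_0^\infty\tfrac{g(t)}{t}\,dt\le\int_0^\infty p_\tau(t)\,dt=\sum_{e\in T_\tau}w(e)=w(T_\tau)$, as claimed. The hard part — and the only place geometry and randomness enter — is exactly this charging step: cutting $\tau$ at arithmetically spaced radii, or reconnecting each shortcut piece straight to the depot instead of through $T_\tau$, would incur a $\Theta(\log n)$ overhead; geometric spacing by $R$ brings it down to $\Theta(1/\ln R)=\Theta(\eps/\log\tfrac1\eps)$, and the random offset is what turns the unavoidable ``extra band per edge'' from a worst-case factor $2$ into a $1+O(\eps)$ factor in expectation.
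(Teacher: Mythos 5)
Your proof is correct, but it takes a genuinely different route from the paper's. The paper uses Baker's technique at two scales: it cuts the distance range into thin geometric bands of ratio $1/\eps$, picks a random integer offset $i_0\in[0,1/\eps-1]$, and groups $1/\eps$ consecutive bands into blocks $Z_j$ of ratio $(1/\eps)^{1/\eps-1}$ separated by single ``boundary'' bands $Y_j$. The boundary bands are solved as separate subproblems and cost only $2\eps(1+\eps)\cdot\opt$ in expectation (each band is a boundary band with probability $\eps$, and \cref{lemma:samelayer} bounds the per-band total by $2(1+\eps)\opt$); the blocks cost $(1+2\eps)\cdot\opt$ deterministically, because the duplicated depot connection created when cutting a tour at a block boundary is charged to the edges of the boundary band itself, which by geometric decay carry all but an $\eps/(1-\eps)$ fraction of the weight above the cut. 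You instead use bands of the full allowed ratio $R=(1/\eps)^{1/\eps-1}$ with a continuous random shift, reconnect each band's terminals via the double cover of the induced Steiner tree, and compute the expected per-edge multiplicity $1+\log_R(\rho_e/b_e)$ exactly; the nonobvious step is your charging lemma $g(t)\le t\cdot p_\tau(t)$, which converts the accumulated logarithms into $w(T_\tau)/\ln R$ and is correctly justified (every edge counted in $g(t)$ lies on a depot-to-$x$ path of length exactly $t$ for one of the $p_\tau(t)$ points $x$ of $T_\tau$ at distance $t$). What each approach buys: the paper's argument is purely combinatorial, needs only $1/\eps$ offsets to derandomize, and confines randomness to the thin boundary bands; yours avoids the two-level band structure, yields the slightly sharper factor $1+O(\eps/\log(1/\eps))$, and its feasibility/derandomization claims (at most $n$ breakpoints of $U$, capacity preserved, all three demand models) all check out, at the price of an averaging/integral argument and $O(n/\eps)$ calls to the bounded-distance algorithm.
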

\cref{thm:reduction-bounded-dist} may be of independent interest for the splittable and the unsplittable versions of the tree CVRP.

\paragraph{Outline of the PTAS for unit demand instances with bounded distances (\cref{thm:PTAS-bounded-dist}).}
In~\cref{sec:components,sec:simplifyingtree,sec:analysisrounding}, we show that there exists a near-optimal solution with a simple structure, and in \cref{sec:DP}, we use a dynamic program to compute the best solution with that structure.

In \cref{sec:components}, we consider the \emph{components} of the tree $T$ (\cref{fact:decomposition}) and we show that there exists a near-optimal solution such that terminals within each component are visited by a constant $O_\eps(1)$ number of tours and that each of those tours visits $\Omega_\eps(k)$ terminals in that component (\cref{thm:opt1}).
The proof of \cref{thm:opt1} contains several novelties in our work.
We start by defining \emph{large} and \emph{small} subtours inside a component, depending on the number of terminals visited by the subtours.
To construct a near-optimal solution with that structure, first, we detach small subtours from their initial tours, combine small subtours in the same component, and reallocate the combined subtours to existing tours.
Then we remove subtours from tours exceeding capacity.
To connect the removed subtours to the root of the tree, we include the \emph{spines subtours} (\cref{def:spine}) of all \emph{internal components}, and we obtain a traveling salesman tour.
Next, we apply the \emph{iterated tour partitioning (ITP)} algorithm on that tour, see \cref{fig:ITP}.
Finally, in a postprocessing step, we eliminate the small subtours created due to the ITP algorithm.
The complete construction is in \cref{sec:construction-opt-1}; the feasibility of the construction is in \cref{sec:opt1-feasible}; and the analysis on the constructed solution is in \cref{sec:analysis-opt1}, which in particular uses the bounded distance property.

In \cref{sec:simplifyingtree}, we transform the tree $T$ into a tree $\hat T$ that has $O_\eps (1)$ levels of components (\cref{fig:transform-global}) and satisfies the following property.
\begin{fact}
\label{fact:hat-T}
The tree $\hat T$ defined in \cref{sec:simplifyingtree} can be computed in polynomial time.
The components in the tree $\hat T$ are the same as those in the tree $T$.
Any solution for the unit demand CVRP on the tree $\hat T$ can be transformed in polynomial time into a solution for the unit demand CVRP on the tree $T$ without increasing the cost.
\end{fact}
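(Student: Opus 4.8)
The plan is to extract from the construction in \cref{sec:simplifyingtree} a terminal-preserving map under which distances can only shrink, so that any feasible solution on $\hat T$ can be pushed forward to a feasible solution on $T$ of no larger cost. Unpacking that construction: the components of \cref{fact:decomposition} are sorted into $O_\eps(1)$ \emph{classes} according to the distance from the component root to the depot (here \cref{def:bounded-distance} is used, to keep the number of classes bounded), and within a class the height is collapsed to $1$ by reattaching each component above the last component of the previous class on its root-to-depot path, with the new edge given a length equal to the upper endpoint of that class's distance interval; crucially, the interior of no component is altered (see \cref{fig:transform-global}).

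The first two assertions are then bookkeeping. The class of a component is determined by one shortest-path computation and a comparison against $O_\eps(1)$ precomputed thresholds, and the reattachment is a polynomial amount of tree surgery, so $\hat T$ is computable in polynomial time; and since the surgery only changes lengths and endpoints of edges incident to component roots and never enters a component interior, the components of $\hat T$ coincide with those of $T$.

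The substance is the cost statement, and I would reduce it to a metric-domination lemma: for every pair $u,v$ of vertices present in $\hat T$ --- in particular for all terminals and the depot --- $\dist_T(u,v)\le\dist_{\hat T}(u,v)$. To prove it, traverse the $\hat T$-path from $u$ to $v$ and cut it into maximal pieces lying inside a single component, together with the reattachment edges it crosses. A piece inside a component is literally a path of $T$ of the same length, since interiors are untouched; and a reattachment edge from a class-$j$ component root $r$ to its parent $p$ has length equal to the class-$j$ upper threshold, which is at least $\dist_T(r,p)$ because $p$ still lies on the $T$-path from $r$ to the depot (cross-class ancestry is preserved) while $\dist_T(r,\mathrm{depot})$ is below that threshold. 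Concatenating these $T$-subpaths gives a walk of $T$ from $u$ to $v$ of length at most $\dist_{\hat T}(u,v)$. The main obstacle is exactly this verification: one must confirm that the rounded reattachment lengths still dominate the corresponding $T$-distances even though collapsing a class destroys ancestry relations \emph{within} the class; the saving grace is that this loss of ancestry can only enlarge minimal connecting subtrees in $\hat T$, never shrink them, which is the direction we want.

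Given the lemma, the transformation is routine. For each tour of the $\hat T$-solution, record the (at most $k$) terminals it covers, and replace it by the canonical tour of $T$ on exactly that set --- the depth-first traversal of the minimal subtree of $T$ connecting those terminals to the depot, of length twice that subtree's weight. This yields, in polynomial time, a collection of tours of $T$ covering the same terminals with the same per-tour demands, hence feasible. For the cost, observe that the original $\hat T$-tour on a terminal set $S$ has length at least that of the shortest closed walk of $\hat T$ through $S\cup\{\mathrm{depot}\}$; projecting that walk edge by edge via the metric-domination lemma produces a closed walk of $T$ through $S\cup\{\mathrm{depot}\}$ of no greater length, and shortcutting it to the canonical tour only decreases it further. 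Summing over tours shows the total cost does not increase, which finishes the plan.
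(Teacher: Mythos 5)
Your overall strategy is the right one and is essentially what the paper intends (the paper offers no separate proof, stating that \cref{fact:hat-T} ``follows from the construction''): component interiors are untouched, each new edge of $\hat T$ corresponds to a path of $T$ of no smaller weight, so a closed walk of $\hat T$ projects edge-by-edge to a closed walk of $T$ of no larger length, and shortcutting to the minimal tour in the tree $T$ spanning the same terminal set only helps. However, you misdescribe the construction of \cref{alg:hat-T} in two places. First, a component $c\in\mathcal{C}_i$ is reattached to the \emph{critical vertex} $z$ of its maximally connected set within the \emph{same} class $\mathcal{C}_i$ (\cref{def:critical}), not to a component of the previous class. Second, the new edge $(r_c,z)$ is given weight $\delta$ equal to the \emph{exact} $r_c$-to-$z$ distance in $T$, not the upper endpoint of the class's distance interval. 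With the actual construction your metric-domination lemma holds with \emph{equality} on each reattachment edge (since $z$ is an ancestor of $r_c$ in $T$ and the edge weight equals the length of the $r_c$-to-$z$ path), so the worry you raise about rounded lengths versus destroyed intra-class ancestry evaporates; your argument survives because it only uses that the new edge weight is at least $\dist_T(r_c,z)$. The misreading is harmless for \cref{fact:hat-T}, but be aware that it is not harmless for the companion statement (\cref{thm:opt2}/\cref{lem:transform-t}), where one needs the new edge weights not to \emph{exceed} the original distances by too much; there the exactness of $\delta$ and the width $\tilde D$ of each class are what the paper actually exploits.
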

\noindent Thanks to the structure of the near-optimal solution on $T$ (\cref{sec:components}) and to the bounded distance property, the optimal cost for $\hat T$ is increased by an $O(\eps)$ fraction compared with the optimal cost for $T$ (\cref{thm:opt2}).

\begin{figure}[t]
    \centering
    \includegraphics[scale=0.3]{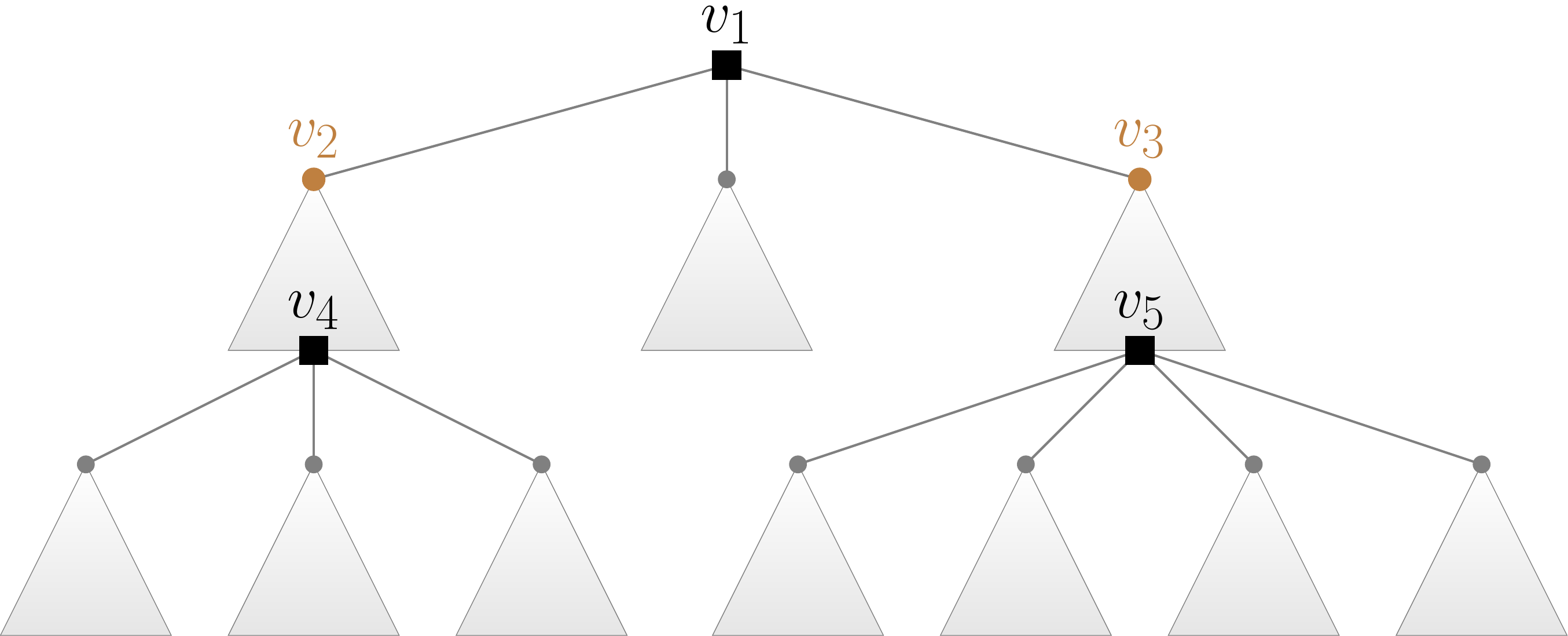}
    \caption{At each vertex of the tree, the dynamic program memorizes the capacities used by the subtours in the subtree and their total cost.
    Terminals within each component are visited by $O_\eps(1)$ tours and each of those tours visits $\Omega_\eps(k)$ terminals in that component.
    Here is an example of the flow of execution in the dynamic program.
    First, independent computation in each component (\cref{alg:local}). Next, computation in the subtrees rooted at vertices $v_4$ and $v_5$ (\cref{alg:subtree-configuration-critical}).
    Then computation for the subtrees rooted at vertices $v_2$ and $v_3$ (\cref{alg:subtree-configuration-root}).
    Finally, computation for the subtree rooted at vertex $v_1$ (\cref{alg:subtree-configuration-critical}).
    The output is the best solution in the subtree
    rooted at $v_1$. Vertices $v_1,v_4$, and $v_5$, whose degrees may be  arbitrarily large, are where adaptive rounding of the subtour demands  is needed to maintain polynomial time (Algorithm~\ref{alg:subtree-configuration-critical}). The cost due to the rounding is small thanks to the way components are defined and the bounded distance property, and does not accumulate excessively because the height is bounded (\cref{sec:simplifyingtree}).
    }
    \label{fig:DP}
\end{figure}

In \cref{sec:analysisrounding}, we apply the \emph{adaptive rounding} on the demands of the subtours in a near-optimal solution on $\hat T$.
Recall that in the design of the QPTAS by Jayaprakash and Salavatipour~\cite{jayaprakash2021approximation}, the main technique is to show the existence of a near-optimal solution in which the demand of a subtour can be rounded to the nearest value from a set of only \emph{poly-logarithmic} threshold values.
In our work, we reduce the number of threshold values to \emph{a constant $O_\eps(1)$} (\cref{thm:opt3}).
To analyze the adaptive rounding,
observe that an extra cost occurs whenever we detach a subtour and complete it into a separate tour by connecting it to the depot.
We bound the extra cost thanks to the structure of a near-optimal solution inside components (\cref{sec:components}), the reduced height of the components (\cref{sec:simplifyingtree}), as well as the bounded distance property.

In \cref{sec:DP}, we design a \emph{polynomial-time dynamic program}  that computes the best solution on the tree $\hat T$ that satisfies the constraints on the solution structure imposed by previous sections.
The algorithm combines a dynamic program inside components (\cref{sec:local}) and two dynamic programs in subtrees (\cref{sec:subtree}), see \cref{fig:DP}.
Thus we obtain the following \cref{thm:DP}.
\begin{theorem}
\label{thm:DP}
Consider the unit demand CVRP on the tree $\hat T$.
There is a dynamic program that computes in polynomial time a solution with cost at most $(1+4\eps)\cdot \opt$, where $\opt$ denotes the optimal cost for the unit demand CVRP on the tree $T$.
\end{theorem}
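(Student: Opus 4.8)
The plan is to chain the structural theorems of the previous sections to pin down a constant-size class of \emph{structured} solutions on $\hat T$ that still contains a $(1+4\eps)$-approximate solution, and then to design a bottom-up dynamic program that computes the cheapest solution in that class.

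First I would assemble the target. By \cref{thm:opt1} there is a near-optimal solution on $T$ in which, inside every component, the terminals are covered by $O_\eps(1)$ subtours, each covering $\Omega_\eps(k)$ terminals of that component. Passing from $T$ to $\hat T$ via \cref{fact:hat-T} and \cref{thm:opt2} changes the optimum by only a $(1+O(\eps))$ factor and keeps the components — hence this component structure — intact, and the adaptive rounding of \cref{sec:analysisrounding} (\cref{thm:opt3}) then rounds every subtour demand to one of $O_\eps(1)$ threshold values at a further $(1+O(\eps))$ multiplicative cost. Tracking the constants exactly as in those theorems, one gets a solution on $\hat T$ lying in an explicitly describable class $\mathcal{S}$ of structured solutions (bounded number of subtours per component, each with $\Omega_\eps(k)$ demand and a rounded demand value, $O_\eps(1)$ levels of components) of cost at most $(1+4\eps)\,\opt$. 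So it suffices to compute the minimum-cost solution in $\mathcal{S}$ on $\hat T$ in polynomial time; by \cref{fact:hat-T} this also yields a solution on $T$ of cost at most $(1+4\eps)\,\opt$.

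Second, the DP. At each vertex $v$ of $\hat T$ the table is indexed by a \emph{signature}: the multiset of residual demands of the subtours crossing the edge above $v$, expressed using only the $O_\eps(1)$ rounded threshold values (so a signature is a vector in $\{0,1,\dots\}^{O_\eps(1)}$ of bounded total multiplicity), and it stores the minimum cost of a partial solution in the subtree below $v$ realizing that signature. The execution follows \cref{fig:DP}: (i) a local DP inside each component (\cref{alg:local}) enumerates the $O_\eps(1)$-size multisets of subtour demands realizable inside that component together with their minimum costs — polynomial since both the number of subtours and the number of distinct rounded demands are $O_\eps(1)$; (ii) a subtree DP of one type at the ordinary subtree roots (\cref{alg:subtree-configuration-root}) that combines a component's local table with the tables of the subtrees hanging below it; (iii) a subtree DP of another type at the high-degree \emph{critical} vertices (\cref{alg:subtree-configuration-critical}) that merges the children's tables while performing adaptive rounding on the aggregated signature. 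Composing these bottom up produces a table at the root of $\hat T$; the entry in which every subtour has been closed into a tour at the depot gives the optimal cost over $\mathcal{S}$. Correctness is an induction over $\hat T$: every solution in $\mathcal{S}$ induces a consistent assignment of signatures along the tree with matching cost, and conversely every consistent assignment assembles into such a solution.

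The main obstacle I expect is the polynomial running time at the critical vertices. A vertex of degree $\Theta(n)$ could force $\mathrm{poly}(n)$ subtours through its parent edge, and remembering their exact demands would blow the state space up exponentially; the point of the adaptive rounding in \cref{sec:analysisrounding} is precisely to collapse the signature to $O_\eps(1)$ distinct values (with $\mathrm{poly}(n)$ total multiplicity), keeping the number of reachable signatures polynomial. One must check that the $(1+O(\eps))$ cost incurred by this rounding does not compound unacceptably — this is where the $O_\eps(1)$ bound on the number of levels of components (\cref{sec:simplifyingtree}) is essential, so the loss stays an $O(\eps)$ fraction of $\opt$ and fits the $(1+4\eps)$ budget. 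A secondary, purely bookkeeping, point is to verify that ``partial solution with a given signature'' in the DP matches exactly the feasible structured solutions: no subtour exceeds capacity $k$, demand is conserved across every edge, and every component is covered by one of its enumerated multisets.
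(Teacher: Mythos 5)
Your proposal is correct and follows essentially the same route as the paper: \cref{thm:opt3} supplies the structured $(1+4\eps)$-approximate target on $\hat T$, and the DP is exactly the paper's three-stage bottom-up scheme (local configurations inside components, then subtree configurations at component roots and at critical vertices, where the enumeration of the $O_\eps(1)$ rounded demand values keeps the state space polynomial). The only minor imprecision is that inside a component polynomiality comes solely from the $O_\eps(1)$ bound on the number of subtours (their demands are kept exact there, giving $k^{O_\eps(1)}$ lists), with rounding applied only at critical vertices; this does not affect the argument.
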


\cref{thm:PTAS-bounded-dist} follows directly from \cref{thm:DP} and \cref{fact:hat-T}.

\paragraph{Reduction from general distances to bounded distances (\cref{thm:reduction-bounded-dist}).}
In \cref{sec:preprocessing}, we prove \cref{thm:reduction-bounded-dist}.
We use Baker's technique to split tours into pieces such that each piece covers terminals that are within a certain range of distances from the depot. This requires duplicating some parts of the tours so that each piece of the tour is connected to the depot.

\paragraph{Extension to the splittable setting (\cref{cor:main}).}
In \cref{sec:splittable}, we extend the result in \cref{thm:main} to the splittable setting, thus obtaining \cref{cor:main}.

\paragraph{Open questions.}
Previously, Jayaprakash and Salavatipour~\cite{jayaprakash2021approximation} extended their QPTAS on trees to QPTASs on graphs of bounded treewidth and beyond, including Euclidean spaces.
While some of our techniques extend to those settings, others do not seem to  carry over without significant additional ideas, so it is an interesting open question whether the techniques in our paper could be used in the design of PTAS algorithms for other metrics, such as graphs of bounded treewidth, planar graphs, and Euclidean spaces.

\section{Preliminaries}
\label{sec:preliminary}

Let $T$ be a rooted tree $(V,E)$ with root $r\in V$ and edge weights $w(u,v)\geq 0$ for all $(u,v)\in E$.
The root $r$ represents the \emph{depot} of the tours.
Let $n$ denote the number of vertices in $V$.
Let $V'\subseteq V$ denote the set of \emph{terminals}, such that a \emph{token} is placed on each terminal $v\in V'$.
Let $k\in[1,n]$ be an integer \emph{capacity} of the tours.
The \emph{cost} of a tour $t$, denoted by $\cost(t)$, is the overall weight of the edges on that tour.
We say that a tour \emph{visits} a terminal $v\in V'$ if the tour picks up the token at $v$.\footnote{Note that a tour might go through a terminal $v$ without picking up the token at $v$.}
\begin{definition}[unit demand tree CVRP]
An instance of the \emph{unit demand} version of the \emph{capacitated vehicle routing problem (CVRP)} on \emph{trees} consists of
\begin{itemize}
    \item an edge weighted \emph{tree} $T=(V,E)$ with $n=|V|$ and with \emph{root} $r\in V$ representing the \emph{depot},
    \item a set $V'\subseteq V$ of \emph{terminals},
    \item a positive integer \emph{tour capacity} $k$ such that $k\leq n$.
\end{itemize}
A feasible solution is a set of tours such that
\begin{itemize}
    \item each tour starts and ends at $r$,
    \item each tour visits at most $k$ terminals,
    \item each terminal is visited by one tour.
\end{itemize}
The goal is to find a feasible solution such that the total cost of the tours is minimum.

\end{definition}
Let $\OPT$ (resp.\ $\OPT_1$, $\OPT_2$, or $\OPT_3$) denote an optimal (resp.\ near-optimal) solution to the unit demand CVRP, and let $\opt$ (resp.\ $\opt_1$, $\opt_2$, or $\opt_3$) denote the value of that solution.

Without loss of generality, we assume that every vertex in the input tree $T$ has exactly two children, and that the terminals are the same as the leaf vertices of the tree.
Indeed, general instances can be reduced to instances with these properties by inserting edges of weight 0, removing leaf vertices that are not terminals, and slicing out internal vertices of degree two, see, e.g.~\cite{becker2019framework} for details.

For any vertex $v\in V$, a \emph{subtour at the vertex $v$} is a path that starts and ends at $v$ and only visits vertices  in the subtree rooted at $v$.
The \emph{demand} of a subtour is the number of terminals visited by that subtour.
For each vertex $v\in V$, let $\dist(v)$ denote the distance between $v$ and the depot in the tree $T$.
For technical reasons, we allow \emph{dummy} terminals to be included in the solution at internal vertices of the tree.

%\paragraph{Constants in our algorithm.}
Throughout the paper, we define several constants depending on~$\eps$: $\Gamma$ (\cref{fact:decomposition}), $\alpha$ (\cref{thm:opt1}), $H_\eps$ (\cref{lem:tild-D-H-eps}), and $\beta$ (\cref{thm:opt3}).
They satisfy the relation that $H_\eps\gg\Gamma\gg 1\gg \eps\gg \alpha\gg\beta$.

%\paragraph{Iterated Tour Partitioning (ITP) algorithm (\cite{haimovich1985bounds,altinkemer1990heuristics}).}
%In the ITP algorithm, we start by computing a traveling salesman tour (ignoring the capacity constraint) using some other algorithm.
%Then we partition the tour into segments with exactly $k$ terminals each, except possibly the first and the last segments containing less than $k$ terminals.
%Finally, for each segment, we connect its endpoints to the depot so as to make a tour.

\paragraph{Decomposition of the Tree into Components.}
The component decomposition (\cref{fact:decomposition}) is inspired by the cluster decomposition by Becker and Paul~\cite{becker2019framework}.
The proof of \cref{fact:decomposition} is similar to arguments in~\cite{becker2019framework}; we give its proof in \cref{app:decomposition} for completeness.

\begin{figure}
    \centering
    \includegraphics[scale=0.5]{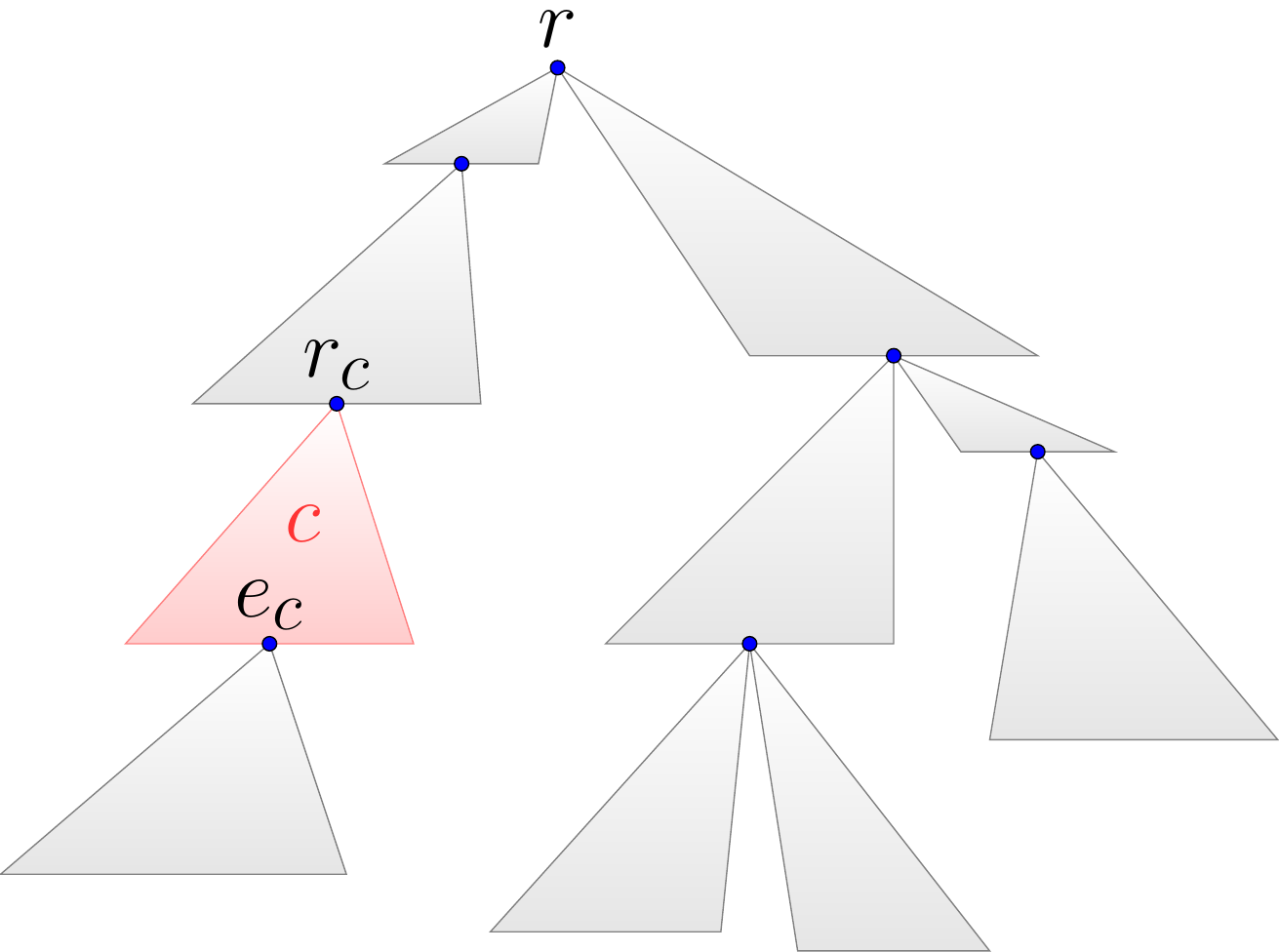}\hspace{2cm}
    \caption{Decomposition into components. In this example, the tree is decomposed into four \emph{leaf} components and six \emph{internal} components.
    An internal component $c$ has a \emph{root} vertex $r_c$ and an \emph{exit} vertex $e_c$.
}
    \label{fig:component}
\end{figure}

\begin{lemma}
\label{fact:decomposition}
Let $\biggamma=\frac{12}{\eps}$.
There is a polynomial time algorithm to compute a partition of the edges of the tree $T$ into a set $\mathcal{C}$ of \emph{components} (see \cref{fig:component}), such that all of the following properties are satisfied:
\begin{itemize}
    \item Every component $c\in \mathcal{C}$ is a connected subgraph of $T$;
    the \emph{root} vertex of the component $c$, denoted by $r_c$, is the vertex in $c$ that is closest to the depot.
    \item We say that a component $c\in \mathcal{C}$ is a \emph{leaf} component if all descendants of $r_c$ in tree $T$ are in $c$, and is
    an \emph{internal} component otherwise. A leaf component $c$ interacts with other components at vertex $r_c$ only.
    An internal component $c$ interacts with other components at two vertices only: at vertex $r_c$, and at another vertex, called the \emph{exit} vertex of the component $c$, and denoted by $e_c$.
    \item Every component $c\in \mathcal{C}$ contains at most $2\biggamma\cdot k$ terminals.
    We say that a component is \emph{big} if it contains at least $\biggamma\cdot k$ terminals.
    Each leaf component is big.
    \item If the number of components in $\mathcal{C}$ is strictly greater than one, then we have: (1) there exists a map from all components to big components, such that the image of a component is among its descendants (including itself), and each big component has at most three pre-images; and (2) the number of components in $\mathcal{C}$ is at most $3/\Gamma$ times the total demand in the tree $T$.
\end{itemize}
\end{lemma}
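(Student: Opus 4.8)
The plan is a greedy bottom-up construction in two phases, after which the last bullet follows from a short counting argument on the resulting tree of components. The one quantitative invariant driving everything is that any connected piece we are still \emph{growing} contains fewer than $\Gamma k$ terminals; since (recall) $T$ is binary with terminals at the leaves, merging the pieces hanging below the two children of a vertex produces a piece with fewer than $2\Gamma k$ terminals, which is exactly the size bound in the third bullet. We may assume $T$ has at least $\Gamma k$ terminals; the complementary case is trivial after a minor modification as in~\cite{becker2019framework}. \emph{Phase 1 --- leaf components:} visit the vertices of $T$ in non-increasing order of depth, and whenever the subtree below the current vertex $v$, restricted to the part not yet cut off, contains at least $\Gamma k$ terminals, cut that subtree off as a component with $r_c = v$ and replace it in the remaining tree by a terminal-free leaf. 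Since $v$ is visited after both its children, each child contributes fewer than $\Gamma k$ uncut terminals at the moment of the cut, so every component created here has between $\Gamma k$ and $2\Gamma k$ terminals --- in particular it is big --- and is a full subtree of $T$; these are the leaf components. When Phase 1 halts, every \emph{stub-free} subtree of the remaining tree $T'$, meaning one containing no root of an already-cut component, holds fewer than $\Gamma k$ terminals.

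\emph{Phase 2 --- internal components:} it remains to partition the edges of $T'$. Grow each internal component as a ``caterpillar'': a backbone path from its root $r_c$ down to an exit vertex $e_c$, together with every stub-free side-subtree hanging off the interior of that path. The backbone follows the direction of the already cut-off (big) subtrees and stops as soon as no further backbone is possible; crucially, \emph{every} stub-free side branch --- even one hanging at $e_c$ --- is absorbed into the caterpillar rather than spun off on its own, since by itself it would form a too-small ``leaf'' component. With this rule an internal component meets other components only at $r_c$ (its parent) and at $e_c$ (its one or two children), and by the Phase-1 invariant the entire stub-free region below $r_c$ holds fewer than $\Gamma k < 2\Gamma k$ terminals, so the size bound holds. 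The distinct exit vertices, over all internal components, are essentially the branching vertices of the Steiner tree of the stubs inside $T'$, so the number of internal components is at most the number of stubs, that is, at most the number of leaf components. Every step is clearly polynomial.

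For the last bullet, contract each component to obtain the \emph{component tree}. Its leaves are precisely the (big) leaf components, every internal node has at least two children except possibly the node containing $r$, and each internal node with a single child has that child equal to a big leaf component. A short argument then bounds the number of nodes by $O(\#\text{leaf components})$ and, via a Hall-type/flow argument on the incidence between nodes and the leaves of their subtrees (each node is matched to a big leaf component within its own subtree, with bounded congestion), produces a map from all components to big components such that the image of each component is among its descendants and each big component has at most three pre-images. Property (2) is then immediate: big components carry pairwise disjoint sets of at least $\Gamma k \ge \Gamma$ terminals, so there are at most $(\text{total demand})/\Gamma$ of them, hence at most $3\cdot(\text{total demand})/\Gamma$ components in all.

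The delicate point is Phase 2: one must pick the backbones so that each internal component has a \emph{single} exit vertex while keeping the number of internal components proportional to the number of (big) leaf components --- which is exactly why every stub-free branch is absorbed into a caterpillar rather than made into its own, necessarily small and therefore illegal, leaf component. Once the construction is pinned down, matching the constants (the $2\Gamma$, the threshold $\Gamma$, the factor $3$) is bookkeeping, and the Hall-type argument for the map, though routine, is the other place that needs a little care.
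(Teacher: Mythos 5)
Your two-phase plan is close in spirit to the paper's construction (lowest big subtrees first, then carve up what remains along the backbone), but there is a genuine gap in how the internal components are formed, and it surfaces under either way of disambiguating your Phase~1. If Phase~1 only ever cuts off \emph{stub-free} subtrees --- which is what your assertions that every Phase-1 component ``is a full subtree of $T$'' and is a leaf component require --- then your Phase-2 size bound fails: a single caterpillar absorbs \emph{every} stub-free side subtree along its backbone, and a backbone path of length $m$ each of whose vertices carries a side subtree with $\Gamma k-1$ terminals (none of which Phase~1 touches) yields a component with $m(\Gamma k-1)$ terminals, far above $2\Gamma k$. The inference ``every stub-free subtree of $T'$ has fewer than $\Gamma k$ terminals, hence the entire stub-free region below $r_c$ has fewer than $\Gamma k$ terminals'' conflates each maximal stub-free piece with their union. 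If instead Phase~1 is read literally (cut at $v$ as soon as the \emph{uncut} terminal count below $v$ reaches $\Gamma k$, even when stubs lie below $v$), the size bound is saved but the structural bullet breaks: a piece cut at $v$ with two previously created stubs in distinct branches below it interacts with other components at three or more vertices ($r_c$ plus two stubs), so it is neither a leaf component nor an internal component with a single exit vertex, and your later counting (which treats every Phase-1 component as a big leaf of the component tree) no longer applies.

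The missing idea is that the region between two consecutive \emph{key} vertices (roots of leaf components, branch points of the backbone, and the global root) must itself be sliced bottom-up into a \emph{chain} of internal components: starting from the lower key vertex $x$, repeatedly take the deepest vertex $v$ on the segment such that the part of $v$'s subtree lying strictly above the current exit $x$ (together with $x$) has at least $\Gamma k$ terminals, declare it an internal component with root $v$ and exit $x$, and recurse with $x\gets v$. Each such piece then has between $\Gamma k$ and $2\Gamma k$ terminals and exactly one exit vertex, and at most one small piece survives at the top of each segment. That last fact is also what makes the ``at most three pre-images'' bound concrete: each big component maps to itself, and each small component, being rooted at a key vertex, maps injectively to the rightmost or leftmost leaf component among its descendants according to whether it contains the left or right child of its root. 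Your appeal to a generic Hall-type argument with ``bounded congestion'' does not by itself pin down the constant $3$, and without the single-small-component-per-segment fact it is not clear it gives any absolute constant.
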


\begin{remark}
\label{remark:components}
The root and the exit vertices of components are a rough analog of \emph{portals} used in approximation schemes for other problems: they are places where the dynamic program will gather and synthesize information about partial solutions before passing it on.

Compared with \cite{becker2019framework},  the decomposition in \cref{fact:decomposition} uses different parameters:
the number of terminals inside a \emph{leaf component} is $\Theta(k/\eps)$, whereas in \cite{becker2019framework} the number of terminals inside a \emph{leaf cluster} is $\Theta(\eps\cdot k)$; the threshold demand to define \emph{big components} is $\Theta(k/\eps)$, whereas in \cite{becker2019framework} the threshold demand to define \emph{small clusters} is $\Theta(\eps^2\cdot k)$.
\end{remark}

\begin{definition}[subtours in components and subtour types]
Let $c$ be any component.
A \emph{subtour in the component $c$} is a path that starts and ends at the root $r_c$ of the component, and such that every vertex on the path is in $c$.
The \emph{type} of a subtour is ``\emph{passing}'' if $c$ is an internal component and the exit vertex $e_c$ belongs to that subtour; and is ``\emph{ending}'' otherwise.
\end{definition}
\noindent A passing subtour in $c$ is to be combined with a subtour at $e_c$.
In a leaf component, there is no passing subtour.

\begin{definition}[spine subtour]
\label{def:spine}
For an internal component $c$, we define the \emph{spine subtour} in the component $c$, denoted by $\spine_c$, to be the connection (in both directions) between  the root vertex $r_c$ and  the exit vertex $e_c$ of the component $c$, without visiting any terminal.
\end{definition}
\noindent From the definition, a spine subtour in a component is also a passing subtour in that component.

Without loss of generality, we assume that any subtour in a component $c$ either visits at least one terminal in $c$ or is a spine subtour; that any tour traverses each edge of the tree at most twice (once in each direction); and that any tour contains at most one subtour in any component.\footnote{If a tour contains several subtours in a component $c$, we may combine those subtours into a single subtour in $c$ without increasing the cost of the tour.}

\section{Solutions Inside Components}
\label{sec:components}

In this section, we prove \cref{thm:opt1}, which is a main novelty in this paper.

\begin{theorem}
\label{thm:opt1}
Let $\alpha= \eps^{(\frac{1}{\eps}+1)}$.
Consider the unit demand CVRP on the tree $T$ with bounded distances.
There exist dummy terminals and a solution $\OPT_1$ visiting all of the real and the dummy terminals, such that all of the following holds:
\begin{enumerate}
\item For each component $c$, there are at most $\frac{2\Gamma}{\alpha}+1$ tours visiting terminals in $c$;
\item For each component $c$ and each tour $t$ visiting terminals in $c$, the number of the terminals in $c$ visited by $t$ is at least $\alpha\cdot k$;
\item We have $\opt_1\leq (1+\eps)\cdot\opt$.
\end{enumerate}
\end{theorem}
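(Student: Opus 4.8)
The plan is to start from an optimal solution $\OPT$ and edit it in a bounded number of steps, each of which (after a final clean-up) preserves capacity feasibility, while the total editing cost stays at most $\eps\cdot\opt$. Fix a component $c$; by the normalization in \cref{sec:preliminary} every tour meets $c$ in a single subtour. Call a subtour in $c$ \emph{large} if it visits at least $\alpha k$ terminals of $c$, and \emph{small} otherwise. Since $c$ has at most $2\Gamma k$ terminals, there are at most $2\Gamma/\alpha$ large subtours in $c$, so properties (1)--(2) reduce to the single task of removing all \emph{small} subtours inside components; the extra ``$+1$'' in (1) will come from at most one padded leftover subtour per component. Throughout I allow dummy terminals placed at internal vertices (visiting one is free), used only to pad short subtours up to demand exactly $\alpha k$.

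The construction I have in mind has six steps. (S1) Detach every small subtour from its tour. (S2) Inside each component $c$, greedily concatenate the detached small subtours at $r_c$ into \emph{combined subtours} of demand in $[\alpha k, 2\alpha k)$, with at most one leftover combined subtour of demand below $\alpha k$ per component, which I pad with dummy terminals up to demand $\alpha k$; this costs nothing. (S3) Reassign the combined subtours to tours that pass through the relevant roots $r_c$, greedily filling free capacity so that the outcome is capacity-feasible; a bin-packing estimate (each item has size below $2\alpha k$; the free capacity of the original tours after (S1) is at least the total small demand) shows the combined subtours that cannot be placed have total demand $O(\alpha)$ times the total demand of the instance --- these unplaced pieces carry the \emph{orphan tokens}. (S4) Connect every orphan piece to the depot by adding the spine subtours (\cref{def:spine}) of the internal components on the relevant root paths; doubling the resulting connected subgraph and shortcutting gives a traveling-salesman tour $\tau$ spanning all orphan tokens. (S5) Run iterated tour partitioning on $\tau$: cut it into contiguous pieces of at most $k$ tokens (there are $O(1+\#\mathrm{orphans}/k)$ of them) and join each to the depot. (S6) Post-process: the ITP pieces may again meet some components in small subtours, so re-apply the merging of (S1)--(S3) to them (cheap now, since the orphans are few) and pad spine/leftover subtours with dummies, restoring that every subtour inside a component has demand at least $\alpha k$. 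Feasibility is then immediate: all real and dummy terminals are covered; (S3) was capacity-feasible and the ITP pieces have at most $k$ tokens; and after (S6) every within-component subtour has demand at least $\alpha k$, which yields (2), and with the counting above, (1).

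The cost analysis is the heart of the proof. Steps (S1)--(S3) only reuse or shorten existing edges, hence do not increase cost, and (S6) adds at most an $O(\eps)$ term by the same accounting. The two genuine increases are (S4) and (S5). For (S4), the total length added is $O\big(\sum_{c}\dist(r_c)\big)$ counted with multiplicity, and the key estimate is that this is an $O(\eps)$ fraction of $\opt$: by the Hamaguchi--Katoh lower bound $\opt\ge\sum_e 2\,w(e)\lceil d_e/k\rceil$ (here $d_e$ is the demand below edge $e$), and since every edge with a component root below it has $d_e\ge\Gamma k$ times the number of big components below it --- which by \cref{fact:decomposition} dominates the number $m_e$ of component roots below $e$ --- the factor $\lceil d_e/k\rceil=\Omega(\Gamma)\cdot m_e$ pays for the multiplicity, so $\sum_c\dist(r_c)=O(1/\Gamma)\cdot\opt=O(\eps)\cdot\opt$. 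For (S5), each ITP piece contributes a depot connection of length at most $2D_{\max}$, there are $O(1+\#\mathrm{orphans}/k)$ pieces, $\#\mathrm{orphans}=O(\alpha)\cdot(\text{total demand})$, and $\opt\ge 2D_{\min}\cdot(\text{total demand})/k$, so the ITP cost is $O(\alpha\cdot D_{\max}/D_{\min})\cdot\opt$; by \cref{def:bounded-distance} and the choice $\alpha=\eps^{(1/\eps+1)}$ we get $\alpha\cdot D_{\max}/D_{\min}<\eps^2$, so this is $O(\eps^2)\cdot\opt$. Summing everything gives $\opt_1\le(1+\eps)\opt$.

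I expect the main obstacle to be the design of steps (S3)--(S4): one must reassign the combined subtours and choose which pieces become orphans so that the orphans are \emph{simultaneously} few in total demand (needed for (S5)) and thinly spread across components (so that the connections in (S4) can be charged against $\sum_c\dist(r_c)$, rather than against a separate $D_{\max}$ per orphan piece, which would only give a huge-constant-times-$\opt$ bound). Achieving both at once, and then showing that the ITP cut in (S5)--(S6) can be cleaned up to re-establish the ``every subtour has demand at least $\alpha k$'' property without re-introducing the original difficulty, is where the care --- and, per the authors, the novelty --- lies.
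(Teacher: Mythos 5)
Your outline reproduces the paper's construction (detach small subtours, combine, reassign, turn the overflow into orphans, connect them via spines plus ITP, postprocess) and both key cost estimates --- $\sum_c\dist(r_c)=O(\eps)\cdot\opt$ via the edge lower bound and the big-component map of \cref{fact:decomposition}, and the ITP connection cost $O(\alpha\cdot D_{\max}/D_{\min})\cdot\opt$ --- match \cref{lem:sum-dist-rc} and \cref{lemma:analyzeITP}. But two steps that you flag as ``where the care lies'' are genuine gaps, and the fixes are not the ones you sketch.

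First, (S3): a greedy/bin-packing argument does \emph{not} show that the unplaced combined subtours have total demand $O(\alpha)$ times the total demand. Even with first-fit, all you get is that each receiving tour has residual capacity below $2\alpha k$ when an item is rejected, so the unplaced demand is only bounded by $2\alpha k$ times the \emph{number of tours} in $\OPT$, which can be as large as $2\alpha k\cdot n'$ --- useless. Moreover the assignment is restricted (an element in component $c$ can only join a tour already reaching $c$), so a greedy may get stuck even when a good assignment exists. The paper's route is different and essential: invoke Lemma~1 of Becker--Paul to assign \emph{every} element to a tour that originally contained some of its terminals, with the guarantee that each tour's demand exceeds its original demand by at most $2\alpha k$; only \emph{then} trim the overfull tours, removing at most $4\alpha k$ from each. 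Since a tour can only be overfull if its demand exceeds $k$, there are at most $n'/k$ such tours, giving the $4\alpha n'$ bound on orphan tokens that (S5) needs. Relatedly, your (S5) bound silently drops the ``$+1$'' segment; when the orphans number fewer than $k$ the single leftover tour cannot be charged to $\opt$ (its depot connection costs up to $2D_{\max}$), and the paper handles this by not running ITP at all in that case --- the lone TSP tour is already attached to the depot through $Q$.

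Second, (S6): re-applying (S1)--(S3) to the small subtours left by the ITP cut risks exactly the regress you worry about (reassignment can overflow tours again, creating new orphans). The paper avoids any recursion: it observes that each component contains at most \emph{four} non-spine small subtours in $A_4$ (at most two leftovers from Step~2 plus at most two boundary ITP segments), of total demand $4\alpha k<k$, and gathers them into one \emph{fresh} tour $t_c^*$ per component, padded with dummies; its cost is charged to $\sum_c 2\dist(r_c)+\cost(Q)=O(\eps)\cdot\opt$, and it is also this single extra tour that accounts for the ``$+1$'' in property~(1). Finally, when detaching a small \emph{passing} subtour in (S1) you must retain its spine (\cref{def:spine}) inside the original tour, or the remaining tour becomes disconnected; your (S1) as written drops this.
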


\subsection{Construction of $\OPT_1$}
\label{sec:construction-opt-1}
\begin{definition}[large and small subtours]
\label{def:large-small}
We say that a subtour is \emph{large} if its demand is at least $\alpha\cdot k$, and \emph{small} otherwise.
\end{definition}

The construction of $\OPT_1$, starting from an optimal solution $\OPT$, is in several steps.

\paragraph{Step 1: Detaching small subtours.} Prune each tour of $\OPT$ so that it only visits the terminals that do \emph{not} belong to a small subtour in any component, and is minimal.
In other words, if a tour $t$ in $\OPT$ contains a small ending subtour $t_e$ in a component $c$, then we remove $t_e$ from $t$; and if a tour $t$ in $\OPT$ contains a non-spine small passing subtour $t_p$ in a component $c$, then we remove $t_p$ from $t$, except for the spine subtour of $c$.

Let $A$ denote the set of the resulting tours.
Note that each tour in $A$ is connected.
The removed pieces of a non-spine small passing subtour $t_p$ may be disconnected from one another.

The parts of $\OPT$ that have been pruned consist of the set $\mathcal{E}$, each element being a small ending subtours in a component, and of the set $\mathcal{P}$, each element being a group of pieces in a component obtained from a non-spine small passing subtour by removing the corresponding spine subtour.
The \emph{demand} of a group of pieces in $\mathcal{P}$ is the number of terminals in all of the pieces in that group.

\paragraph{Step 2: Combining small subtours within components.}
    \begin{itemize}
    \item
    For each component $c$, repeatedly concatenate subtours in $c$ from the set $\mathcal{E}$ so that in the end, all resulting subtours in $c$ from the set $\mathcal{E}$ have demands between $\alpha\cdot k$ and $2\alpha\cdot k$ except for at most one subtour.
    \item
    For each component $c$, repeatedly merge groups in $c$ from the set $\mathcal{P}$ so that in the end, all resulting groups in $c$ from the set $\mathcal{P}$ have demands between $\alpha\cdot k$ and $2\alpha\cdot k$ except for at most one group.
    \end{itemize}

    Let $\mathcal{E}'$ and $\mathcal{P}'$ denote the corresponding sets after the modifications for all components $c$.
    Let $B=\mathcal{E}'\cup \mathcal{P}'$.
    An \emph{element} of $B$ is either a \emph{subtour} or a \emph{group of pieces} in a component $c$.
    For each component $c$, all elements of $B$ in the component $c$ have demands between $\alpha\cdot k$ and $2\alpha\cdot k$ except for at most two elements with smaller demands.

\paragraph{Step 3: Reassigning small subtours.} Construct a bipartite graph with vertex sets $A$ and $B$ and with edge set $E$.
Let $a$ be any tour in $A$, and let $a_0$ denote the corresponding tour in $\OPT$.
Let $b$ be any element in $B$.
The set $E$ contains an edge between $a$ and $b$ if and only if the element $b$ contains terminals from $a_0$; the weight of the edge $(a,b)$ is the number of terminals in both $b$ and $a_0$.
By Lemma 1 from \cite{becker2019framework}, there exists an assignment $f:B\rightarrow A$  such that each element $b\in B$ is assigned to a tour $a\in A$ with $(a,b)\in E$ and that, for each tour $a\in A$, the demand of $a$ plus the overall demand of the elements $b\in B$ that are assigned to $a$ is at most the demand of the corresponding tour $a_0$ plus the maximum demand of any element in $B$.

Let $A_1$ denote the set of \emph{pseudo-tours} induced by the assignment $f$.
Each pseudo-tour in $A_1$ is the union of a tour $a\in A$ and the elements $b\in B$ that are assigned to $a$.

\paragraph{Step 4: Correcting tour capacities.}
For each pseudo-tour $a_1$ in $A_1$, if the demand of $a_1$ exceeds $k$, we repeatedly remove an element $b\in B$ from $a_1$, until the demand of $a_1$ is at most $k$.

Let $A_2$ denote the resulting set of pseudo-tours.
Every pseudo-tour in $A_2$ is a connected tour of demand at most $k$ (\cref{lem:opt1-feasible}).
Let $B^*\subseteq B$ denote the set of the removed elements $b\in B$.
The elements in $B^*$ are represented by the small pieces in \cref{fig:ITP1} (\cpageref{fig:ITP1}).

\paragraph{Step 5: Creating additional tours.}
We connect the elements of $B^*$ to the depot by creating additional tours as follows.
    \begin{itemize}
    \item Let $Q$ denote the union of the spine subtours for all internal components.
    $Q$ is represented by the green thick paths in \cref{fig:ITP1}.
    Let $X$ denote a multi-subgraph of $T$ that is the union of the elements in $B^*$ and the edges in $Q$.
    Observe that each element in $B^*$ is connected to the depot through edges in $Q$.
    Thus $X$ is connected, and induces a traveling salesman tour $t_{\TSP}$ visiting all terminals in $B^*$.
    Without loss of generality, we assume that, for any component $c$, if $t_{\TSP}$ visits terminals in $c$, then those terminals belong to a single subpath $p_c$ of $t_{\TSP}$, such that $p_c$ does not visit any terminal from other components.
    \item If the traveling salesman tour $t_{\TSP}$ is within the tour capacity, then let $A_3$ denote the set consisting of a single tour $t_{\TSP}$.
    Otherwise, we apply the \emph{iterated tour partitioning (ITP)} algorithm \cite{haimovich1985bounds} on $t_{\TSP}$:
    we partition $t_{\TSP}$ into segments with exactly $k$ terminals each, except possibly the last segments containing less than $k$ terminals.
    For each segment, we connect its endpoints to the depot so as to make a tour, see \cref{fig:ITP2}.
    Let $A_3$ denote the resulting set of tours.
    \end{itemize}

    Let $A_4=A_2\cup A_3$.

\paragraph{Step 6: Postprocessing.}
For each component $c$, we rearrange the small subtours in $A_4$ as follows.
\begin{itemize}
    \item For each tour $t$ in $A_4$ that contains a small subtour in $c$, letting $t_c$ denote this small subtour, if $t_c$ is an ending subtour, we remove $t_c$ from $t$; if $t_c$ is a passing subtour, we remove $t_c$ from $t$, except for the spine subtour in $c$.
    The total demand of all of the removed small subtours in $c$ is at most $k$ (\cref{lemma:small-subtours}).
    \item We create an additional tour $t^*_c$ from the depot that connects all of the removed small subtours in $c$.
    If the demand of $t^*_c$ is less than $\alpha\cdot k$, then we include dummy terminals at $r_c$ into the tour $t^*_c$ so that its demand is exactly $k$.
\end{itemize}
Let $A_5$ denote the resulting set of tours after removing small subtours from $A_4$.
Let $A_6$ denote the set of newly created tours $\{t^*_c\}_{c\in \mathcal{C}}$.

Finally, let $\OPT_1=A_5\cup A_6$.

In \cref{sec:opt1-feasible}, we show that $\OPT_1$ is a feasible solution to the unit demand tree CVRP;
in \cref{sec:analysis-opt1}, we prove the three properties of $\OPT_1$ in the claim of \cref{thm:opt1}.

\subsection{Feasibility of the Construction}
\label{sec:opt1-feasible}
\begin{lemma}
\label{lem:opt1-feasible}
Every pseudo-tour in $A_2$ is a connected tour of demand at most $k$.
\end{lemma}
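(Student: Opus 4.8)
The plan is to verify the two assertions of Lemma~\ref{lem:opt1-feasible} separately: that every pseudo-tour in $A_2$ is connected, and that every such pseudo-tour has demand at most $k$. The demand bound is the easy part. By construction, Step~4 explicitly removes elements $b\in B$ from a pseudo-tour $a_1\in A_1$ until its demand drops to at most $k$; the only thing to check is that this process terminates with a nonempty object, which holds because $a_1$ contains a base tour $a\in A$ of demand at most $k$ (it is a pruned sub-object of a tour of $\OPT$, which has demand at most $k$), so the stopping condition is met no later than when all of the attached $B$-elements have been removed. Hence every pseudo-tour in $A_2$ has demand at most $k$.

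For connectivity, I would argue that connectedness is preserved through Steps~1--4. In Step~1, each tour $a\in A$ is obtained by pruning a tour of $\OPT$ and taking a minimal subgraph, so $a$ is a connected subtour at $r$; this is stated in the excerpt ("Note that each tour in $A$ is connected''). In Step~3, a pseudo-tour in $A_1$ is the union of some $a\in A$ with the elements $b\in B$ assigned to $a$. Here the key point is \emph{where} each $b$ attaches: an element $b$ of $B$ lying in a component $c$ is either a small ending subtour in $c$ (from $\mathcal{E}'$) or a group of pieces in $c$ obtained from a non-spine small passing subtour (from $\mathcal{P}'$). By the definition of the bipartite graph in Step~3, $b$ is assigned to $a$ only if the original tour $a_0$ contained terminals of $b$ in component $c$; since $a$ is the pruned-and-minimal version of $a_0$ and $a_0$ passed through $c$, the tour $a$ still contains the root $r_c$ of $c$ (a subtour of $a_0$ in $c$ starts and ends at $r_c$, and minimality cannot delete $r_c$ if $a$ retains any terminal reachable only through $r_c$ — and in the passing case $a$ even retains the spine subtour of $c$, which contains $r_c$). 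An ending subtour $b\in\mathcal{E}'$ is a path through $r_c$, so gluing it at $r_c$ keeps the union connected. A group of pieces $b\in\mathcal{P}'$ is a priori disconnected, but each of its pieces hangs below $r_c$ along the edges of the removed non-spine passing subtour; those edges, together with the spine subtour of $c$ which $a$ retains, reconnect all the pieces to $r_c$ and hence to $a$. So each pseudo-tour in $A_1$ is connected, and Step~4 only deletes whole elements $b$, which can only remove connected "appendages'' hanging off $r_c$ — it never disconnects what remains. Therefore every pseudo-tour in $A_2$ is connected.

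The main obstacle, and the step I expect to require the most care, is the connectivity argument for groups of pieces coming from $\mathcal{P}'$: one must be precise that (i) when a non-spine small passing subtour $t_p$ in $c$ is removed in Step~1, the leftover spine subtour of $c$ stays in $a$ and carries the edges of $c$ between $r_c$ and $e_c$, and (ii) the "pieces'' of $t_p$ are exactly the subpaths of $t_p$ lying off the spine, each attached to a vertex of the spine, so that restoring $t_p$'s edges reconnects them; and that merging groups in Step~2 (concatenating within the same component $c$) preserves this attachment structure, since all pieces still hang off the spine of the same $c$. Once that structural picture is pinned down, the lemma follows. I would also remark that the assumption that any tour contains at most one subtour per component keeps the bookkeeping clean.
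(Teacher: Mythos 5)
Your proof is correct and follows essentially the same route as the paper's: the demand bound is read off directly from the stopping rule of Step~4, and connectivity comes from observing that each assigned element $b$ attaches to the base tour $a$ either at $r_c$ (for ending subtours) or along the retained spine subtour of $c$ (for groups of pieces), with Step~4 only deleting whole elements. The paper phrases this by writing $a_2$ directly as the union of $a$ and the surviving elements rather than passing through $A_1$ first, but the argument is the same.
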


\begin{proof}
Let $a_2$ be any pseudo-tour in $A_2$.
By the construction in Step~4, the demand of $a_2$ is at most $k$.
It suffices to show that $a_2$ is a connected tour.

Observe that $a_2$ is the union of a tour $a\in A$ and some elements $b_1,\dots,b_q$ from $B$, for $q\geq 0$.
From the construction, any tour $a\in A$ is connected.
Consider any $b_i$ for $i\in [1,q]$.
Observe that $f(b_i)=a$, so the edge $(a,b_i)$ belongs to the bipartite graph $(A,B)$.
If $b_i$ is a subtour at $r_c$ for some component $c$, then $r_c$ must belong to the tour $a$; and if $b_i$ is a group of pieces in some component $c$, then the spine subtour of $c$ must belong to the tour $a$.
So the union of $a$ and $b_i$ is connected.
Thus $a_2=a\cup b_1\cup\dots\cup b_q$ is a connected tour.
\end{proof}

\begin{lemma}\label{lemma:small-subtours}
In any component $c$, the total demand of all of the removed small subtours in $c$ at the beginning of Step~6 is at most $k$.
\end{lemma}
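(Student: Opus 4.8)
\textbf{Proof plan for \cref{lemma:small-subtours}.}
Fix a component $c$. Since a spine subtour carries no demand and is never deleted in Step~6, the quantity to bound equals the total demand of \emph{small subtours in $c$} over all tours of $A_4=A_2\cup A_3$. The plan is to bound the contributions of $A_2$ and of $A_3$ separately, each turning out to be $O_\eps(\alpha k)$, and then to use $\alpha=\eps^{1/\eps+1}\ll 1$.

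\emph{Contribution of $A_2$.} After Step~1 every tour of $A$ contains, within $c$, only a large subtour, or a spine subtour, or nothing. A pseudo-tour of $A_2$ is such a tour $a$ together with the elements of $B$ assigned to it in Step~3 minus those deleted in Step~4; its subtour in $c$ is the union of $a$'s part in $c$ with the attached elements of $B$ lying in $c$. If that union contains a large subtour coming from $a$, or contains one of the elements of $B$ in $c$ of demand $\ge\alpha k$, then it is large and is untouched in Step~6. Otherwise it is a union of elements of $B$ in $c$ of demand $<\alpha k$, and by the conclusion of Step~2 there are at most two such elements in all of $B$ inside $c$ (at most one from $\mathcal E'$ and one from $\mathcal P'$). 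Hence the total demand of small subtours in $c$ occurring in $A_2$ is at most the total demand of these at most two elements, which is less than $2\alpha k$; deletions in Step~4 only remove elements and can only decrease this.

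\emph{Contribution of $A_3$.} If $t_{\TSP}$ is within capacity then $A_3=\{t_{\TSP}\}$, whose single subtour in $c$ contributes less than $\alpha k$ if it is small and nothing otherwise. If the ITP is invoked, I would use the WLOG assumption that the terminals of $B^*$ in $c$ visited by $t_{\TSP}$ lie on one subpath $p_c$ of $t_{\TSP}$ that visits no terminal of any other component. The ITP cuts $t_{\TSP}$ into consecutive segments of exactly $k$ terminals, except for one shorter final segment. Because $p_c$ is a single contiguous arc and the segments partition $t_{\TSP}$ into consecutive arcs, the segments that meet $p_c$ form a consecutive run, and every segment of this run other than the first and the last lies entirely inside $p_c$; such a segment, if it has its full $k$ terminals, has all $k$ of them in $c$ and so yields a \emph{large} subtour in $c$, never removed. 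Thus the only segments that can yield a subtour in $c$ of demand $<\alpha k$ are the at most two segments containing an endpoint of $p_c$, together with the single short final segment in the case where it happens to lie inside $p_c$; each of these contributes less than $\alpha k$, so the contribution of $A_3$ is less than $3\alpha k$. Adding the two bounds gives a total less than $5\alpha k$, which is at most $k$ since $5\alpha<1$.

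The only mildly delicate point is the combinatorial bookkeeping in the last paragraph — verifying that at most a constant number of ITP segments fail to be "full $k$-terminal subtours inside $c$". Everything else follows directly from the thresholds $\mathrm{demand}<\alpha k$ that are built into the definition of small subtours and into the leftover elements produced by Step~2.
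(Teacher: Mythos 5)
Your proof is correct and follows essentially the same route as the paper's: bound the small subtours in $c$ coming from $A_2$ via the at most two leftover elements of $B$ in $c$ with demand below $\alpha k$, bound those coming from $A_3$ via the contiguity of $p_c$ and the fact that all but the boundary ITP segments meeting $p_c$ place exactly $k$ terminals in $c$, and conclude since $\alpha\ll 1$. The only cosmetic differences are that you bound total demand directly (getting $<5\alpha k$) where the paper counts at most $4$ small subtours of demand $\le\alpha k$ each, and your extra third segment for $A_3$ is redundant since the short final ITP segment, if it meets $p_c$, is already the last segment of the run.
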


\begin{proof}
Let $c$ be any component.
The key is to show that the number of non-spine small subtours in $c$ that are contained in tours in $A_4$ is at most 4.
Since $A_4=A_2\cup A_3$, we analyze the number of non-spine small subtours in $c$ that are contained in tours in $A_2$ and in $A_3$, respectively.

The tours in $A_2$ contain at most two non-spine small subtours in $c$, since at most two elements of $B$ in component $c$ have demands less than $\alpha\cdot k$.

We claim that the tours in $A_3$ contain at most two non-spine small subtours in $c$.
If $A_3$ consists of a single tour $t_{\TSP}$, the claim follows trivially since any tour contains at most one subtour in $c$ from our assumption.
Next, we consider the case when $A_3$ is generated by the ITP algorithm.
From our assumption, if $t_{\TSP}$ visits terminals in $c$, then those terminals belong to a single subpath $p_c$ of $t_{\TSP}$, such that $p_c$ does not visit any terminal from other components.
By applying the ITP algorithm  on $t_{\TSP}$, we obtain a collection of segments.
All segments that intersect $p_c$ visit exactly $k$ terminals in $c$, except for possibly the first and the last of those segments visiting less than $k$ terminals in $c$.
Hence at most two non-spine small subtours in $c$ among the tours in $A_3$.

Therefore, the number of non-spine small subtours in $c$ in the solution $A_4$ is at most 4.
Since each small subtour has demand at most $\alpha\cdot k$, the total demand of the removed small subtours is at most $4\cdot \alpha\cdot k<k$.
\end{proof}

\subsection{Analysis of $\OPT_1$}
\label{sec:analysis-opt1}
Let $c\in \mathcal{C}$ be any component.
From \cref{fact:decomposition}, $c$ contains at most $2\Gamma\cdot k$ real terminals.
Each tour in $A_5$ visiting terminals in $c$ visits at least $\alpha\cdot k$ real terminals in $c$, so there are at most  $\frac{2\Gamma}{\alpha}$ tours in $A_5$ visiting terminals in $c$.
There is a single tour in $A_6$, the tour $t_c^*$, that visits terminals in $c$.
Hence the first property of the claim.
From the construction of $t_c^*$, the second property of the claim follows.

It remains to analyze the cost of $\OPT_1$.
Compared with $\OPT$, the extra cost in $\OPT_1$ is due to Step~5 and Step~6 of the construction.
This extra cost consists of the cost of the edges in $Q$ (Step~5), the cost in the ITP algorithm to connect the endpoints of all segments to the depot (Step~5), and the cost of the postprocessing (Step~6), which we bound in \cref{cor:cost-Q,lemma:analyzeITP,lem:connect-small}, respectively.
Both \cref{cor:cost-Q} and \cref{lem:connect-small} are based on \cref{lem:sum-dist-rc}.

%\begin{lemma}\label{lemma:analyzeMST}
%The cost of the subgraph $Q$ at most $\eps\cdot \opt$.
%\end{lemma}

%\begin{proof}
%Consider any edge $e$ in $Q$.
%Let $n_e$ denote the number of tours in $\OPT$ that contain $e$.
%To lower bound $n_e$, observe that there exists some leaf component $c$ such that $e$ belongs to the path between the root of the tree and the root of $c$.
%Since $c$ is a leaf component, $c$ contains at least $\biggamma\cdot k$ terminals by \cref{fact:decomposition}, so $n_e\geq \Gamma$.
%Summing over all edges $e$ in $Q$, we have
%\[\opt\geq \sum_{e\in Q} 2\cdot w(e)\cdot n_e\geq 2\Gamma\cdot \sum_{e\in Q} w(e)=2\Gamma\cdot \cost(Q).\]
%Since $2\biggamma> 1/\eps$, the claim follows.
%\end{proof}

\begin{lemma}
\label{lem:sum-dist-rc}
We have $\displaystyle\sum_{\text{component $c$}} \dist(r_c)\leq \frac{\eps}{8}\cdot\opt.$
\end{lemma}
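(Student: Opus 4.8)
The plan is to bound the sum by passing from all components to the big components, and then to exploit that each big component contains many terminals, all at distance at least $\dist(r_c)$ from the depot. The only lower bound on $\opt$ that I need is the elementary radial bound $\opt\ge\frac{2}{k}\sum_{v\in V'}\dist(v)$: any tour visits a set $S$ of at most $k$ terminals and, on a tree, must traverse the depot-to-$v$ path (in both directions) for every $v\in S$, so its cost is at least $2\max_{v\in S}\dist(v)\ge\frac{2}{|S|}\sum_{v\in S}\dist(v)\ge\frac{2}{k}\sum_{v\in S}\dist(v)$; summing over the tours of $\OPT$, whose terminal sets partition $V'$, gives the bound.

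If $\mathcal{C}$ consists of a single component $c$, then $c$ contains every edge of $T$, so $r_c=r$ and $\dist(r_c)=0$, and we are done. Otherwise I invoke the map $\phi\colon\mathcal{C}\to\mathcal{C}_{\mathrm{big}}$ of \cref{fact:decomposition}: $\phi(c)$ is a descendant of $c$ in the tree of components, and each big component has at most three preimages under $\phi$. The key sub-claim is $\dist(r_{\phi(c)})\ge\dist(r_c)$. To see it, walk up the parent chain from $\phi(c)$ to $c$: two consecutive components in this chain share a vertex, and the vertex shared by a component $c'$ and its parent lies in the parent and is the vertex of $c'$ closest to the depot, hence equals $r_{c'}$ while its distance to the depot is at least that of the parent's root; transitivity along the chain yields the sub-claim. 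Consequently
\[
\sum_{\text{component $c$}}\dist(r_c)\;\le\;\sum_{\text{component $c$}}\dist\bigl(r_{\phi(c)}\bigr)\;=\;\sum_{c'\in\mathcal{C}_{\mathrm{big}}}\bigl|\phi^{-1}(c')\bigr|\cdot\dist(r_{c'})\;\le\;3\sum_{c'\in\mathcal{C}_{\mathrm{big}}}\dist(r_{c'}).
\]

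It remains to bound $\sum_{c'\in\mathcal{C}_{\mathrm{big}}}\dist(r_{c'})$. Each big component $c'$ contains at least $\Gamma k$ terminals, and $r_{c'}$ is the vertex of $c'$ closest to the depot, so every such terminal $v$ satisfies $\dist(v)\ge\dist(r_{c'})$; thus $\dist(r_{c'})\le\frac{1}{\Gamma k}\sum_{v\in V'\cap c'}\dist(v)$. Terminals are the leaves of $T$ and each leaf lies in exactly one component, so the sets $V'\cap c'$ are pairwise disjoint; summing over $c'\in\mathcal{C}_{\mathrm{big}}$ and using $\Gamma=12/\eps$ with the radial bound gives
\[
\sum_{c'\in\mathcal{C}_{\mathrm{big}}}\dist(r_{c'})\;\le\;\frac{1}{\Gamma k}\sum_{v\in V'}\dist(v)\;\le\;\frac{1}{2\Gamma}\opt\;=\;\frac{\eps}{24}\opt .
\]
Combining the two displays gives $\sum_{c}\dist(r_c)\le 3\cdot\frac{\eps}{24}\opt=\frac{\eps}{8}\opt$, as claimed.

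I do not anticipate a real difficulty; the one point that needs care is the sub-claim $\dist(r_{\phi(c)})\ge\dist(r_c)$, i.e.\ correctly reading off from \cref{fact:decomposition} that the root of a descendant component is at least as far from the depot as the root of any ancestor component. It is worth noting that this argument does not use the bounded-distance hypothesis at all: the naive estimate $\sum_c\dist(r_c)\le|\mathcal{C}|\cdot D_{\max}$ would be hopelessly lossy because $D_{\max}$ can be vastly larger than $D_{\min}$, which is exactly why routing the sum through the map to big components is the right move.
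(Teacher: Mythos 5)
Your proof is correct and is essentially the paper's argument: both route the sum through the big components via the 3-preimage map of \cref{fact:decomposition} (using $\dist(r_{\phi(c)})\ge\dist(r_c)$) and both rest on the same lower bound for $\opt$, since your radial bound $\opt\ge\frac{2}{k}\sum_{v\in V'}\dist(v)$ is, on a tree, exactly the edge-based bound $\opt\ge\sum_e 2w(e)n_e/k$ that the paper invokes, just with the order of summation swapped. Your explicit treatment of the single-component case (where the 3-preimage map is not guaranteed) and of the monotonicity of $\dist(r_c)$ along the descendant chain fills in details the paper leaves implicit.
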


\begin{proof}
For any edge $e$ in $T$, let $u$ and $v$ denote the two endpoints of $e$ such that $u$ is the parent of $v$.
Let $T_e$ denote the subtree of $T$ rooted at $v$.
Let $n_e$ denote the number of terminals in $T_e$.
From the lower bound in \cite{hamaguchi1998capacitated}, we have
\[\opt\geq \sum_{e\in T}2\cdot w(e)\cdot \frac{n_e}{k}.\]
Since each big component contains at least $\Gamma\cdot k$ terminals, we have
\[n_e\geq \sum_{\text{big component }c\subseteq T_e} \Gamma\cdot k.\]
Thus
\begin{align*}
\opt&\geq \sum_{e\in T}2\cdot w(e)\cdot \sum_{\text{big component }c\subseteq T_e} \Gamma\\
&= \sum_{\text{big component }c} \Gamma \cdot\sum_{e\in T \text{ such that }c\subseteq T_e}2\cdot w(e)\\
&=\sum_{\text{big component }c} \Gamma \cdot 2\cdot\dist(r_c).
\end{align*}
From \cref{fact:decomposition}, there exists a map from all components to big components such that the image of a component is among its descendants (including itself) and each big component has at most three pre-images.
%From Theorem~15 in~\cite{MZ21}, in each component, the number of small subtours is at most 4.
Thus
\[3\cdot \sum_{\text{big component $c$}}  \dist(r_c)\geq \sum_{\text{component $c$}} \dist(r_c).\]
Therefore,
\[\opt\geq \frac{2\cdot\Gamma}{3}\cdot\sum_{\text{component $c$}} \dist(r_c).\]
The claim follows since $\Gamma=\frac{12}{\eps}$.
\end{proof}

\begin{corollary}
\label{cor:cost-Q}
We have $\cost(Q)\leq \frac{\eps}{4}\cdot\opt$.
\end{corollary}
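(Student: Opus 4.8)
The plan is to bound $\cost(Q)$ directly in terms of $\sum_c \dist(r_c)$ and then invoke \cref{lem:sum-dist-rc}. Recall that $Q$ is the union of the spine subtours $\spine_c$ over all internal components $c$, and by \cref{def:spine} the spine subtour of $c$ is the back-and-forth connection between $r_c$ and the exit vertex $e_c$ along the (unique) tree path. So $\cost(\spine_c) = 2\cdot w(\text{path from }r_c\text{ to }e_c)$, and $\cost(Q) \le \sum_{\text{internal }c} 2\cdot w(r_c\text{-to-}e_c\text{ path})$, with equality unless spine paths of different components share edges (which the component decomposition in \cref{fact:decomposition} essentially rules out, since components partition the edges of $T$ — but an inequality is all we need).

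The key step is to relate the length of the $r_c$–$e_c$ path inside each internal component to the quantities $\dist(\cdot)$ already controlled by \cref{lem:sum-dist-rc}. The simplest route: the exit vertex $e_c$ is itself the root of some child component, or at worst lies on the boundary shared with descendant components, so $w(r_c\text{-to-}e_c\text{ path}) = \dist(e_c) - \dist(r_c) \le \dist(e_c)$, and $e_c = r_{c'}$ for a suitable component $c'$ (the one hanging off the exit vertex). Summing, $\sum_{\text{internal }c} \dist(e_c) \le \sum_{\text{component }c'} \dist(r_{c'})$, since the map $c \mapsto c'$ is injective (each vertex is the root of at most one component). Hence $\cost(Q) \le 2\sum_{\text{internal }c}\dist(e_c) \le 2\sum_{\text{component }c}\dist(r_c) \le 2\cdot\frac{\eps}{8}\cdot\opt = \frac{\eps}{4}\cdot\opt$ by \cref{lem:sum-dist-rc}.

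The main obstacle I anticipate is bookkeeping at the boundary: verifying that $e_c$ really coincides with (or is dominated in $\dist$ by) the root of a distinct component, so that the injectivity argument goes through without double-counting, and confirming that the spine paths do not overlap in a way that would break the first inequality $\cost(Q) \le \sum 2w(\cdots)$. Both should follow cleanly from the structure in \cref{fact:decomposition} — components are edge-disjoint connected subgraphs, an internal component interacts with others only at $r_c$ and $e_c$, and the subtree hanging below $e_c$ is covered by components whose root is $e_c$ — so the estimate $\cost(Q)\le 2\sum_c\dist(r_c)$ holds with room to spare. If a constant-factor slack is needed (e.g. because one wants to also absorb the $\dist(r_c)$ term or handle the at-most-three-preimages subtlety), one simply tracks it through; the bound $\eps/4$ already leaves a comfortable margin over $\eps/8$.
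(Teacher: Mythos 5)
Your proof is correct and is essentially the paper's argument: the paper likewise observes that every edge of $Q$ lies on the depot-to-$r_c$ connection of some component $c$ (your $e_c=r_{c'}$ identification, made injective by the edge-disjointness of components), giving $\cost(Q)\leq 2\sum_c\dist(r_c)$ and then concluding via \cref{lem:sum-dist-rc}. Your version just spells out the bookkeeping that the paper leaves as a one-line observation.
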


\begin{proof}
Observe that every edge in $Q$ belongs to the connection (in both directions) between the depot and the root of some component $c$.
By \cref{lem:sum-dist-rc}, we have
\[\cost(Q)\leq 2\cdot \sum_{\text{component $c$}} \dist(r_c)\leq \frac{\eps}{4}\cdot\opt.\tag*{\qedhere}\]
\end{proof}

\begin{lemma}\label{lemma:analyzeITP}
Let $\Delta_1$ denote the cost in the ITP algorithm to connect the endpoints of all segments to the depot in Step~5.
We have $\Delta_1\leq \frac{\eps}{4}\cdot \opt$.
\end{lemma}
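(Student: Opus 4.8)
\textbf{Proof plan for \cref{lemma:analyzeITP}.}

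The plan is to bound the number of segments produced by the ITP algorithm applied to $t_{\TSP}$ in Step~5, and then to bound the cost of connecting the endpoints of each segment to the depot by twice the maximum depth of a component root. First I would recall that $t_{\TSP}$ visits exactly the terminals contained in the removed elements $B^*\subseteq B$, so the number of segments is at most $\lceil |B^*| / k\rceil$ plus a small additive term; hence it suffices to upper bound $|B^*|$, the total number of terminals removed in Step~4 when correcting the pseudo-tours exceeding capacity. Here I would use the structure established in Steps~1--3: every element $b\in B$ has demand at most $2\alpha\cdot k$, and by the assignment guarantee from Lemma~1 of \cite{becker2019framework}, each pseudo-tour in $A_1$ has demand at most the demand of the corresponding tour in $\OPT$ (which is at most $k$) plus $\max_{b\in B}\demand(b)\leq 2\alpha\cdot k$. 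Therefore each pseudo-tour in $A_1$ exceeds capacity by at most $2\alpha\cdot k$, so Step~4 removes at most one element per pseudo-tour, and removes a total demand of at most $2\alpha\cdot k$ per pseudo-tour. Summing over the $|A_1|\leq |\OPT|$ pseudo-tours, and using $|\OPT|\leq \opt/(2 D_{\min})$ together with bounded distances (or, alternatively, relating $|\OPT|$ to the lower bound of \cite{hamaguchi1998capacitated}), I would conclude that $|B^*|$ is an $O(\alpha)$ fraction of $k\cdot|\OPT|$, hence the number of ITP segments is an $O(\alpha)$ fraction of $|\OPT|$, plus possibly one segment per component.

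Next I would bound the cost charged to each segment. For a segment whose terminals lie in a component $c$, its two endpoints are vertices in $c$, so connecting each endpoint to the depot and back costs at most $2\cdot(\dist(r_c) + \mathrm{diam}(c))$; since a component has bounded radius relative to $D_{\min}$ under the bounded distance assumption, and more simply since the subpath structure ensures each segment meets at most one component's terminals, the connection cost of a single segment is $O(\dist(r_c))$ up to lower-order terms, or we may crudely bound it by $O(D_{\max})$. Summing the per-segment connection cost over all segments and using the bound on the number of segments from the previous paragraph, the total is at most (number of segments)$\times O(D_{\max})$, which is an $O(\alpha)$ fraction of $k\cdot|\OPT|\cdot D_{\max}/D_{\min}$; invoking bounded distances to replace $D_{\max}/D_{\min}$ by the constant $(1/\eps)^{1/\eps-1}$ and recalling $\alpha = \eps^{(1/\eps+1)}$, this product is dominated by $\frac{\eps}{4}\cdot\opt$ because $\alpha$ was chosen precisely so that $\alpha\cdot(1/\eps)^{1/\eps-1}$ is a small multiple of $\eps^2$. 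The segments coming ``one per component'' contribute, after summing, at most a constant times $\sum_c \dist(r_c)$, which is $\frac{\eps}{8}\cdot\opt$ by \cref{lem:sum-dist-rc}, and can be absorbed.

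The main obstacle I expect is making the counting of $|B^*|$ and the per-segment charging simultaneously tight enough: a naive bound on the number of segments as $|B^*|/k$ loses the $+1$ rounding term per connected piece of $t_{\TSP}$, and there could be one such piece per component, so I must separately account for those ``rounding'' segments using $\sum_c \dist(r_c)$ rather than the crude $D_{\max}$ bound, exactly as in \cref{cor:cost-Q}. The other delicate point is the choice of constants: the argument only closes because $\alpha$ is polynomially small in $\eps$ while $D_{\max}/D_{\min}$ is only exponentially large in $1/\eps$, so I would carefully track that $\alpha\cdot D_{\max}/D_{\min} = \eps^{(1/\eps+1)}\cdot(1/\eps)^{(1/\eps-1)} = \eps^2$, which gives the desired $\frac{\eps}{4}$ slack with room to spare. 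Everything else — the connectivity of $X$ through $Q$, the single-subpath assumption on $p_c$, and the feasibility of the segments — is already in place from Step~5 and \cref{lem:opt1-feasible}.
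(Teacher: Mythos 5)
Your proposal is correct and follows essentially the same route as the paper: bound the demand removed in Step~4 per over-capacity pseudo-tour by $O(\alpha\cdot k)$ via the assignment guarantee, deduce that the number of ITP segments is an $O(\alpha)$ fraction of the number of tours, charge each segment $2D_{\max}$, and close using $\opt\geq 2D_{\min}\cdot(\text{number of tours})$ together with $\alpha\cdot D_{\max}/D_{\min}\leq\eps^2$. Two harmless slips: Step~4 may remove several small elements from one pseudo-tour (the right per-tour bound is $4\alpha k$, namely the excess plus the overshoot of the last removed element, not $2\alpha k$ from a single removal), and since $t_{\TSP}$ is a single connected tour there is only one leftover "+1" segment in total, not one per component, so your extra $\sum_c\dist(r_c)$ accounting is unnecessary.
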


\begin{proof}
Let $n'$ denote the number of terminals in the tree $T$.
First, we show that the number of terminals on $t_{\TSP}$ is at most $4\alpha\cdot n'$.
Observe that the number of terminals on $t_{\TSP}$ is the overall removed demand in Step~4.
Consider any pseudo-tour $a_1\in A_1$ whose demand exceeds $k$.
Let $a_0$ denote the corresponding tour in $\OPT$.
By the construction, the demand of $a_1$ is at most the demand of $a_0$ plus the maximum demand of any element in $B$.
Since the demand of $a_0$ is at most $k$ and the demand of any element in $B$ is at most $2\alpha\cdot k$, the demand of $a_1$ is at most $k+2\alpha\cdot k$.
Let $a_2$ denote the corresponding tour after the correction of capacity in Step~4.
Since any element in $B$ has demand at most $2\alpha\cdot k$, the demand of $a_2$ is at least $k-2\alpha\cdot k$.
So the total removed demand from $a_1$ in Step~4 is at most $4\alpha\cdot k$.
There are at most $\frac{n'}{k}$ pseudo-tours $a_1\in A_1$ whose demands exceed $k$.
Summing over all those pseudo-tours, the overall removed demand in Step~4 is at most $\frac{n'}{k}\cdot 4\alpha\cdot k=4\alpha\cdot n'$.
Hence the number of terminals on $t_{\TSP}$ is at most $4\alpha\cdot n'$.

If $4\alpha\cdot n'\leq k$, then $t_{\TSP}$ is within the tour capacity, so the ITP algorithm is not applied and $\Delta_1=0$.
It remains to consider the case in which $4\alpha\cdot n'>k$.
By the construction in the ITP algorithm, every segment visits exactly $k$ terminals except possibly the last segment.
Thus the number $\ell_{\text{ITP}}$ of resulting tours in the ITP algorithm is
\begin{equation}
\label{eqn:L-ITP}
    \ell_{\text{IPT}}\leq \frac{4\alpha\cdot n'}{k}+1.
\end{equation}
Since $4\alpha\cdot n'>k$, we have $\ell_{\text{IPT}}<\frac{8\alpha\cdot n'}{k}$.
Since $\Delta_1\leq \ell_{\text{ITP}}\cdot 2\cdot D_{\max}$ and using \cref{def:bounded-distance} and the definition of $\alpha$ in the claim of \cref{thm:opt1}, we have
\[\Delta_1<\frac{8\alpha\cdot n'}{k}\cdot 2\cdot \left(\frac{1}{\eps}\right)^{\frac{1}{\eps}-1}\cdot D_{\min}<\frac{\eps}{2}\cdot \frac{n'}{k}\cdot D_{\min}.\]
On the other hand, the solution $\OPT$ consists of at least $\frac{n'}{k}$ tours, so $\opt\geq \frac{2n'}{k}\cdot D_{\min}.$
Therefore, $\Delta_1\leq \frac{\eps}{4}\cdot \opt$.
\end{proof}

\begin{lemma}
\label{lem:connect-small}
Let $\Delta_2$ denote the cost of the postprocessing (Step~6).
Then $\Delta_2\leq \frac{\eps}{2}\cdot \opt$.
\end{lemma}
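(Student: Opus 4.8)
The plan is to bound the cost of the postprocessing in Step~6 by summing, over all components $c$, the cost of the newly created tour $t^*_c$ and the savings/extra edges incurred when the small subtours are detached from the tours of $A_4$. The key observation is that each $t^*_c$ is a tour from the depot $r$ that goes to $r_c$, then traverses a connected walk inside $c$ covering all removed small subtours in $c$, and returns. Since by \cref{lemma:small-subtours} the total demand of the removed small subtours in $c$ is at most $k$, a single tour suffices (no capacity violation, so no further ITP is needed), and its cost is at most $2\dist(r_c)$ plus twice the weight of the edges of $c$ that are used. The edges of $c$ used by $t^*_c$ were already paid for inside $A_4$ (they belonged to the small subtours we are detaching), so detaching-then-reconnecting inside the component does not increase the within-component cost: we only pay extra for the new connections between $r_c$ and the depot. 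Thus $\Delta_2 \le \sum_{\text{component } c} 2\dist(r_c)$.

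More precisely, first I would argue that removing a small subtour $t_c$ from a tour $t\in A_4$ and instead routing its terminals through $t^*_c$ is cost-neutral as far as edges strictly inside $c$ are concerned: the multiset of edges inside $c$ traversed by $\{t^*_c\}$ is a sub-multiset of the edges inside $c$ traversed by the small subtours in $A_4$, because $t^*_c$ is obtained by concatenating exactly those small subtours at $r_c$ (and a spine subtour when needed, which is already present). Hence the only genuinely new edges are those on the connection between the depot and $r_c$, traversed once in each direction by $t^*_c$, contributing at most $2\dist(r_c)$ per component. Second, I would note that including dummy terminals at $r_c$ to pad $t^*_c$ up to demand $k$ adds no cost, since dummy terminals sit at $r_c$ itself. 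Summing over all components gives $\Delta_2 \le \sum_{\text{component }c} 2\dist(r_c) \le \frac{\eps}{4}\cdot \opt$ by \cref{lem:sum-dist-rc}, which is even stronger than the claimed $\frac{\eps}{2}\cdot\opt$; I would state the bound as $\frac{\eps}{2}\cdot\opt$ to leave slack for any bookkeeping in how detaching interacts with passing subtours.

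The main obstacle I anticipate is the careful accounting for \emph{passing} small subtours: when a small passing subtour is removed from a tour $t$ "except for the spine subtour," one must check that $t$ remains a connected tour (the spine keeps it connected through $c$) and that the pieces of the passing subtour, once handed to $t^*_c$, are still reachable from $r_c$ within $c$ — which holds because each such piece hangs off the spine path, and $t^*_c$ contains a walk along the spine. A second, minor subtlety is that a component $c$ may contain removed small subtours coming both from $A_2$ and from $A_3$; but the demand bound of \cref{lemma:small-subtours} already accounts for all of them together, so a single $t^*_c$ handles them all and the per-component extra cost is still just $2\dist(r_c)$. Once these connectivity checks are in place, the inequality follows directly from \cref{lem:sum-dist-rc} as above.
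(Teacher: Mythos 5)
Your overall strategy matches the paper's (bound the per-component reconnection cost and sum via \cref{lem:sum-dist-rc}), but there is a genuine gap in your accounting for internal components. You claim that the edges of $t^*_c$ strictly inside $c$ form a sub-multiset of the edges traversed by the removed small subtours, so that the only new cost is $2\dist(r_c)$. This fails for removed \emph{passing} subtours: Step~6 removes such a subtour \emph{except for its spine subtour}, so the spine edges stay with the original tour $t$ and are \emph{not} in the removed multiset. Yet the detached pieces hang off the spine, so for $t^*_c$ to reach them from $r_c$ it must traverse (part of) the spine itself — a genuinely new traversal that your "already present" remark does not pay for, since the spine retained inside $t$ cannot serve the separate tour $t^*_c$. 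You flag passing subtours as the main obstacle but then resolve the issue only as a connectivity check, not as a cost to be charged.

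The fix is exactly where your deliberately-left slack goes: bound the reconnection cost of an internal component $c$ by $2\dist(r_c)+\cost(\spine_c)$ (and by $2\dist(r_c)$ for leaf components, which have no passing subtours). Summing, the first terms give $\sum_c 2\dist(r_c)\leq\frac{\eps}{4}\cdot\opt$ by \cref{lem:sum-dist-rc}, and the second terms give $\sum_{\text{internal }c}\cost(\spine_c)=\cost(Q)\leq\frac{\eps}{4}\cdot\opt$ by \cref{cor:cost-Q}, which is precisely how the stated bound $\frac{\eps}{2}\cdot\opt$ arises. Without identifying and bounding this spine term, the argument as written only supports the leaf-component case.
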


\begin{proof}
For each leaf component $c$, the cost to connect the small subtours in $c$ to the depot in Step~6 is at most $2\cdot \dist(r_c)$; and for each internal component $c$, the cost to connect the small subtours in $c$ to the depot is at most $2\cdot\dist(r_c)+\cost(\spine_c)$.
Summing over all components, we have
\[\Delta_2\leq \sum_{\text{component }c} 2\cdot \dist(r_c) + \sum_{\text{internal component }c} \cost(\spine_c).\]
By \cref{lem:sum-dist-rc}, we have \[\sum_{\text{component }c} 2\cdot \dist(r_c)\leq \frac{\eps}{4}\cdot\opt\]
and
\[\sum_{\text{internal component }c} \cost(\spine_c)=\cost(Q)\leq \frac{\eps}{4}\cdot\opt.\]
Thus $\Delta_2\leq \frac{\eps}{2}\cdot\opt$.
\end{proof}

From \cref{cor:cost-Q,lemma:analyzeITP,lem:connect-small}, we have $\opt_1\leq \opt+\cost(Q)+\Delta_1+\Delta_2\leq (1+\eps)\cdot \opt$.
This completes the proof for \cref{thm:opt1}.

\section{Height Reduction}
\label{sec:simplifyingtree}
In this section, we transform the tree $T$ into a tree $\hat T$ so that $\hat T$ has $O_\eps (1)$ levels of components, see \cref{fig:transform-global}.
We assume that the tree $T$ has bounded distances.
To begin with, we partition the components according to the distances from their roots to the depot.

\begin{lemma}
\label{lem:tild-D-H-eps}
Let $\tilde D=\alpha\cdot\eps\cdot D_{\min}$.
Let $H_\eps=(\frac{1}{\eps})^{\frac{2}{\eps}+1}$.
For each $i\in[1,H_\eps]$, let $\mathcal{C}_i\subseteq \mathcal{C}$ denote the set of components $c\in \mathcal{C}$ such that $\dist(r_c)\in\left[(i-1)\cdot \tilde D, i\cdot \tilde D\right)$.
Then any component $c\in \mathcal{C}$ belongs to a set $\mathcal{C}_i$ for some $i\in [1,H_\eps]$.
\end{lemma}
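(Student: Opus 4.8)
The plan is to show that the distance from any component root to the depot is at most $H_\eps\cdot\tilde D$, so that every component falls in one of the $H_\eps$ buckets $\mathcal C_1,\dots,\mathcal C_{H_\eps}$. Since $\dist(r_c)\geq 0$ for every component $c$, it suffices to prove the upper bound $\dist(r_c)<H_\eps\cdot\tilde D$; together with the obvious fact that the intervals $[(i-1)\tilde D, i\tilde D)$ for $i\in[1,H_\eps]$ partition $[0, H_\eps\tilde D)$, this gives the claim.

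First I would recall that $r_c$ is a vertex of the tree $T$, and every vertex of $T$ lies on the root-to-leaf path ending at some terminal; hence $\dist(r_c)\leq \dist(v)$ for some terminal $v$ that is a descendant of $r_c$ (in fact $r_c$ itself is either a terminal-free internal vertex on such a path or, after the preprocessing, has terminal descendants). In all cases $\dist(r_c)\leq D_{\max}$ by the definition of $D_{\max}$ as the maximum depot-to-terminal distance. Then I would invoke the bounded distance hypothesis (\cref{def:bounded-distance}): $D_{\max}<\left(\frac1\eps\right)^{\frac1\eps-1}\cdot D_{\min}$. Plugging in the definitions $\tilde D=\alpha\cdot\eps\cdot D_{\min}$ with $\alpha=\eps^{(\frac1\eps+1)}$ (from \cref{thm:opt1}) and $H_\eps=\left(\frac1\eps\right)^{\frac2\eps+1}$, the remaining task is the routine inequality
\[
\left(\tfrac1\eps\right)^{\frac1\eps-1}\cdot D_{\min}\;\leq\; H_\eps\cdot\tilde D
\;=\;\left(\tfrac1\eps\right)^{\frac2\eps+1}\cdot\eps^{\frac1\eps+1}\cdot\eps\cdot D_{\min}
\;=\;\left(\tfrac1\eps\right)^{\frac2\eps+1}\cdot\left(\tfrac1\eps\right)^{-\frac1\eps-2}\cdot D_{\min}
\;=\;\left(\tfrac1\eps\right)^{\frac1\eps-1}\cdot D_{\min},
\]
so in fact the two sides match exactly (up to the strict inequality coming from \cref{def:bounded-distance}), which is presumably why these particular constants were chosen. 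Thus $\dist(r_c)\leq D_{\max}<H_\eps\cdot\tilde D$, and $c\in\mathcal C_i$ for the unique $i$ with $\dist(r_c)\in[(i-1)\tilde D, i\tilde D)$.

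There is essentially no obstacle here: the only content is bookkeeping of the constants, and the mild point worth stating carefully is why $\dist(r_c)\leq D_{\max}$ — namely that after the preliminary normalization every vertex of $T$ sits on some path from $r$ to a terminal (leaf), so its distance to the depot is bounded by that of the terminal at the end of the path. I would make that one-line observation explicit and then let the exponent arithmetic above close the argument.
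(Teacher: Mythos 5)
Your proof is correct and follows essentially the same route as the paper: bound $\dist(r_c)\leq D_{\max}$, invoke the bounded-distance hypothesis, and verify that the constants are chosen so that $\left(\frac{1}{\eps}\right)^{\frac{1}{\eps}-1}\cdot D_{\min}=H_\eps\cdot\tilde D$ exactly. Your explicit remark on why $\dist(r_c)\leq D_{\max}$ is a small addition the paper leaves implicit, but the argument is otherwise identical.
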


\begin{proof}
Let $c\in \mathcal{C}$ be any component.
We have \[\dist(r_c)\leq D_{\max}< \left(\frac{1}{\eps}\right)^{\frac{1}{\eps}-1}\cdot D_{\min}=H_\eps\cdot \tilde D,\] where the second inequality follows from \cref{def:bounded-distance}, and the equality follows from the definition of $\alpha$ in \cref{thm:opt1} and the definitions of $\tilde D$ and $H_\eps$.
Thus there exists $i\in[1,H_\eps]$ such that
$c\in \mathcal C_{i}$.
\end{proof}

\begin{definition}[maximally connected sets and critical vertices]
\label{def:critical}
We say that a set of components $\tilde{\mathcal{C}}\subseteq \mathcal{C}_i$ is \emph{maximally connected} if the components in $\tilde{\mathcal{C}}$ are connected to each other and $\tilde{\mathcal{C}}$ is maximal within $\mathcal{C}_i$.
For a maximally connected set  of components $\tilde{\mathcal{C}}\subseteq \mathcal{C}_i$, we define the \emph{critical vertex} of $\tilde{\mathcal{C}}$ to be the root vertex of the component $c\in\tilde{\mathcal{C}}$ that is closest to the depot.
\end{definition}

\cref{fig:transform-global} (\cpageref{fig:transform-global}) is an example with three levels of components: $\mathcal{C}_1$, $\mathcal{C}_2$, and $\mathcal{C}_3$, indicated by different colors.
There are four maximally connected sets of components.
The critical vertices are represented by rectangular nodes.

\begin{algorithm}[h]
\caption{Construction of the tree $\hat T$ (see \cref{fig:transform-global}).}
\label{alg:hat-T}
\begin{algorithmic}[1]
\For{each $i\in[1,H_\eps]$}
    \For{each maximally connected set of components $\tilde{\mathcal{C}}\subseteq \mathcal{C}_i$}
        \State $z\gets$ critical vertex of $\tilde{\mathcal{C}}$
        \For{each component $c\in \tilde{\mathcal{C}}$}
            \State $\delta\gets r_c$-to-$z$ distance in $T$
            \State \emph{Split} the tree $T$ at the root vertex $r_c$ of the component $c$ \Comment{vertex $r_c$ is duplicated}
            \State Add an edge between the root of the component $c$ and $z$ with weight $\delta$
        \EndFor
    \EndFor
\EndFor
\State $\hat T\gets$the resulting tree
\end{algorithmic}
\end{algorithm}

Let $\hat T$ be the tree constructed in \cref{alg:hat-T}.
We observe that \cref{alg:hat-T} is in polynomial time, and \cref{fact:hat-T} follows from the construction.
We show in \cref{thm:opt2} that the optimal cost for $\hat T$ is increased by an $O(\eps)$ fraction compared with the optimal cost for $T$.

\begin{theorem}
\label{thm:opt2}
Consider the unit demand CVRP on the tree $\hat T$.
There exist dummy terminals and a solution $\OPT_2$ visiting all of the real and the dummy terminals, such that all of the following holds:
\begin{enumerate}
\item For each component $c$, there are at most $\frac{2\Gamma}{\alpha}+1$ tours visiting terminals in $c$;
\item For each component $c$ and each tour $t$ visiting terminals in $c$, the number of the terminals in $c$ visited by $t$ is at least $\alpha\cdot k$;
\item We have $\opt_2<(1+3\eps)\cdot\opt$, where $\opt$ denotes the optimal cost for the unit demand CVRP on the tree $T$.
\end{enumerate}
\end{theorem}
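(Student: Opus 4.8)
\textbf{Proof proposal for \cref{thm:opt2}.}

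The plan is to construct $\OPT_2$ directly from the solution $\OPT_1$ guaranteed by \cref{thm:opt1}, transporting each tour of $\OPT_1$ from the tree $T$ to the tree $\hat T$. Since \cref{fact:hat-T} tells us the components of $\hat T$ are literally the same as those of $T$, and each tour of $\OPT_1$ contains at most one subtour in any component, a tour $t$ of $\OPT_1$ is naturally decomposed into: (i) its subtours inside components, and (ii) the ``connecting'' parts of $t$ that travel between components. The subtours inside components are copied verbatim to $\hat T$ (each component is an isolated gadget attached at its root, identical to its copy in $T$). For the connecting parts, I would route $t$ in $\hat T$ as follows: whenever $t$ in $T$ enters a component $c$ at $r_c$ (either a leaf component, or an internal component entered at $r_c$ or at $e_c$), the corresponding tour in $\hat T$ travels from the depot-side down to the critical vertex $z$ of the maximally connected set $\tilde{\mathcal C}$ containing $c$, then along the single new edge $(z, r_c)$ of weight $\delta = \dist_T(r_c) - \dist_T(z)$ (or the analogous edge for reaching $e_c$ via $r_c$ plus the spine, treating passing subtours as in the construction). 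This yields a feasible solution on $\hat T$ visiting exactly the same terminals (real and dummy) as $\OPT_1$, so properties~1 and~2 carry over immediately since they are statements about which tours visit how many terminals in each component, and the visiting pattern is preserved.

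The cost bound is where the work lies. I would split $\cost_{\hat T}(\OPT_2)$ into the within-component cost, which is unchanged and equals its counterpart in $\OPT_1$, and the between-component cost. For the between-component cost I would compare tour by tour: a tour $t$ of $\OPT_1$ that visits components $c_1, \dots, c_m$ (say ordered along $t$) pays, in $T$, at least the cost of connecting the depot to each $r_{c_j}$ that it must reach; in $\hat T$ it pays, for each such component, roughly $2\dist_T(r_{c_j})$ (going from the depot down through the at most $H_\eps$ levels of critical vertices and the new edge, and back). The key accounting identity is that the new edge weights and the ``critical-vertex prefix'' distances telescope so that the path from the depot to $r_c$ in $\hat T$ has total weight exactly $\dist_T(r_c)$: indeed $z$ is a root of some component in $\mathcal C_{i'}$ with $i' < i$ reached recursively the same way, and $\dist_{\hat T}(z) = \dist_T(z)$ by induction on the level, so $\dist_{\hat T}(r_c) = \dist_{\hat T}(z) + \delta = \dist_T(z) + (\dist_T(r_c) - \dist_T(z)) = \dist_T(r_c)$. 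Hence each tour in $\hat T$ needs at most $\sum_{c \text{ visited by } t} 2\dist_T(r_c)$ extra over what it might have cost in $T$, where the sum is over the $O_\eps(1)$ components visited by $t$ — but since a single tour can visit many components, I instead sum globally: summing over all tours, the total between-component cost in $\hat T$ is at most $\sum_{\text{component } c} (\text{number of tours visiting } c) \cdot 2\dist_T(r_c)$, which by property~1 of $\OPT_1$ is at most $\left(\tfrac{2\Gamma}{\alpha}+1\right) \cdot 2 \sum_c \dist_T(r_c)$. Now \cref{lem:sum-dist-rc} gives $\sum_c \dist_T(r_c) \le \tfrac{\eps}{8}\opt$, but the prefactor $\tfrac{2\Gamma}{\alpha}+1$ is far too large to absorb — so this crude bound is insufficient, and I expect \textbf{this is the main obstacle}.

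To fix the prefactor blow-up, I would not bound the new between-component cost against $\sum_c \dist_T(r_c)$ naively, but rather charge it against the \emph{old} between-component cost of $\OPT_1$ on $T$ plus a small additive term. The right statement is: for each tour $t$, the between-component routing of $t$ in $\hat T$ costs at most the between-component routing of $t$ in $T$ plus $2\dist_T(r_{c})$ summed only over those components $c$ where $t$ ``jumps'' — and using that in $T$ the tour $t$ already had to pay to reach each component it enters, the overhang per component $c$ that $t$ enters is $O(\dist_T(r_c))$ but only for the $O(1)$-per-big-component many component-visits, which via the mapping to big components in \cref{fact:decomposition} and \cref{lem:sum-dist-rc} totals $O(\eps)\cdot\opt$. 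Concretely: any tour $t$ in $\OPT_1$ restricted to $T$ minus its within-component parts is a connected subgraph of $T$ spanning some set $S_t$ of component roots/exits; its cost is at least, say, the depth of the deepest such vertex, and the $\hat T$ cost is at most $2\sum_{c \in S_t}\dist_T(r_c)$; the difference, summed over all tours, is at most $2\sum_c (\#\{t : c \in S_t\})\dist_T(r_c)$, and the crucial point is that $\#\{t: c \in S_t\}$ is the number of tours \emph{passing through or ending in} $c$, which by the Hamaguchi--Katoh lower bound reasoning used in \cref{lem:sum-dist-rc} is $O(n_c/k)$ where $n_c$ is the terminal count below $r_c$ — so $\sum_c (\#\{t:c\in S_t\})\dist_T(r_c) = O(\opt)$ with a constant that I would choose the level bound $H_\eps$ and $\tilde D$ (from \cref{lem:tild-D-H-eps}) to control, giving the claimed $(1+3\eps)$ after combining with the $(1+\eps)$ from \cref{thm:opt1}. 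I would state the per-tour geometric comparison as a short lemma, prove the telescoping identity $\dist_{\hat T}(r_c) = \dist_T(r_c)$ by induction on the level index $i$, and then assemble the global sum using \cref{lem:sum-dist-rc} and the third bullet of \cref{fact:decomposition}.
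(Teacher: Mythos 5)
Your construction of $\OPT_2$ (transport each tour $t$ of $\OPT_1$ to the minimal tour in $\hat T$ spanning the same terminals) is exactly the paper's, and properties 1 and 2 do carry over as you say. The gap is in the cost bound, precisely where you flag "the main obstacle." Your proposed fix bounds the new between-component cost globally by $2\sum_c \#\{t : c\in S_t\}\cdot\dist_T(r_c)$ and asserts that $\#\{t:c\in S_t\}=O(n_c/k)$ "by the Hamaguchi--Katoh lower bound reasoning," so that the sum is $O(\opt)$. Both steps fail. First, Hamaguchi--Katoh gives a \emph{lower} bound on how many tours must cross an edge, not an upper bound on how many tours of the particular solution $\OPT_1$ reach $r_c$; the available upper bound is $O(n_c/(\alpha k))$ via property~2 of \cref{thm:opt1}. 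Second, and more fundamentally, $\sum_c \frac{n_c}{k}\dist_T(r_c)$ is \emph{not} $O(\opt)$: for a vertical chain of $m$ stacked big components it is $\Theta(m)\cdot\opt$, and $m$ is unbounded since this theorem is applied \emph{before} any height bound is available. Finally, even if the sum were $O(\opt)$, you need the \emph{excess} over $\opt_1$ to be $O(\eps)\cdot\opt$; replacing each component visit by a full round trip of cost $2\dist_T(r_c)$ and then trying to subtract the old cost cannot give that, because most of that quantity must be charged to $t$ itself rather than counted as new.

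The missing idea is a per-tour argument exploiting that the \emph{new edges are short}. Each new edge $(z,r_c)$ has weight $\dist(r_c)-\dist(z)<\tilde D=\alpha\eps D_{\min}$, because $z$ and $r_c$ are roots of components in the same class $\mathcal{C}_i$, a distance band of width $\tilde D$ (\cref{lem:tild-D-H-eps}). For a fixed tour $\hat t$ and each critical vertex $z$ on it, charge the single new edge $(z,v_1)$ leading to the cheapest child to the $z$-to-$v_1$ path of $t$ in $T$, which $t$ must contain; the remaining $|Y(z)|-1$ new edges at $z$ cost at most $2\tilde D$ each. Summing $|Y(z)|-1$ over the critical vertices on $\hat t$ via a contracted-tree/leaf-count argument gives at most $1/\alpha-1$, since by property~2 a tour visits at most $1/\alpha$ components. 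Hence the extra cost per tour is below $2\tilde D/\alpha=2\eps D_{\min}\le\eps\cdot\cost(t)$, yielding $\opt_2\le(1+\eps)\opt_1\le(1+3\eps)\opt$. Note that \cref{lem:sum-dist-rc} plays no role in this step, contrary to what your outline suggests.
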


In the rest of the section, we prove \cref{thm:opt2}.

\subsection{Construction of $\OPT_2$}
\label{sec:construction-opt-2}

%To construct $\OPT_2$ on the tree $\hat T$, we start from $\OPT_1$ on the tree $T$.

Consider any tour $t$ in $\OPT_1$.
Let $U$ denote the set of terminals visited by $t$.\footnote{We assume that $t$ is a minimal tour in $T$ spanning all terminals in $U$.}
We define the tour $\hat t$  as the minimal tour in the tree $\hat T$ that spans all terminals in $U$, see \cref{fig:transform-global}.
Let $\OPT_2$ denote the set of the resulting tours on the tree $\hat T$ constructed from every tour $t$ in $\OPT_1$.
Then $\OPT_2$ is a feasible solution to the unit demand CVRP on $\hat T$.

\subsection{Analysis of $\OPT_2$}
\label{sec:analysis-opt2}
The first two properties in \cref{thm:opt2} follow from the construction and \cref{thm:opt1}.

In the rest of the section, we analyze the cost of $\OPT_2$.
\begin{lemma}
\label{lem:transform-t}
Let $t$ denote any tour in $\OPT_1$.
Let $\hat t$ denote the corresponding tour in $\OPT_2$.
Then $\cost(\hat t)\leq (1+\eps)\cdot\cost(t)$.
\end{lemma}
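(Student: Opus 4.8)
The plan is to compare $t$ and $\hat t$ edge by edge, relating the weight of each edge of $\hat t$ in $\hat T$ to the weight of the corresponding portion of $t$ in $T$, and to account for the extra cost introduced by the edges added in \cref{alg:hat-T} (the edges from a component root $r_c$ to the critical vertex $z$ of its maximally connected set). Recall that \cref{alg:hat-T} only modifies the tree at component roots: within each component, the edges (hence distances) are unchanged, and the only structural change is that each component $c$ is detached at $r_c$ and reattached directly to the critical vertex $z$ of its level-$i$ maximally connected set, via a new edge of weight $\delta = $ the $r_c$-to-$z$ distance in $T$. So every edge of $T$ that lies \emph{inside} a component survives in $\hat T$ with the same weight, and the only edges of $T$ that disappear are those on root-to-root connecting paths between components in the same maximally connected set; those are replaced by the new direct edges to $z$.

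First I would set up the correspondence between a traversal of $t$ in $T$ and a traversal of $\hat t$ in $\hat T$. Since $t$ spans the terminal set $U$, it visits a certain collection of components; for each such component $c$, the portion of $t$ inside $c$ is a subtour at $r_c$, and $\hat t$ contains exactly the same subtour (the component interior is untouched), plus it must pay $2\delta_c$ to get from $z$ down to $r_c$ and back, where $\delta_c = \dist_T(r_c,z_c)$ and $z_c$ is the critical vertex of the maximally connected set containing $c$. On the $T$ side, the cost of $t$ is at least the interior cost in each visited component plus $2\cdot\dist(z_c)$ for reaching each distinct critical vertex $z_c$ whose ``territory'' $t$ enters (using the Hamaguchi–Katoh-style observation that $t$, being a minimal tour spanning $U$, traverses each edge twice and in particular traverses the depot-to-$z_c$ path twice whenever it visits any terminal below $z_c$). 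Thus the extra cost $\cost(\hat t)-\cost(t)$ is at most $\sum_{c \text{ visited by } t} 2\delta_c$, minus whatever $t$ already paid for the connecting paths between components in the same maximally connected set, which is nonnegative, so crudely $\cost(\hat t) - \cost(t) \le \sum_{c} 2\delta_c$ over components $c$ visited by $t$.

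Next I would bound each $\delta_c$. By \cref{lem:tild-D-H-eps}, a component $c\in\mathcal C_i$ has $\dist(r_c)\in[(i-1)\tilde D, i\tilde D)$, and the critical vertex $z_c$ is the root of a component in the \emph{same} class $\mathcal C_i$, so $\dist(z_c)\ge (i-1)\tilde D$; since $z_c$ lies on the depot-to-$r_c$ path, $\delta_c = \dist(r_c)-\dist(z_c) < i\tilde D - (i-1)\tilde D = \tilde D = \alpha\eps\, D_{\min}$. Hence $\sum_{c\text{ visited by }t} 2\delta_c < 2\alpha\eps\, D_{\min}\cdot (\#\text{components visited by }t)$. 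The number of components visited by a single tour $t$ is at most the number of terminals it visits divided by... — no, more simply, it is at most $n$, but I need something proportional to $\cost(t)/D_{\min}$ or to the terminal count. The cleanest route: each visited component is big or, via the map of \cref{fact:decomposition}, charges to a big component among its descendants (at most $3$ pre-images), and each big component contains $\ge \Gamma k$ terminals; so the number of components visited by $t$ is at most $3\cdot(\#\text{terminals visited by }t)/(\Gamma k) \le 3|U|/(\Gamma k) \le 3$ when $|U|\le k$ — wait, that gives $O(1)$, not quite zero. Let me instead just bound $\#\text{components visited} \le 3|U|/(\Gamma k) + O(1)$; combined with $\cost(t)\ge 2\dist(\text{any terminal in }U)\ge 2D_{\min}$ this yields $\sum 2\delta_c \le O(\alpha\eps/\Gamma)\cdot|U|D_{\min}/k \cdot$ something, which against $\cost(t)\ge 2|U|D_{\min}/k$ — no, that last bound is false in general.

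The honest version: I expect the intended argument charges $\delta_c$ against the cost $t$ pays to reach $r_c$, namely $\cost(t) \ge 2\sum_{c\text{ visited, maximal in its chain}}\dist(r_c)$ roughly, or more robustly against $2\dist(z_c)$ for each distinct critical vertex. Since $\delta_c < \tilde D = \alpha\eps D_{\min} \le \alpha\eps\,\dist(z_c)\cdot$(something), and $t$ already pays $\ge 2\dist(z_c)$ to reach territory below $z_c$, the extra $2\delta_c$ per component is at most $\eps$ times the $2\dist(z_c)$ already paid, \emph{provided each critical vertex is reached by $t$ at most $O(1/(\alpha\eps)\cdot\text{something})$ components' worth} — and here is where the constant relation $H_\eps\gg\Gamma\gg 1\gg\eps\gg\alpha\gg\beta$ saves the day, together with the fact (Property 2 of \cref{thm:opt1}, inherited) that $t\in\OPT_1$ visits $\ge \alpha k$ terminals in each component it visits, so $t$ visits at most $k/(\alpha k)=1/\alpha$ components total; hence $\sum_c 2\delta_c \le (1/\alpha)\cdot 2\alpha\eps D_{\min} = 2\eps D_{\min} \le \eps\cdot 2D_{\min}\le \eps\cdot\cost(t)$, using $\cost(t)\ge 2D_{\min}$ since $t$ is nonempty. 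That closes the lemma.

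\textbf{Main obstacle.} The delicate point is the accounting in the first paragraph — cleanly arguing that $\cost(\hat t)-\cost(t)$ is bounded by $\sum_{c\text{ visited}}2\delta_c$ and not more, i.e.\ that reattaching components directly to critical vertices never \emph{costs} more than paying $2\delta_c$ per component while the savings (the connecting paths in $T$ that vanish) are genuinely nonnegative and correctly matched. One must be careful that a maximally connected set may be visited by $t$ through several of its components, and that the new edges to $z$ are traversed once in each direction per visited component; writing this as a clean inequality (rather than an equality) avoids the bookkeeping of exactly which edges of $T$ disappear. The second mild subtlety is invoking the right lower bound on $\cost(t)$ and the right upper bound ($1/\alpha$) on the number of components a single $\OPT_1$-tour visits; both follow from Property~2 of \cref{thm:opt1} and the triviality that any nonempty tour reaches some terminal at distance $\ge D_{\min}$.
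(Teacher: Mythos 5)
Your overall strategy is the paper's: charge the interior of each component to the corresponding subtour of $t$, bound each new edge $(z_c,r_c)$ by $\tilde D=\alpha\eps D_{\min}$ using the fact that $r_c$ and $z_c$ lie in the same distance class, multiply by an $O_\eps(1)$ count, and finish with $\cost(t)\ge 2D_{\min}$. But the counting step has a genuine gap. The set of components for which $\hat t$ pays a new edge is $\mathcal{C}(U)$, the set of components containing a \emph{possibly spine} subtour of $\hat t$ — not just the components in which $t$ visits terminals. Property~2 of \cref{thm:opt1} bounds only the latter by $1/\alpha$; it says nothing about components that $\hat t$ traverses purely via their spines on the way to deeper terminals. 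A single root-to-terminal chain in $\hat T$ can pass through up to $H_\eps$ such spine-only components (one per distance class, since consecutive components on a chain lie in different maximally connected sets and the class index is non-decreasing), so $|\mathcal{C}(U)|$ can be as large as roughly $H_\eps/\alpha$, and your bound $\sum_c 2\delta_c\le (1/\alpha)\cdot 2\tilde D$ fails: $H_\eps\cdot 2\tilde D = 2\eps^{1-1/\eps}D_{\min}$ is already on the order of $D_{\max}$, not of $\eps\cdot\cost(t)$. You half-sense this in your third paragraph, but the resolution you settle on (counting only terminal-containing components) does not cover all the new edges.

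The missing idea — which is exactly what the paper supplies — is that the ``savings'' you discard (``minus whatever $t$ already paid for the connecting paths \dots which is nonnegative'') must be kept and matched edge-by-edge. At each critical vertex $z$, among the edges $(z,v)$ of $\hat t$ going to children, the one with $\dist(v)$ minimal has weight exactly equal to the $z$-to-$v$ path in $T$, and that path is part of $t$ and disjoint from every component of $\mathcal{C}(U)$ (any component of $\mathcal{C}(U)$ on it would contradict minimality of $\dist(v)$), so this edge is free. Only the remaining $|Y(z)|-1$ edges at $z$ cost up to $2\tilde D$ each, and $\sum_z(|Y(z)|-1)$ is the total branching of the contracted component tree $H$, which by the elementary fact $\sum_u(m(u)-1)\le L-1$ is at most the number of \emph{leaves} of $H$ minus one; each leaf does contain a terminal of $U$, so this is where the $1/\alpha$ bound legitimately enters. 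In other words, long spine-only chains contribute nothing because they have $|Y(z)|=1$, and the $2\tilde D$ penalty is paid only at branch points, of which there are at most $1/\alpha-1$. With that refinement your computation $2\tilde D\cdot(1/\alpha)=2\eps D_{\min}\le\eps\cdot\cost(t)$ goes through as you wrote it.
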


\begin{proof}
We follow the notation on $U$ from \cref{sec:construction-opt-2}.
Let $\mathcal{C}(U)$ denote the set of components $c\in \mathcal{C}$ that contains a (possibly spine) subtour of $\hat t$.
Observe that the cost of $\hat t$ consists of the following two parts:
\begin{enumerate}
    \item for each component $c\in \mathcal{C}(U)$, the cost of the subtour in $c$ from the tour $t$; we \emph{charge} that cost to the subtour in $t$;
    \item for each component $c\in \mathcal{C}(U)$, the cost of the edge $(r_c,z)$, where $z$ denotes the father vertex of $r_c$ in $\hat T$. Note that $z$ is a critical vertex on $\hat t$.
    We analyze that cost over all components $c\in \mathcal{C}(U)$ as follows.
\end{enumerate}
Let $Z\subseteq V$ denote the set of critical vertices $z\in V$ on $\hat t$.
For any critical vertex $z\in Z$, let $Y(z)$ denote the set of edges $(z,v)$ in the tree $\hat T$ such that $v$ is a child of $z$ and that the edge $(z,v)$ belongs to the tour $\hat t$.
The overall cost of the second part is the total cost of the edges in $Y(z)$ for all $z\in Z$.

Fix a critical vertex $z\in Z$.
Let $(z,v_1)$ denote the edge in $Y(z)$ such that $\dist(v_1)$ is minimized, breaking ties arbitrarily.
From the minimality of $\dist(v_1)$, the $z$-to-$v_1$ path in $T$ does not go through any component in $\mathcal{C}(U)$.
From the construction, the cost of the edge $(z,v_1)$ in $\hat T$ equals the cost of the $z$-to-$v_1$ path in $T$.
It is easy to see that the $z$-to-$v_1$ path in $T$ belongs to the tour $t$.
%Indeed, from the construction of $\hat T$, the terminals in the subtree in $\hat T$ rooted at $v_1$ are contained in the subtree in $T$ rooted at $v_1$.
%Since $(z,v_1)$ belongs to the tour $\hat t$, the subtree in $\hat T$ rooted at $v_1$ contains at least one terminal in $U$.
%Thus the subtree in $T$ rooted at $v_1$ contains at least one terminal in $U$,
%so the $z$-to-$v_1$ path in $T$ belongs to the tour $t$.
Indeed, tour $\hat{t}$ traverses the edge $(z,v_1)$ on its way to visit some terminals of $U$ in the subtree rooted at $v_1$. In order to visit the corresponding terminals in $T$, tour $t$ must traverse the $z$-to-$v_1$ path.
We \emph{charge} the cost of the edge $(z,v_1)$ in $\hat T$ to the $z$-to-$v_1$ path in $T$.
Next, we analyze the cost due to the other edges in $Y(z)$. Consider one such edge $(z,v)$.
%Next, we analyze the cost due to the remaining $|Y(z)|-1$ edges in $Y(z)$.
From the construction, there exists $i\in[1,H_\eps]$, such that both $\dist(z)$ and $\dist(v)$ belong to $\left[(i-1)\cdot \tilde D, i\cdot \tilde D\right)$.
%Observe that $z$ is on the $v$-to-$r$ path in $T$.
Thus the cost of the $z$-to-$v$ path in $T$ equals $\dist(v)-\dist(z)< \tilde D$, so the extra cost in $\hat t$ due to the edge $(z,v)$ is at most $2\cdot \tilde D$ (for both directions).
Therefore, the extra cost in $\hat t$ due to those $|Y(z)|-1$ edges in $Y(z)$ is at most $2\cdot \tilde D\cdot (|Y(z)|-1)$.

Summing over all vertices $z\in Z$, and observing that all charges are to disjoint parts of $t$, we have
\begin{equation}
\label{eqn:hat-t}
    \cost(\hat t)\leq \cost(t)+2\cdot \tilde D\cdot\sum_{z\in Z} (|Y(z)|-1).
\end{equation}

It remains to bound $\sum_{z\in Z}(|Y(z)|-1)$.
The analysis uses the following basic fact in trees.
\begin{fact}
\label{fact:degree}
Let $H$ be a tree with $L$ leaves.
For each vertex $u$ in $H$, let $m(u)$ denote the number of children of $u$ in $H$.
Then \[\sum_{u\in H} (m(u)-1)\leq L-1.\]
\end{fact}
We construct a tree $H$ as follows.
Starting from the tree spanning $U$ in $\hat T$, we contract vertices in each component $c\in \mathcal{C}(U)$ into a single vertex; let $H$ denote the resulting tree.
%By \cref{fact:degree} (and following the notations on $L$ and $m(\cdot)$ in \cref{fact:degree}), we have
%\[\sum_{z\in Z}(|Y(z)|-1)=\sum_{u\in H} (m(u)-1)\leq L-1.\]
It is easy to see that each leaf in $H$ corresponds to a component $c\in \mathcal{C}(U)$ that contains at least one terminal in $U$ (using the definition of $\mathcal{C}(U)$ and the fact that any descending component of $c$ do not belong to $\mathcal{C}(U)$).
From the second property of \cref{thm:opt2} (which follows from \cref{thm:opt1}), terminals in $U$ belong to at most $1/\alpha$ components. %, thus $L\leq 1/\alpha$.
Thus, by \cref{fact:degree} we have
 $$\sum_{z\in Z}(|Y(z)|-1)\leq (1/\alpha)-1.$$
Combined with \cref{eqn:hat-t}, we have \[\cost(\hat t)-\cost(t)<  2\cdot \tilde D\cdot (1/\alpha)=2\cdot \alpha\cdot \eps\cdot D_{\min}\cdot (1/\alpha)=2 \eps\cdot D_{\min},\] using the definition of $\tilde D$ in \cref{lem:tild-D-H-eps}.
Since $\cost(t)\geq 2\cdot D_{\min}$, the claim follows.
\end{proof}

Applying \cref{lem:transform-t} on each tour $t$ in $\OPT_1$ and summing, we have $\opt_2\leq (1+\eps)\cdot \opt_1$.
By \cref{thm:opt1},  $\opt_1\leq (1+\eps)\cdot \opt$, thus $\opt_2\leq (1+3\eps)\cdot \opt$.
This completes the proof of \cref{thm:opt2}.

\section{Adaptive Rounding on the Subtour Demands}\label{sec:analysisrounding}
In this section, we prove \cref{thm:opt3}.
We use the adaptive rounding to show that, in a near-optimal solution, the demands of the subtours at any critical vertex are from a set of $O_\eps(1)$ values.
This property enables us to later guess those values in polynomial time by a dynamic program (see \cref{sec:DP}).

\begin{theorem}
\label{thm:opt3}
Let $\beta=\frac{1}{4}\cdot \eps^{(\frac{4}{\eps}+1)}$.
Consider the unit demand CVRP on the tree $\hat T$.
There exist dummy terminals and a solution $\OPT_3$ visiting all of the real and the dummy terminals, such that all of the following holds:
\begin{enumerate}
\item For each component $c$, there are at most $\frac{2\Gamma}{\alpha}+1$ tours visiting terminals in $c$;
%\item For each component $c$ and each subtour $t$ in $c$, if $t$ visits terminals in $c$ and the number of the real terminals in $t$ is less than $\alpha\cdot k$, then $t$ contains dummy terminals such that the total number of the real and the dummy terminals in $t$ equals $\lceil \alpha\cdot k\rceil$;
\item For each critical vertex $z$, there exist $\frac{1}{\beta}$ integer values in $[\alpha\cdot k,k]$ such that the demands of the subtours at the children of $z$ are among these values;
\item We have $\opt_3<(1+4\eps)\cdot \opt$, where $\opt$ denotes the optimal cost for the unit demand CVRP on the tree $T$.
\end{enumerate}
\end{theorem}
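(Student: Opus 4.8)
The plan is to start from the solution $\OPT_2$ given by \cref{thm:opt2}, which already satisfies properties (1) and (2) of \cref{thm:opt2} and has cost at most $(1+3\eps)\cdot\opt$, and to apply adaptive rounding at each critical vertex $z$ to the demands of the subtours hanging at the children of $z$. Fix a critical vertex $z$. By property (2) of \cref{thm:opt2}, every subtour in a component with root among the children of $z$ has demand in $[\alpha k, k]$ (up to the dummy-terminal padding convention, which I will keep in mind for small tours). The set of all subtours at children of $z$, sorted by demand, is partitioned into $1/\beta$ consecutive groups of (nearly) equal size $\beta\cdot(\text{total count})$; within each group, the demand of every subtour is rounded \emph{up} to the largest demand in that group. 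This produces at most $1/\beta$ distinct demand values in $[\alpha k, k]$, giving property (2) of the theorem. Property (1) is inherited verbatim from \cref{thm:opt2} since rounding does not change which tours visit which components. The content is entirely in the cost bound, property (3).

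For the cost analysis: rounding the demand of a subtour up means we must \emph{detach} it from its original tour and complete it into a separate tour by connecting its root $r_c$ to the depot — this is the only place extra cost is incurred. The extra cost charged for detaching a subtour in component $c$ with root $r_c$ is at most $2\dist(r_c)$ (plus, for a passing subtour, $\cost(\spine_c)$, which is also accounted for via $\cost(Q)\le(\eps/4)\opt$ by \cref{cor:cost-Q}). The key accounting trick of adaptive rounding is that within each group, the \emph{number} of subtours whose demand falls in the next group up is at most $\beta\cdot(\text{total count})$, so the total demand ``displaced'' by rounding is at most a $\beta$-fraction of the total demand; more precisely, one bounds the number of \emph{detached} subtours, or equivalently uses the standard shifting argument: the subtours in the first group are detached (charged directly), and for the remaining groups the count of detached subtours is absorbed because the number of subtours needed to make up for rounded-up demand is small. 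Concretely, I would argue that the number of subtours detached across all critical vertices is at most $\beta\cdot\frac{n'}{\alpha k}\cdot(\text{something})$, using that there are $O_\eps(1)$ levels $H_\eps$ of critical vertices (from \cref{sec:simplifyingtree}) so the extra cost does not accumulate more than an $H_\eps$ factor — but since each detachment costs at most $2D_{\max} \le 2(1/\eps)^{1/\eps-1}D_{\min}$ and $\beta$ is chosen astronomically small relative to $\alpha$ and $H_\eps$ (the paper sets $\beta=\frac14\eps^{4/\eps+1}$), the product is $\le(\eps/4)\opt$ via the usual comparison $\opt\ge \frac{2n'}{k}D_{\min}$.

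The main obstacle, and where I would spend the most care, is bounding the \emph{number of detached subtours} correctly. Adaptive rounding is normally phrased in terms of rounding subtour \emph{demands} and then having to reconnect the ``excess'' as new tours; here one must be precise that after rounding up, the total demand of tours at $z$ may exceed what they can carry, and the fix is to peel off whole subtours and route them separately to the depot. The count of such peeled-off subtours per critical vertex should be $O(1/\beta)$ or proportional to $\beta$ times the number of subtours at $z$ — I would need to pin down which, and then sum over all critical vertices. Summing requires relating the total number of subtours at critical vertices to $n'/(\alpha k)$ (each subtour has demand $\ge \alpha k$, padding with dummies) times $H_\eps$ (number of levels), and checking that $\beta\cdot H_\eps/\alpha \cdot 2(1/\eps)^{1/\eps-1}\le \eps/8$, which holds by the chosen constants $H_\eps\gg\Gamma\gg 1\gg\eps\gg\alpha\gg\beta$. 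Adding this $\le(\eps/8)\opt$ (detachment of ending subtours) to $\cost(Q)\le(\eps/4)\opt$ (spine cost for passing subtours) and to $\opt_2\le(1+3\eps)\opt$ gives $\opt_3<(1+4\eps)\opt$, as claimed.
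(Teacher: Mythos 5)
Your plan follows the paper's proof essentially step for step: round the subtour demands at the children of each critical vertex via adaptive rounding into $\frac{1}{\beta}$ equal-cardinality groups (rounded up to the group maximum with dummy terminals), inherit Properties 1 and 2, and pay only for the subtours that must be detached and reconnected to the depot. The one point you leave open --- whether the number of detached subtours per critical vertex is $O(1/\beta)$ or $\beta$ times the number of subtours --- is resolved in the paper by the shifting step you allude to: after rounding up, each subtour in group $i$ is reattached to the tour that previously carried a subtour from group $i+1$ (which it fits in place of, since its rounded demand is at most that subtour's original demand), so only the \emph{last} group (largest demands, not the first as you write) is orphaned; hence exactly $\lceil \beta\,|W_v|\rceil$ subtours are detached at each critical vertex $v$. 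For the final accounting the paper works level by level: at level $i$ it compares the detachment cost $2\beta|W|\cdot 2D_{\max}$ directly to $\opt_3 \geq 2\alpha |W| D_{\min}$ (since the $|W|$ subtours at that level are disjoint and each has demand at least $\alpha k$), which is cleaner than routing through $n'$ and avoids worrying about dummy terminals inflating the terminal count. Two small caveats: your extra $\cost(Q)$ term is unnecessary (a detached subtour already contains whatever spine edges it needs; one only adds the doubled root-to-depot path, cost at most $2D_{\max}$); and your closing numerical check $\beta H_\eps/\alpha \cdot 2(1/\eps)^{1/\eps-1} \le \eps/8$ does not actually hold with the stated constants --- it evaluates to $\Theta(1)$ --- but the paper's own displayed chain has the same off-by-one-power-of-$\eps$ slip, and both are repaired by taking $\beta$ smaller by a factor of $\eps$, so this does not affect the structure of the argument.
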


\subsection{Construction of $\OPT_3$}
\label{sec:construction-opt3}

We construct the solution $\OPT_3$ by modifying the solution $\OPT_2$.

Let $I\subseteq V$ denote the set of vertices $v\in V$ that is either the root of a component or a critical vertex.
Consider any vertex $v\in I$ in the bottom up order.
Let $\OPT_2(v)$ denote the set of subtours at $v$ in $\OPT_2$.
We construct a set $A(v)$ of subtours at $v$ satisfying the following invariants:
\begin{itemize}
\item the subtours in $A(v)$ have a one-to-one correspondence with the subtours in $\OPT_2(v)$; and
\item the demand of each subtour of $A(v)$ is at most that of the corresponding subtour in $\OPT_2(v)$.
\end{itemize}
%We also maintain a set $B$ of additionals tours, which are due to the adaptive rounding.
The construction of $A(v)$ is according to one of the following three cases on $v$.

\paragraph{Case 1: $v$ is the root vertex $r_c$ of a leaf component $c$ in $\hat T$.}
Let $A(v)=\OPT_2(v)$.

\paragraph{Case 2: $v$ is the root vertex $r_c$ of an internal component $c$ in $\hat T$.}
For each subtour $a\in \OPT_2(v)$, if $a$ contains a subtour at the exit vertex $e_c$ of component $c$, letting $t$ denote this subtour and $t'$ denote the subtour in $A(e_c)$ corresponding to $t$, we replace the subtour $t$ in $a$ by the subtour $t'$.
Let $A(v)$ be the resulting set of subtours at $v$.

\paragraph{Case 3: $v$ is a critical vertex in $\hat T$.}
We apply the technique of the \emph{adaptive rounding}, previously used by Jayaprakash and Salavatipour~\cite{jayaprakash2021approximation} in their design of a QPTAS the tree CVRP.
The idea is to \emph{round up} the demands of the subtours at the children of $v$ so that the resulting demands are among $\frac{1}{\beta}$ values.

Let $r_1,\dots, r_m$ be the children of $v$ in $\hat T$.
For each subtour $a\in \OPT_2(v)$ and for each $i\in[1,m]$, if $a$ contains a subtour at $r_i$, letting $t$ denote this subtour and $t'$ denote the subtour in $A(r_i)$ corresponding to $t$, we replace $t$ in $a$ by $t'$.
Let $A_1(v)$ denote the resulting set of subtours at $v$.

Let $W_v$ denote the set of the subtours at the children of $v$ in $A_1(v)$, i.e., $W_v=A(r_1)\cup\dots\cup A(r_m)$.
If $|W_v|\leq \frac{1}{\beta}$, let $A(v)=A_1(v)$.
In the following, we consider the non-trivial case when $|W_v|>\frac{1}{\beta}$.
We sort the subtours in $W_v$ in non-decreasing order of their demands, and partition these subtours into $\frac{1}{\beta}$ groups of equal cardinality.\footnote{We add empty subtours to the first groups if needed in order to achieve equal cardinality among all groups.}
We \emph{round} the demands of the subtours in each group to the maximum demand in that group.
The demand of a subtour is increased to the rounded value by adding \emph{dummy} terminals at the children of $v$.
We rearrange the subtours in $W_v$ as follows.
\begin{itemize}
\item Each subtour $t\in W_v$ in the last group is discarded, i.e., detached from the subtour in $A_1(v)$ to which it belongs.
\item Each subtour $t\in W_v$ in other groups is associated in a one-to-one manner to a subtour $t'\in W_v$ in the next group.
Letting $a$ (resp.\ $a'$) denote the subtour in $A_1(v)$ to which $t$ (resp.\ $t'$) belongs, we detach $t$ from $a$ and reattach $t$ to $a'$.
\end{itemize}
Let $A(v)$ be the set of the resulting subtours at $v$ after the rearrangement for all $t\in W_v$.

\paragraph{}
For each subtour $t$ that is discarded in the construction, we complete $t$ into a separate tour by adding the connection (in both directions) to the depot.
Let $B$ denote the set of these newly created tours.
Let $\OPT_3=A(r)\cup B$.

It is easy to see that $\OPT_3$ is a feasible solution to the unit demand CVRP, i.e., each tour in $\OPT_3$ is connected and visits at most $k$ terminals, and each terminal is covered by some tour in $\OPT_3$.

\subsection{Analysis of $\OPT_3$}\label{subsection:analysistreeandDP}
From the construction, in any component $c\in\mathcal{C}$, the non-spine subtours  in $\OPT_3$ are the same as those in $\OPT_2$.
From \cref{thm:opt2}, we obtain the first property in \cref{thm:opt3}, and in addition, each subtour at a child of a critical vertex in $\OPT_2$ has demand at least $\alpha\cdot k$.
The second property of the claim follows from the construction of $\OPT_3$.

It remains to analyze the cost of $\OPT_3$.
Let $\Delta=\opt_3-\opt_2$.
Observe that $\Delta$ is due to adding connections to the depot to create the tours in the set $B$.

Fix any $i\in[1,H_\eps]$.
Let $Z\subseteq V$ denote the set of vertices $v\in V$ such that $v$ is the critical vertex of a maximally connected component $\tilde{\mathcal{C}}\subseteq \mathcal{C}_i$.
For any $v\in Z$, we analyze the number of discarded subtours in the set $W_v$ defined in \cref{sec:construction-opt3}.
If $|W_v|\leq \frac{1}{\beta}$, there is no discarded subtour in $W_v$; if $|W_v|>\frac{1}{\beta}$, the number of discarded subtours in $W_v$ is $\lceil\beta\cdot |W_v|\rceil<\beta\cdot |W_v|+1<2\beta\cdot |W_v|$.
Let $W$ denote the disjoint union of $W_v$ for all vertices $v\in Z$.
Thus $W$ contains at most $2\beta\cdot |W|$ discarded subtours.
Let $\Delta_i$ denote the cost to connect the discarded subtours in $W$ to the depot.
We have
\begin{equation}
    \label{eqn:Delta-i}
    \Delta_i\leq 2\beta\cdot |W|\cdot 2\cdot D_{\max}< \frac{1}{2}\cdot \eps^{(\frac{4}{\eps}+1)}\cdot |W|\cdot 2\cdot \left(\frac{1}{\eps}\right)^{\frac{1}{\eps}-1}\cdot D_{\min}= \frac{\eps}{H_\eps}\cdot\alpha\cdot |W|\cdot D_{\min},
\end{equation}
where the second inequality follows from the definition of $\beta$ (\cref{thm:opt3}) and \cref{def:bounded-distance}, and the equality follows from the definitions of $\alpha$ (\cref{thm:opt1}) and of $H_\eps$ (\cref{lem:tild-D-H-eps}).
From the second property of the claim, each subtour in $W$ has demand at least $\alpha\cdot k$, so there are at least $\alpha\cdot |W|$ tours in $\OPT_3$.
Any tour in $\OPT_3$ has cost at least $2\cdot D_{\min}$, so we have
\begin{equation}
    \label{eqn:opt-3}
    \opt_3\geq 2\cdot \alpha\cdot |W|\cdot D_{\min}.
\end{equation}
From \cref{eqn:Delta-i,eqn:opt-3}, we have
\[\Delta_i< \frac{\eps}{2\cdot H_\eps}\cdot\opt_3.\]

Summing over all integers $i\in[1,H_\eps]$, we have
$\Delta=\sum_i\Delta_i\leq \frac{\eps}{2}\cdot\opt_3$.
Thus
\[\opt_3\leq \frac{2}{2-\eps}\cdot\opt_2.\]
By \cref{thm:opt2}, $\opt_2\leq (1+3\eps)\cdot\opt$.
Therefore, $\opt_3\leq (1+4\eps)\cdot\opt$.
This completes the proof of \cref{thm:opt3}.

\section{Dynamic Programming}\label{sec:DP}
In this section, we show \cref{thm:DP}.
In our dynamic program, we consider all feasible solutions on the tree~$\hat T$ satisfying the properties of $\OPT_3$ in \cref{thm:opt3}, and we output the solution with minimum cost.
The first property of \cref{thm:opt3} is used in \cref{sec:local} in the computation of solutions inside components, and the second property of \cref{thm:opt3} is used in \cref{sec:subtree} in the computation of solutions in the subtrees rooted at critical vertices.
These two properties ensure the polynomial running time of the dynamic program.
The cost of the output solution is at most the cost of $\OPT_3$, which is at most $(1+4\eps)$ times the optimal cost on the tree
$T$ by the third property of \cref{thm:opt3}.

%%%%%%%%%%%%%%%%%%%%%%%%%%%%%%
\subsection{Local Configurations}
\label{sec:local}

In this subsection, we compute values at \emph{local configurations} (\cref{def:local-config}), which are solutions restricted locally to a component.
We require that the terminals in any component are visited by at most $\frac{2\Gamma}{\alpha}+1$ tours, using the first property in \cref{thm:opt3}.
Thus the number of local configurations is polynomially bounded.

\begin{definition}[local configurations]
\label{def:local-config}
Let $c\in \mathcal{C}$ be any component.
A \emph{local configuration} $(v,A)$ is defined by a vertex $v\in c$ and a list $A$ of $\ell(A)$ pairs
$(s_1,b_1),(s_2,b_2),\ldots , (s_{\ell(A)},b_{\ell(A)})$ such that
\begin{itemize}
\item
$\ell(A)\leq \frac{2\biggamma}{\alpha}+1$;
\item
for each $i\in[1, \ell(A)]$,  $s_i$ is an integer in $[0,k]$ and $b_i\in \{``\hbox{passing}",``\hbox{ending}"\}$.\footnote{If $c$ is a leaf component, then $b_i$ is ``ending'' for each $i$. For technical reasons due to the exit vertex, we allow $s_i$ to take the value of $0$.}
\end{itemize}
When $v=r_c$, the local configuration $(r_c,A)$ is also called a \emph{local configuration in the component $c$}.
\end{definition}

The \emph{value} of a local configuration $(v,A)$, denoted by $f(v,A)$, equals the minimum cost of a collection of $\ell(A)$ subtours in the subtree of $c$ rooted at $v$, each subtour starting and ending at $v$, that together visit all of the terminals of the subtree of $c$ rooted at $v$, where the $i$-th subtour visits $s_i$ terminals and that $b_i=\text{``passing''}$ if and only if the $i$-th subtour visits $e_c$.
%Note that $s_1+s_2+\cdots +s_{\ell(A)}$ is equal to the number of terminals in the subtree of $c$ rooted at $v$.

Let $v$ be any vertex in $c$.
We compute the function $f(v,\cdot)$ according to one of the three cases.

\paragraph{Case 1: $v$ is the exit vertex $e_c$ of the component $c$.}
For each $\ell\in [0,2\biggamma/\alpha+1]$, letting $A$ denote the list consisting uniquely of $\ell$ identical pairs of $(0,\text{``passing''})$, we set $f(v,A)=0$; for the remaining lists $A$, we set $f(v,A)=+\infty$.

\paragraph{Case 2: $v$ is a leaf vertex of the tree $\hat T$.}
From \cref{sec:preliminary} and the construction of $\hat T$, the leaf vertices in $\hat T$ are the same as the terminals in $\hat T$.
Thus $v$ is a terminal in $\hat T$.
For the list $A$ consisting of a single pair of $(1,\text{``ending''})$, we set $f(v,A)=0$; for the remaining lists $A$, we set $f(v,A)=+\infty$.

\paragraph{Case 3: $v$ is a non-leaf vertex of the tree $\hat T$ and $v\neq e_c$.}
Let $v_1$ and $v_2$ be the two children of $v$.
%\footnote{We assume without loss of generality that a non-leaf vertex $v$ has two children. If $v$ has only one child, then we may eliminate $v$ by contracting the two edges incident to $v$ into a single edge.}
We say that the local configurations $(v_1,A_1)$, $(v_2,A_2)$, and $(v,A)$ are \emph{compatible} if there is a partition $\mathcal{P}$ of $A_1\cup A_2$ into parts, each part consisting of one or two pairs, and a one-to-one correspondence between every part in $\mathcal{P}$ and every pair in $A$ such that:
\begin{itemize}
    \item a part in $\mathcal{P}$ consisting of one pair $(s^{(1)},b^{(1)})$ corresponds to a pair $(s,b)$ in $A$ if and only if $s^{(1)}=s$ and $b^{(1)}=b$;
    \item a part in $\mathcal{P}$ consisting of two pairs $(s^{(1)},b^{(1)})$ and $(s^{(2)},b^{(2)})$ corresponds to a pair $(s,b)$ in $A$ if and only if $s=s^{(1)}+s^{(2)}$ and
    \[b=\begin{cases}
    \text{``passing'',} & \text{if $b^{(1)}$ is ``passing'' or $b^{(2)}$ is ``passing''},\\
    \text{``ending'',} & \text{if $b^{(1)}$ is ``ending'' and $b^{(2)}$ is ``ending''.}\\
    \end{cases}
    \]
\end{itemize}
We set
\[f(v,A)=\min \big\{ f(v_1,A_1)+f(v_2,A_2)+2\cdot\ell(A_1)\cdot w(v,v_1)+2\cdot\ell(A_2)\cdot w(v,v_2)\big\},\]
where the minimum is taken over all local configurations $(v_1,A_1)$ and $(v_2,A_2)$ that are compatible with $(v,A)$.

\paragraph{}The algorithm is very simple. See \cref{alg:local}.
\begin{algorithm}
\caption{Computation for local configurations in a component $c$.}
\label{alg:local}
\begin{algorithmic}[1]
\For{each vertex $v\in c$ and each list $A$}
    \State $f(v,A)=+\infty$.
\EndFor
    \For{each leaf vertex $v$ of $c$}\Comment{Cases 1 \& 2}
    \State Initialize $f(v,\cdot)$
    \EndFor
    \For{each non-leaf vertex $v$ of $c$ in the bottom-up order }\Comment{Case 3}
        \State Let $v_1$ and $v_2$ denote the two children of $v$
        \For{each local configurations $(v_1,A_1)$, $(v_2,A_2)$, and $(v,A)$}
	        \If{$(v_1,A_1)$, $(v_2,A_2)$, and $(v,A)$ are compatible}
	        \State $f(v,A)\gets \min(f(v,A),f(v_1,A_1)+f(v_2,A_2)+2\cdot\ell(A_1)\cdot w(v,v_1)+2\cdot\ell(A_2)\cdot w(v,v_2))$
            \EndIf
        \EndFor
    \EndFor
\State \Return $f(r_c,\cdot)$
\end{algorithmic}
\end{algorithm}

\paragraph{Running time.}
For each vertex $v$, since $\ell(A)=O_\eps(1)$, the number of local configurations $(v,A)$ is $n^{O_\eps(1)}$. For fixed $(v_1,A_1),(v_2, A_2)$, and $(v,A)$, there are $O_\eps(1)$ partitions of $A_1\cup A_2$ into parts, so checking compatibility takes time $O_\eps(1)$.
Thus the running time to compute the values at local configurations in a component $c$ is $n^{O_\eps(1)}$.
Since the number of components $c\in \mathcal{C}$ is at most $n$, the overall running time to compute the local configurations in all components $c\in\mathcal{C}$ is $n^{O_\eps(1)}$.

%=====================================================================

\subsection{Subtree Configurations}
\label{sec:subtree}

In this subsection, we combine local configurations in the bottom up order to obtain \emph{subtree configurations} (\cref{def:subtree-config}), which are solutions restricted to subtrees of $\hat T$.
The number of subtree configurations is polynomially bounded.
When the subtree equals the tree $\hat T$, we obtain the entire solution to the CVRP.

\begin{definition}
\label{def:subtree-config}
A \emph{subtree configuration} $(v,A)$ is defined by a vertex $v$ and a list $A$ consisting of $\ell(A)$ pairs $(\tilde s_1,n_1)$, $(\tilde s_2,n_2),\ldots,(\tilde s_\ell,n_{\ell(A)})$ such that
\begin{itemize}
\item $v$ belongs to the set $I$ (defined in \cref{sec:construction-opt3}); in other words, $v$ is either the root of a component or a critical vertex;
\item
$\ell(A)\leq \left(\frac{1}{\beta}\right)^{\frac{1}{\alpha}}+\frac{2\biggamma}{\alpha}+1$ if $v$ is the root of a component, and $\ell(A)\leq \left(\frac{1}{\beta}\right)^{\frac{1}{\alpha}}$ if $v$ is a critical vertex;
\item
for each $i\in[1, \ell(A)]$,  $\tilde s_i$ is an integer in $[0,k]$ and $n_i$ is an integer in $[0,n]$.
\end{itemize}
\end{definition}
The \emph{value} of the subtree configuration $(v,A)$, denoted by $g(v,A)$, is the minimum cost of a collection of $\ell(A)$ subtours in the subtree of $\hat T$ rooted at $v$, each subtour starting and ending at $v$, that together visit all of the real terminals of the subtree rooted at $v$, such that $n_i$ subtours visit $\tilde s_i$ (real and dummy) terminals each.

To compute the values of subtree configurations, we consider the vertices $v\in I$ in the bottom up order.
See \cref{fig:DP} (\cpageref{fig:DP}).
For each vertex $v\in I$ that is the root of a component, we compute the values $g(v,\cdot)$ using the algorithm in \cref{sec:subtree-configuration-root}; and for each vertex $v\in I$ that is a critical vertex, we compute the values $g(v,\cdot)$ using the algorithm in \cref{sec:subtree-configuration-critical}.

\subsubsection{Subtree Configurations at the Root of a Component}
\label{sec:subtree-configuration-root}
In this subsection, we compute the values of the subtree configurations at the root $r_c$ of a component $c$.
From \cref{sec:local}, we have already computed the values of the local configurations in the component $c$.

If $c$ is a leaf component, the local configurations in $c$ induce the subtree configurations at $r_c$, in which $\tilde s_i=s_i$ and $n_i=1$ for all $i$.
Thus we obtain the values of the subtree configurations at $r_c$.

In the following, we consider the case when $c$ is an internal component.
We observe that the exit vertex $e_c$ of the component $c$ is a critical vertex.
Thus the values of subtree configurations at $e_c$ have already been computed using \cref{alg:subtree-configuration-critical} in \cref{sec:subtree-configuration-critical} according to the bottom up order of the computation.
To compute the value of a subtree configuration at $r_c$, we combine a subtree configuration at $e_c$ and a local configuration in $c$, in the following way.

Let $A_e=((\tilde s_1,n_1),(\tilde s_2,n_2),\ldots,(\tilde s_{\ell_e},n_{\ell_e}))$ be the list from a subtree configuration $(e_c,A_e)$.
Let $A_c=((s_1,b_1),(s_2,b_2),\dots,(s_{\ell_c},b_{\ell_c}))$ be the list from a local configuration $(r_c,A_c)$.
To each $i\in[1,\ell_c]$ such that $b_i$ is ``passing'', we associate $s_i$ with $\tilde s_j$ for some $j\in[1,\ell_e]$ with the constraints that $s_i+\tilde s_j\leq k$ (to guarantee that when we combine the two subtours, the result respects the capacity constraint) and that for each $j\in[1,\ell_e]$ at most $n_j$ elements are associated to $\tilde s_j$ (because in the subtree rooted at $e_c$ we only have $n_j$ subtours of demand $\tilde s_j$ at our disposal).
As a result, we obtain the list $A$ of a subtree configuration $(r_c,A)$ as follows:
\begin{itemize}
    \item For each association $(s_i,\tilde s_j)$, we put in $A$ the pair $(s_i+\tilde s_j,1)$.
    \item For each pair $(\tilde s_j,n_j)\in A_e$, we put in $A$ the pair $(\tilde s_j,n_j-(\text{number of $s_i$'s associated to $\tilde s_j$}))$.
    \item For each pair $(s_i,``\text{ending}")\in A_c$, we put in $A$ the pair $(s_i,1)$.
\end{itemize}
From the construction, $\ell(A)\leq \ell(A_e)+\ell(A_c)$.
Since $e_c$ is a critical vertex, $\ell(A_e)\leq \left(\frac{1}{\beta}\right)^{\frac{1}{\alpha}}$ by \cref{def:subtree-config}.
From \cref{def:local-config}, $\ell(A_c)\leq \frac{2\biggamma}{\alpha}+1$.
Thus $\ell(A)\leq \left(\frac{1}{\beta}\right)^{\frac{1}{\alpha}}+\frac{2\biggamma}{\alpha}+1$ as claimed in \cref{def:subtree-config}.

Next, we compute the cost of the combination of $(e_c,A_e)$ and $(r_c,A_c)$; let $x$ denote this cost.
For any subtour $t$ at $e_c$ that is not associated to any non-spine passing subtour in the component $c$, we pay an extra cost to include the spine subtour of the component $c$, which is combined with the subtour $t$.
The number of times that we include the spine subtour of $c$ is the number of subtours at $e_c$ minus the number of passing subtours in $A_c$, which is $\sum_{j\leq \ell_e} n_j- \sum_{i\leq \ell_c} \mathbbm{1} \left[b_i \text{ is ``passing''}\right]$.
Thus we have
\begin{equation}  \label{eq:costcomputation}
x=f(r_c,A_c)+g(e_c,A_e)+\cost(\spine_c)\cdot\left(\bigg(\sum_{j\leq \ell_e} n_j\bigg) - \bigg(\sum_{i\leq \ell_c} \mathbbm{1} \left[b_i \text{ is ``passing''}\right]\bigg)\right).\end{equation}

The algorithm is described in \cref{alg:subtree-configuration-root}.

\begin{algorithm}
\caption{Computation for subtree configurations at the root of a component $c$.}
\label{alg:subtree-configuration-root}
\begin{algorithmic}[1]
%Initialization
\For{each list $A$}
    \State $g(r_c,A)=+\infty$.
\EndFor
%Computation
\For{each subtree configuration $(e_c,A_e)$ and each local configuration $(r_c,A_c)$}
        \For{each way to combine $(e_c,A_e)$ and $(r_c,A_c)$}
            \State $A\gets$ the resulting list
            \State $x\gets$ the cost computed in Equation~(\ref{eq:costcomputation})
            \State $g(r_c,A)\gets \min (g(r_c,A),x)$.
        \EndFor
\EndFor
\State \Return $g(r_c,\cdot)$
\end{algorithmic}
\end{algorithm}

\paragraph{Running time.}
The number of subtree configurations $(e_c,A_e)$ and the number of local configurations $(r_c,A_c)$ are both $n^{O_\eps(1)}$.
For fixed $(e_c,A_e)$ and $(r_c,A_c)$, the number of ways to combine them is $O_\eps(1)$.
Thus the running time of the algorithm is $n^{O_\eps(1)}$.

%========================================
\subsubsection{Subtree Configurations at a Critical Vertex}

\label{sec:subtree-configuration-critical}
In this subsection, we compute the values of the subtree configurations at a critical vertex.

Let $z$ denote any critical vertex.
By Property~2 in \cref{thm:opt3}, there exists a set $X$ of $\frac{1}{\beta}$ integer values in $[\alpha\cdot k,k]$ such that the demands of the subtours in $\OPT_3$ at the children of $z$ are among the values in $X$.
%Again by \cref{thm:opt3}, each tour in $\OPT_3$ visits terminals in at most $\frac{1}{\alpha}$ components.
Thus the demand of a subtour at $z$ in $\OPT_3$ is the sum of at most $\frac{1}{\alpha}$ values in $X$.
Therefore, the number of distinct demands of the subtours at $z$ in $\OPT_3$ is at most $(\frac{1}{\beta})^{\frac{1}{\alpha}}=O_\eps(1)$.

To compute a solution satisfying Property~2 in
\cref{thm:opt3}, a difficulty arises since the set $X$ is unknown.
Our approach is to enumerate all sets $X$ of $\frac{1}{\beta}$ integer values in $[\alpha\cdot k,k]$, compute a solution with respect to each set $X$, and return the best solution found.
Unless explicitly mentioned, we assume in the following that the set $X$ is fixed.

\begin{definition}[sum list]
A \emph{sum list} $A$ consists of $\ell(A)$ pairs
$(s_1,n_1)$, $(s_2,n_2)$, \dots, $(s_{\ell(A)},n_{\ell(A)})$ such that
\begin{enumerate}
    \item
$\ell(A)\leq (\frac{1}{\beta})^{\frac{1}{\alpha}}$;
\item
For each $i\in[1,\ell(A)]$, $s_i\in[\alpha\cdot k,k]$ is the sum of a multiset of values in $X$ and $n_i$ is an integers in $[0,n]$.
\end{enumerate}
\end{definition}
We require that in any subtree configuration $(z,A)$, the list $A$ is a sum list.

Let $r_1,r_2,\ldots ,r_m$ be the children of $z$.
For each $i\in[1,m]$, let \[A_i=( (s_1^{(i)},n_1^{(i)}),(s_2^{(i)},n_2^{(i)}),\ldots , (s_{\ell_i}^{(i)},n_{\ell_i}^{(i)}))\] denote the list in a subtree configuration $(r_i,A_i)$.
We \emph{round} the list $A_i$ to a list \[\overline{A_i}=( (\overline{s_1^{(i)}},n_1^{(i)}),\overline{(s_2^{(i)}},n_2^{(i)}),\ldots , (\overline{s_{\ell_i}^{(i)}},n_{\ell_i}^{(i)})),\] where $\overline{x}$  denotes the smallest value in $X$ that is greater than or equal to $x$, for any integer value $x$.
The rounding is represented by adding $\overline{x}-x$ \emph{dummy} terminals at vertex $r_i$ to each subtour initially consisting of $x$ terminals.
Let $\mathcal{S}\subseteq [1,k]$ denote a multiset such that for each $i\in[1,m]$ and for each $j\in[1,\ell_i]$, the multiset $\mathcal{S}$ contains $n_j^{(i)}$ copies of $\overline{s_j^{(i)}}$.

\begin{definition}[compatibility]
A multiset $\mathcal{S}\subseteq [1,k]$ and a sum list $( (s_1,n_1),(s_2,n_2),\ldots , (s_{\ell},n_{\ell}))$ are \emph{compatible} if there is a partition of $\mathcal{S}$ into $\sum_i n_i$ parts and a correspondence between the parts of the partition and the values $s_i$'s, such that for each $s_i$, there are $n_i$ associated parts, and for each of those parts, the elements in that part sum up to $s_i$.
\end{definition}

Let $A=( (s_1,n_1),(s_2,n_2),\ldots , (s_{\ell(A)},n_{\ell(A)}))$ be a sum list.
The value $g(z,A)$ of the subtree configuration $(z,A)$ equals the minimum, over all sets $X$ and all subtree configurations $\{(r_i,A_i)\}_{1\leq i\leq m}$ such that $\mathcal{S}$ and $A$ are compatible, of
\begin{equation}
\label{eqn:g-z-A}
\sum_{i=1}^m g(r_i,A_i)+ 2\cdot n(A_i)\cdot w(r_i,z),
\end{equation}
where $n(A_i)$ denotes $\sum_j n_j^{(i)}$.
We note that $n_1 s_1+n_2 s_2+\cdots +n_{\ell(A)} s_{\ell(A)}$ is equal to the number of (real and dummy) terminals in the subtree rooted at $z$.

\begin{algorithm}[t]
\caption{Computation for subtree configurations at a critical vertex $z$.}
\label{alg:subtree-configuration-critical}
\begin{algorithmic}[1]
    \For{each list $A$}
        \State $g(z,A)\gets+\infty$
    \EndFor
    \For{each set $X$ of $\frac{1}{\beta}$ integer values in $[\alpha\cdot k,k]$}
        \For{each $i\in [0,m]$ and each list $A$}
            \State $\DP_i(A)\gets +\infty$
        \EndFor
        \State $\DP_0(\emptyset)\gets 0$
        \For{each $i\in[1,m]$}
            \For{each subtree configuration $(r_i,A_i)$}
                \State $\overline{A_i}\gets round(A_i)$
                \For{each sum list $A_{\leq i-1}$}
                    \For{each way to combine $A_{\leq i-1}$ and $\overline{A_i}$}
                        \State $A_{\leq i}\gets$ the resulting sum list
                        \State $x\gets \DP_{i-1}(A_{\leq i-1})+g(r_i,A_i)+2\cdot n(A_i)\cdot w(r_i,z)$
                        \State $\DP_i(A_{\leq i})\gets \min(\DP_i(A_{\leq i}),x)$
                    \EndFor
                \EndFor
            \EndFor
    \EndFor
    \For{each list $A$}
        \State $g(z,A)\gets \min(g(z,A),\DP_m(A))$
    \EndFor
\EndFor
\State \Return $g(z,\cdot)$
\end{algorithmic}
\end{algorithm}

Fix any set $X$ of $\frac{1}{\beta}$ integer values in $[\alpha\cdot k,k]$.
We show how to compute the minimum cost of~\cref{eqn:g-z-A} over all subtree configurations $\{(r_i,A_i)\}_{1\leq i\leq m}$ such that $\mathcal{S}$ and $A$ are compatible.
For each $i\in[1,m]$ and for each subtree configuration $(r_i,A_i)$, the value $g(r_i,A_i)$ has already been computed using the algorithm in \cref{sec:subtree-configuration-root}, according to the bottom up order of the computation.
We use a dynamic program that scans $r_1,\dots, r_m$ one by one: those are all siblings, so here the reasoning is not bottom-up but left-right.
Fix any $i\in[1,m]$.
Let $\mathcal{S}_i\subseteq[1,k]$ denote a multiset such that for each $i'\in[1,i]$ and for each $j\in[1,\ell_{i'}]$, the multiset $\mathcal{S}_i$ contains $n_j^{(i')}$ copies of $\overline{s_j^{(i')}}$.
We define a dynamic program table $\DP_i$.
The value $\DP_i(A_{\leq i})$ at a sum list $A_{\leq i}$ equals the minimum, over all subtree configurations $\{(r_{i'},A_{i'})\}_{1\leq i'\leq i}$ such that $\mathcal{S}_i$ and $A_{\leq i}$ are compatible, of   \[\sum_{i'=1}^i g(r_{i'},A_{i'})+ 2\cdot n(A_{i'})\cdot w(r_{i'},z).\]
When $i=m$, the values $\DP_m(\cdot)$ are those that we are looking for.
It suffices to fill in the tables $\DP_1,\DP_2,\dots,\DP_m$.

To compute the value $\DP_i$ at a sum list $A_{\leq i}$, we use the value $\DP_{i-1}$ at a sum list $A_{\leq i-1}$ and the value $g(r_i,A_i)$ of a subtree configuration $(r_i,A_i)$.
Let $A_{\leq i-1}=( (\hat s_1,\hat n_1),(\hat s_2,\hat n_2),\ldots , (\hat s_{\ell},\hat n_{\ell}))$.
We combine $A_{\leq i-1}$ and $\overline{A_i}$ as follows.
For each $p\in [1,\ell]$ and each $j\in[1,\ell_i]$ such that $\hat s_p+\overline{s_{j}^{(i)}}\leq k$, we observe that $\hat s_p+\overline{s_{j}^{(i)}}$ is the sum of a multiset of values in $X$.
We create $n_{p,j}$ copies of the association of $(\hat s_p,\overline{s_{j}^{(i)}})$, where $n_{p,j}\in [0,n]$ is an integer variable that we enumerate in the algorithm.
We require that for each $p\in[1,\ell]$, $\sum_j n_{p,j}\leq \hat n_p$; and for each $j\in[1,\ell_i]$, $\sum_p n_{p,j} \leq n_j^{(i)}$.
The resulting sum list $A_{\leq i}$ is obtained as follows.
\begin{itemize}
\item For each association $(\hat s_p,\overline{s_{j}^{(i)}})$, we put in $A_{\leq i}$ the pair $(\hat s_p+\overline{s_{j}^{(i)}},n_{p,j})$.
\item For each pair $(\hat s_p,\hat n_p)\in A_{\leq i-1}$, we put in $A_{\leq i}$ the pair $(\hat s_p,\hat n_p-\sum_j n_{p,j})$.
\item For each pair $(\overline{s_j^{(i)}},n_j^{(i)})\in \overline{A_i}$, we put in $A_{\leq i}$ the pair $(\overline{s_j^{(i)}},n_j^{(i)}-\sum_p n_{p,j})$.
\end{itemize}

The algorithm is described in \cref{alg:subtree-configuration-critical}.

\paragraph{Running time.} Since the numbers of sets $X$, of subtree configurations, of sum lists, and of ways to combine them, are each $n^{O_\eps(1)}$, the running time of the algorithm is $n^{O_\eps(1)}$.

\section{Reduction to Bounded Distances}\label{sec:preprocessing}

In this section, we prove \cref{thm:reduction-bounded-dist}.
We reduce the tree CVRP with general distances to the tree CVRP with bounded distances.
The reduction holds for the unit demand version, the splittable version, and the unsplittable version of the tree CVRP.

\subsection{Algorithm}
\label{sec:general-dist-algo}

For any subset $U\subseteq V'$ of terminals, a \emph{subproblem on $U$} is an instance to the tree CVRP, in which the tree is $T=(V,E)$ and the set of terminals is $U$.
To simplify the presentation, we assume that $1/\eps$ is an integer.
%A \emph{layer}  of the tree $T$, parameterized by an integer $i$, is the set of vertices whose distance to the root is in .
For each integer $i\in \mathbb{Z}$, let $U_i\subseteq V'$ denote the set of terminals $v\in V'$ such that $\dist(v)\in [(1/\eps)^i, (1/\eps)^{i+1})$.

Choose an integer $i_0\in [0,(1/\eps)-1]$ uniformly at random.
For each integer $j\in \mathbb{Z}$, let $Y_j=U_{(1/\eps)\cdot j+i_0}$, and let $Z_j\subseteq V'$ denote the union of $U_i$ for $i=(1/\eps)\cdot j +i_0+1, (1/\eps)\cdot j+i_0+2,\ldots ,(1/\eps)\cdot(j+1) + i_0-1$.
Let $W$ denote the collection of the non-empty sets $Y_j$'s and the non-empty sets $Z_j$'s.
Note that $W$ is a partition of the terminals in $V'$.

Let $\mathcal{A}$ denote any polynomial time $\rho$-approximation algorithm for the tree CVRP with bounded distances from the assumption in \cref{thm:reduction-bounded-dist}.
Consider any set $U\in W$.
From the construction, we have
\[\frac{\max_{v\in U}\dist(v)}{\min_{v\in U}\dist(v)}< \left(\frac{1}{\eps}\right)^{\frac{1}{\eps}-1}.\]
Thus the subproblem on $U$ has bounded distances.
We apply the algorithm $\mathcal{A}$ on the subproblem on $U$ to obtain a solution $\SOL(U)$.

Let $\SOL=\bigcup_{U\in W} \SOL(U)$.

It is easy to see that $\SOL$ is a feasible solution to the CVRP.
Since the number of subproblems is at most $n$ and each subproblem is solved in polynomial time by $\mathcal{A}$, the overall running time
is polynomial.

\begin{remark}
The algorithm can be derandomized by enumerating all of the $1/\eps$ values of $i_0$, and returning the best solution found.
\end{remark}

\subsection{Analysis}

We analyze the cost of the solution $\SOL$.

For any subset $U\subseteq V'$ of terminals, let $\opt(U)$ denote the optimal value for the subproblem on $U$.
For each set $U\in W$, $\SOL(U)$ is a $\rho$-approximate solution to the subproblem on $U$.
Thus we have
\begin{equation}
\label{eqn:SOL}
\cost(\SOL)=\sum_{U\in W} \cost(\SOL(U))\leq \rho\cdot \sum_{U\in W} \opt(U)=\rho\cdot \left(\sum_{j} \opt(Y_j)+\sum_j \opt(Z_j)\right).
\end{equation}
In the following, we bound $\sum_{j} \opt(Y_j)$ and $\sum_j \opt(Z_j)$.

For each $i\in \mathbb{Z}$, define $F_i\subseteq E$ to be
\[F_i=\big\{(u,v)\in E \mid \min(\dist(u),\dist(v))\in\left[(1/\eps)^i,(1/\eps)^{i+1}\right)\big\}.\]
We assume without loss of generality that, for each $i\in \mathbb{Z}$ and for each $(u,v)\in F_i$, $\max(\dist(u),\dist(v))\leq (1/\eps)^{i+1}$. Indeed, if $\max(\dist(u),\dist(v))> (1/\eps)^{i+1}$, we may replace the edge $(u,v)$ by a path of edges whose total weight equals $w(u,v)$, and such that each edge on that path satisfies the assumption.

Let $F_{\leq i}$ denote the union of $F_{i'}$ for all $i'\leq i$.

\begin{lemma}\label{lemma:samelayer}
$\sum_i \opt(U_i)\leq 2(1+\eps)\cdot \opt$.
\end{lemma}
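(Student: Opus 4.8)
The plan is to build, from a single fixed optimal solution $\OPT$, a feasible solution to the subproblem on $U_i$ for \emph{every} index $i$, and to show that their costs add up to at most $2(1+\eps)\opt$. For a tour $t$ of $\OPT$ and an index $i$, let $T_t^{(i)}$ denote the minimal subtree of $T$ spanning the root together with the terminals of $U_i$ that $t$ visits; its depth-first traversal is a single tour of cost $2\,w(T_t^{(i)})$ that visits exactly those terminals, and it visits at most $k$ of them because $t$ does. Since every terminal of $U_i$ is visited by exactly one tour of $\OPT$, the collection $\{\,\text{DFS tour of }T_t^{(i)} : t\in\OPT\,\}$ is a feasible solution to the subproblem on $U_i$; hence $\opt(U_i)\le\sum_{t\in\OPT}2\,w(T_t^{(i)})$, and it suffices to bound $\sum_i\sum_{t\in\OPT}2\,w(T_t^{(i)})$.

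I would estimate this sum edge by edge. After the standard subdivision (so that $(u,v)\in F_j$ implies $\max(\dist(u),\dist(v))\le(1/\eps)^{j+1}$) the sets $F_0,F_1,\dots$ partition the edges, and $w(T_t^{(i)})=\sum_{e}w(e)\,\mathbbm{1}[e\in T_t^{(i)}]$, where $e\in T_t^{(i)}$ exactly when $t$ visits a terminal of $U_i$ below $e$. Every terminal below an edge $e\in F_j$ has depth $\ge(1/\eps)^j$, so it lies in a layer $U_i$ with $i\ge j$; I would split the layers charging a given $e\in F_j$ into the \emph{near} layers $i\in\{j,j+1\}$ and the \emph{far} layers $i\ge j+2$. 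For a near layer, the number of tours $t$ with a terminal of $U_i$ below $e$ is at most the number $m_e$ of tours of $\OPT$ crossing $e$; summing $2\,w(e)\,m_e$ over the (at most two) near layers and over all $e$, and using the identity $\opt=\sum_e 2\,w(e)\,m_e$ (each tour traverses each of its edges twice), the total near contribution is at most $2\opt$. For a far layer $i\ge j+2$ I would use two facts: $w(e)\le(1/\eps)^{j+1}$ is an $\eps$-fraction of the scale $(1/\eps)^{i-1}$ of the next shallower band; and any tour reaching a terminal of $U_i$ below $e$ must traverse, below $e$, the entire depth band $[(1/\eps)^{i-1},(1/\eps)^{i})$, of length $(1/\eps)^{i-1}(1/\eps-1)$. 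Charging $w(e)$ against the cost that $\OPT$ spends in that band below $e$ — whose witnesses, over all $e$, $i$ and $t$, lie in the pairwise-disjoint bands $F_{i-1}$ and hence have total cost at most $\opt$ — loses only a factor $O(\eps)$, so the far contribution is $O(\eps)\cdot\opt$. Adding the two parts gives $\sum_i\sum_t 2\,w(T_t^{(i)})\le 2\opt + O(\eps)\opt\le 2(1+\eps)\opt$ since $\eps$ is small.

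The delicate point is exactly this far-layer accounting. A naive union bound would charge each edge $e$ once for every layer occupying the subtree below $e$, and there can be $\Omega\big(\log(D_{\max}/D_{\min})\big)$ such layers; charging per terminal is no better, since $\sum_{v}\dist(v)$ can be as large as $\Theta(k\cdot\opt)$. What makes the argument work is that consecutive layer boundaries differ by a \emph{multiplicative} factor $1/\eps$: the weight of an edge in band $j$ is then an $\eps$-fraction of the depth scale of any far band below it, so the contributions of a fixed edge over the far layers form a geometric series dominated by its largest (near) term. Care is also needed so that the two near layers inflate $\opt$ only by the factor $2$ that appears in the statement, rather than by the depth of the tree; this is what the per-edge bound through $m_e$, rather than a bound along root-to-leaf paths, accomplishes.
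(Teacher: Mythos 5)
Your construction of the per-layer solutions (restrict each tour of $\OPT$ to its $U_i$-terminals and take the doubled minimal spanning subtree) is exactly what the paper does, and your charging mechanism is the same one the paper relies on: the shallow portion of the layer-$i$ piece is paid for by the cost the original tour spends in the band $F_{i-1}$, which has length $(1/\eps)^{i-1}(1/\eps-1)$, i.e.\ $(1/\eps-1)$ times the depth at which it starts, and these bands are disjoint over $i$. The difference is purely in the bookkeeping. The paper peels each tour from its deepest non-empty layer upward, duplicating the shared prefix once per layer and charging the \emph{entire} prefix (everything in $F_{\le i-1}$) to $t\cap F_{i-1}$ at ratio $\frac{1}{1-\eps}$, so the factor $2$ comes from ``original plus duplicates''; you instead do a global per-edge account in which the factor $2$ comes from the two near layers $i\in\{j,j+1\}$ via $\opt=\sum_e 2w(e)m_e$ (with $m_e$ the number of tours crossing $e$), and only the far layers $i\ge j+2$ are charged to $F_{i-1}$, at ratio $\frac{\eps}{1-\eps}$. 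Both routes give $2\opt+\frac{\eps}{1-\eps}\opt\le 2(1+\eps)\opt$, so your constants close.

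One step in your far-layer accounting should be stated more carefully than ``charge $w(e)$ against the cost $\OPT$ spends in that band below $e$, with witnesses in pairwise-disjoint bands.'' For a fixed tour $t$ and far layer $i$, the charging edges form the subtree $T_t^{(i)}\cap F_{\le i-2}$, and two nested edges $e,e'$ on the same root-to-terminal path have nested, hence overlapping, band-$(i-1)$ witnesses below them: the witnesses are disjoint across $i$ (and across distinct crossing points) but not across $e$. If each edge were literally given its own witness of length $(1/\eps)^{i-1}(1/\eps-1)$, a long chain of subdivided ancestors of a single terminal would over-charge that witness arbitrarily many times. The fix is to aggregate per crossing point of $T_t^{(i)}$ into $F_{i-1}$ rather than per edge: the weights of all edges of $T_t^{(i)}\cap F_{\le i-2}$ above a given crossing point telescope along the root-to-crossing path to at most its depth $(1/\eps)^{i-1}$, while $t$ spends at least $(1/\eps)^{i-1}(1/\eps-1)$ in $F_{i-1}$ below that point; hence $w\bigl(T_t^{(i)}\cap F_{\le i-2}\bigr)\le\frac{\eps}{1-\eps}\,w\bigl(t\cap F_{i-1}\bigr)$, and summing over the disjoint bands $F_{i-1}$ and over $t$ gives the far contribution at most $\frac{\eps}{1-\eps}\cdot\opt\le 2\eps\cdot\opt$. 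With that restatement the argument is complete and correct.
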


\begin{proof}
Consider any tour $t$ in $\OPT$.
Let $i^*$ be the maximum index $i$ such that $t\cap U_i\neq \emptyset$.
Let $t_{i^*}$ denote the tour obtained by pruning $t$ so that it visits only the root and the terminals in $U_{i^*}$.
Let $t_{\leq i^*-1}$ denote the tour obtained by pruning $t$ so that it visits only the root and the remaining terminals, i.e.,\ the terminals in $U_i$ for all $i\leq i^*-1$.
Let $t'$ denote the common part of $t_{i^*}$ and $t_{\leq i^*-1}$.
We duplicate $t'$ and charge the cost of $t'$ to $t\cap F_{i^*-1}$.
We have
\[\cost(t')\leq \cost(t_{i^*}\cap F_{\leq i^*-1})\leq \frac{1}{1-\eps}\cdot \cost(t_{i^*}\cap F_{i^*-1})\leq \frac{1}{1-\eps}\cdot \cost(t\cap F_{i^*-1}),\]
where the second inequality follows from the definition of $\{F_i\}_i$ and using the tree structure.
%, $\cost(S')$ is at most $\frac{1}{1-\eps}$ times the cost of $S\cap F_{i^*-1}$.
We repeat on $t_{\leq i^*-1}$.
We end up with a collection of tours, each visiting only terminals in the same set $U_i$ for some $i$.
Since the charges are to disjoint parts of $t$, the overall cost of the duplicated parts is at most $\frac{1}{1-\eps}\cdot\cost(t)<(1+2\eps)\cdot \cost(t)$.
Summing over all tours in $\OPT$ concludes the proof.
\end{proof}

Using \cref{lemma:samelayer} and since $i_0\in [0,(1/\eps)-1]$ is chosen uniformly at random, we have
\begin{equation}
\label{eqn:Y-j}
\E\bigg[\sum_j \opt(Y_j)\bigg]=\eps\cdot \sum_i \opt(U_i)\leq \eps\cdot 2(1+\eps)\cdot \opt.
\end{equation}

Next, we analyze $\sum_j \opt(Z_j)$.
Consider any tour $t$ in $\OPT$.
First prune $t$ so that it does not visit terminals in the set $Y_j$ for any $j$.
The rest of the analysis is similar to the proof of Lemma~\ref{lemma:samelayer}.
Let $j^*$ be the maximum index $j$ such that $t\cap Z_j\neq \emptyset$.
Let $t_{j^*}$ denote the tour obtained by pruning $t$ so that it visits only the root and the terminals in $Z_{j^*}$.
Let $t_{\leq j^*-1}$ denote the tour obtained by pruning $t$ so that it visits only the root and the remaining terminals, i.e, the terminals in $Z_{j}$ for all $j\leq j^*-1$.
Let $t'$ denote the common part of $t_{j^*}$ and $t_{\leq j^*-1}$.
We duplicate $t'$ and we charge the cost of $t'$ to $t\cap F_{(1/\eps)\cdot j^*+i_0}$.
We have
\[\cost(t')\leq \cost(t_{j^*}\cap F_{\leq (1/\eps)\cdot j^*+i_0-1})\leq \frac{\eps}{1-\eps}\cdot \cost(t_{j^*}\cap F_{(1/\eps)\cdot j^*+i_0})\leq \frac{\eps}{1-\eps}\cdot \cost(t\cap F_{(1/\eps)\cdot j^*+i_0}),\]
where the second inequality follows from the definition of $\{F_i\}_i$ and using the tree structure.
We repeat on $t_{\leq j^*-1}$.
We end up with a collection of tours, each visiting only terminals in the same set $Z_j$ for some $j$.
Since the charges are to disjoint parts of $t$, the overall cost of the duplicated parts is at most $\frac{\eps}{1-\eps}\cdot\cost(t)<2\eps\cdot \cost(t)$.
Summing over all tours in $\OPT$, we have
\begin{equation}
\label{eqn:Z-j}
\sum_j \opt(Z_j)<(1+2\eps)\cdot \opt.
\end{equation}

From \cref{eqn:SOL,eqn:Y-j,eqn:Z-j}, we have
\[\E\big[\cost(\SOL)\big]\leq \rho\cdot (\eps\cdot 2(1+\eps) + (1+2\eps))\cdot \opt<(1+5\eps)\rho\cdot\opt.\]
This completes the proof of \cref{thm:reduction-bounded-dist}.

\section{Extension to the Splittable Tree CVRP}
In this section, we prove \cref{cor:main} by extending the PTAS in \cref{thm:main} to the splittable setting.

\label{sec:splittable}

\begin{definition}[splittable tree CVRP]
An instance of the \emph{splittable} version of the \emph{capacitated vehicle routing problem (CVRP)} on \emph{trees} consists of
\begin{itemize}
    \item an edge weighted \emph{tree} $T=(V,E)$ with $n=|V|$ and with \emph{root} $r\in V$ representing the \emph{depot},
    \item a set $V'\subseteq V$ of \emph{terminals},
    \item a positive integer \emph{demand} $d(v)$ of each terminal $v\in V'$,
    \item a positive integer \emph{tour capacity} $k$.
\end{itemize}
A feasible solution is a set of tours such that
\begin{itemize}
    \item each tour starts and ends at $r$,
    \item each tour visits at most $k$ demand,
    \item the demand of each terminal is covered, where we allow the demand of a terminal to be covered by multiple tours.
\end{itemize}
The goal is to find a feasible solution such that the total cost of the tours is minimum.
\end{definition}

%Recall that in the \emph{splittable} version of the tree CVRP, we are given a tree $T$ with $n$ terminals, and each terminal $v$ of $T$ has a positive integer \emph{demand} $d(v)$.
%The goal is to find a minimum length collection of tours starting from and ending at the depot that together cover the demand at every terminal, where the demand at a terminal may be covered by multiple tours and every tour covers at most $k$ demand.

We use a reduction from the splittable tree CVRP to the unit demand tree CVRP.
The reduction was introduced by Jayaprakash and Salavatipour~\cite{jayaprakash2021approximation}, which we summarize.
First, we reduce an instance of the splittable tree CVRP to another instance of the splittable tree CVRP in which $d(v)\leq k\cdot n$ for any terminal $v$.
Next, we replace each terminal $v$ of $T$ by a complete binary tree $T(v)$ of $d(v)$ leaves, such that each leaf of $T(v)$ is a terminal, and each edge of $T(v)$ has weight 0.
Let $\tilde T$ denote the resulting tree.
From the construction, $\tilde T$ contains at most $k\cdot n^2$ vertices.
As observed in~\cite{jayaprakash2021approximation}, the unit-demand CVRP on $\tilde T$ is equivalent to the splittable CVRP on $T$.

From \cref{thm:main}, there is an approximation scheme for the unit demand tree CVRP with running time polynomial in the number of vertices.
Therefore, we obtain an approximation scheme for the splittable tree CVRP  with running time polynomial in $n$ and $k$.

%\newpage
%\bibliographystyle{plainurl}
\bibliographystyle{alpha}
\bibliography{references}

\appendix
\newpage
\section{\cref{fig:BP19}}
\begin{figure}[h]
    \centering
    \includegraphics[scale=0.4]{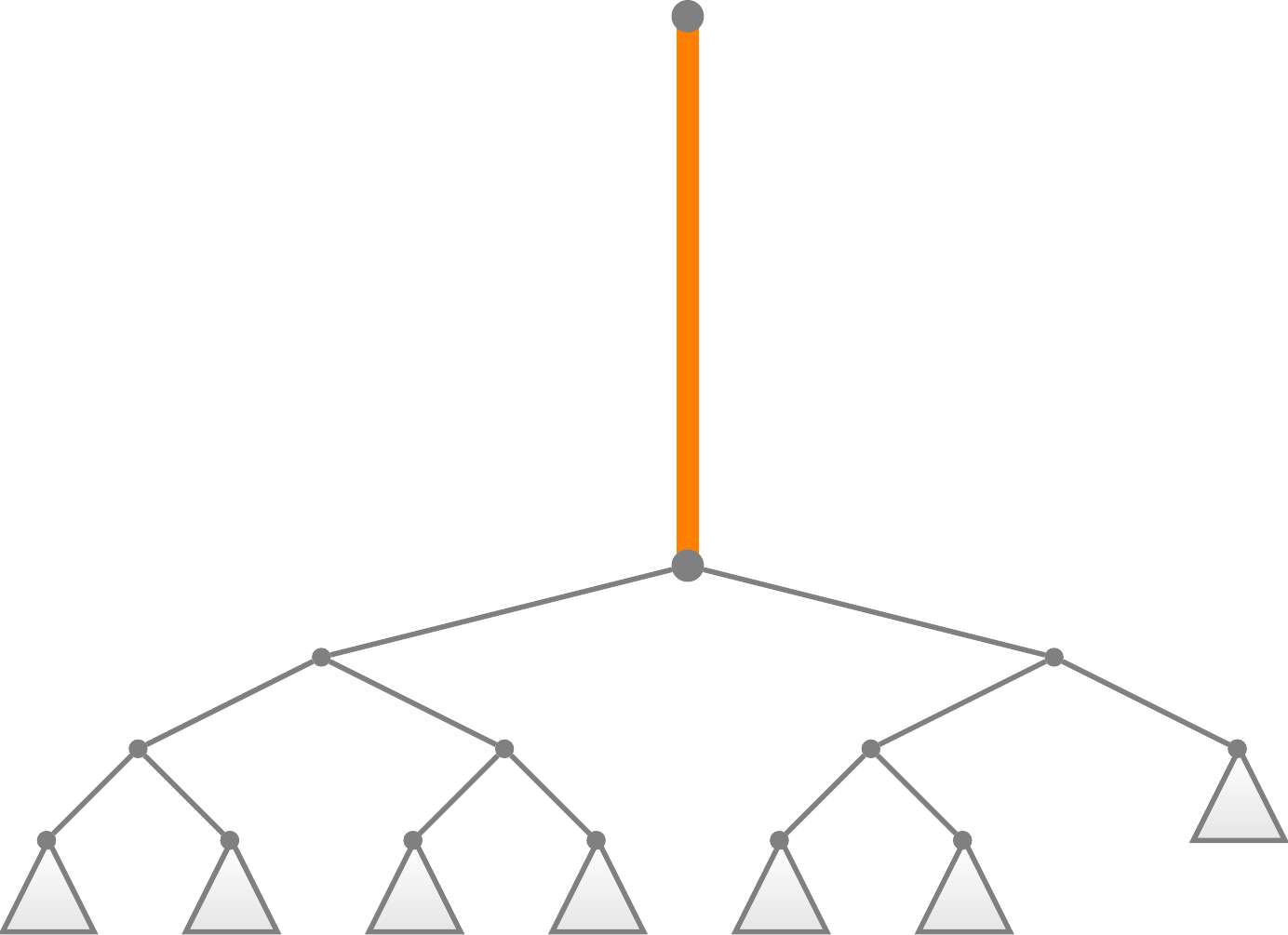}
    \caption{Bad instance for the cluster decomposition in~\cite{becker2019framework}. 
    Let $m=2\lceil\frac{1}{\delta}\rceil+1 $,  where $\delta=\Theta(\eps)$ is defined in \cite{becker2019framework}.
    Let the number of terminals in the tree be $2k$.
    The $2k$ terminals belong to $m$ subtrees of $2k/m$ terminals each (assuming that $2k$ is a multiple of $m$).
    Each of the $m$ subtrees (represented by a triangle) is a \emph{leaf cluster} according to the decomposition in \cite{becker2019framework}.
    The unique edge incident to the root vertex (thick orange edge) has weight 1, and all other edges in the tree have weight 0.
    The optimal solution consists of two tours (each visiting exactly $k$ vertices), thus has cost 4.
    When the violation of tour capacity is not allowed and when each leaf cluster is required to be visited by a single tour, the solution contains at least 3 tours, hence a cost of at least 6.
    }
    \label{fig:BP19}
\end{figure}

\section{Proof of \cref{fact:decomposition}}
\label{app:decomposition}

A \emph{leaf component} is a subtree rooted at a vertex $v\in V$ containing at least $\biggamma\cdot k$ terminals and such that each of the subtrees rooted at the children of $v$ contains strictly less than $\biggamma\cdot k$ terminals.
%Note that a leaf component is a big component, and contains at most $2\biggamma\cdot k$ terminals.
Observe that the leaf components are disjoint subtrees of $T$.
The \emph{backbone} of $T$ is the partial subtree of $T$ consisting of all edges on paths from the root of $T$ to the root of some leaf component.
\begin{definition}[key vertices]
\label{def:key-vertices}
We say that a vertex $v\in V$ is a \emph{key vertex} if $v$ is of one of the three cases: (1) the root of a leaf component; (2) a branch point of the backbone; (3) the root of the tree $T$.
\end{definition}
We say that two key vertices  $v_1,v_2$ are \emph{consecutive} if the $v_1$-to-$v_2$ path in the tree does not contain any other key vertex.
For each pair of consecutive key vertices $(v_1,v_2)$, we consider the subgraph \emph{between} $v_1$ and $v_2$, and decompose that subgraph into \emph{internal components}, each of demand at most $2\Gamma\cdot k$, such that all of these components are \emph{big} (i.e., of demand at least $\Gamma\cdot k$) except for the upmost component.
%Next, we define internal components by traversing each path of the backbone in bottom up order, starting from an \emph{exit} vertex of a new component and  gathering off-backbone subtrees as we go up the backbone, until we either just exceed $\biggamma\cdot k$ terminals or reach a branch point of the backbone: this defines an \emph{internal component} that is a partial subtree of $T$.
%Note that an internal component contains less than $2\biggamma\cdot k$ terminals.

A formal description of the  construction is given in \cref{alg:components}.
The first three properties in the claim follow from the construction.

\begin{algorithm}[t]
\caption{Decomposition into components.}
\label{alg:components}
\begin{algorithmic}[1]
\For{each vertex $v\in V$}
    \State $T(v)\gets$ subtree of $T$ rooted at $v$
    \State $n(v)\gets$ number of terminals in $T(v)$
\EndFor
\For{each non-leaf vertex $v\in V$}
    \State Let $v_1$ and $v_2$ denote the two children of $v$ in $T$
    \If{$n(v)\geq \Gamma\cdot k$ and $n(v_1)<\Gamma\cdot k$ and $n(v_2)<\Gamma\cdot k$}
        \State Let $T(v)$ be a leaf component with root vertex $v$
    \EndIf
\EndFor
\State
\State $B\gets$ set of \emph{key vertices}
\Comment {\cref{def:key-vertices}}
\For{each vertex $v_2\in B$ such that $v_2\neq r$}
    \State $v_1\gets$ lowest ancestor of $v_2$ among vertices in $B$
    \Comment{$v_1$ and $v_2$ are \emph{consecutive} key vertices}
    \For{each vertex $v$ on the $v_1$-to-$v_2$ path}
        $H(v)\gets (T(v)\setminus T(v_2))\cup \{v_2\}$
    \EndFor
    \State $x\gets v_2$
    \While{$H(v_1)$ contains at least $\Gamma\cdot k$ terminals}
        \State $v\gets$ the deepest vertex on the $v_1$-to-$x$ path such that $H(v)$ contains at least $\Gamma\cdot k$ terminals
        \State Let $H(v)$ be an internal component with root vertex $v$ and exit vertex $x$
        \State $x\gets v$
        \For{each vertex $v'$ on the $v_1$-to-$x$ path}
             $H(v')\gets (T(v')\setminus T(x))\cup \{x\}$
        \EndFor
    \EndWhile
    \If{$v_1\neq x$}
        \State Let  $H(v_1)$ be an internal component with root vertex $v_1$ and exit vertex~$x$
    \EndIf
\EndFor
\end{algorithmic}
\end{algorithm}

It remains to show the last property in the claim.
For each big component $c$, we define the image of $c$ to be itself.
It remains to consider the components that are not big, called \emph{bad} components.
Observe that the root vertex $r_c$ of any bad component $c$ is a key vertex.
We say that a bad component $c$ is a \emph{left bad} component (resp.\ \emph{right bad} component) if $c$ contains the left child (resp.\ right child) of $r_c$.
We define a map from left bad components to leaf components, such that the image of a left bad component is the leaf component that is \emph{rightmost} among its descendants, and we show that this map is injective.
Let $c_1$ and $c_2$ be any left bad components.
Observe that $r_{c_1}$ and $r_{c_2}$ are distinct key vertices.
%each branch point has at least one leaf component in its left subtree and in its right subtree.
If $r_{c_1}$ is ancestor of $r_{c_2}$ (the case when $r_{c_2}$ is ancestor of $r_{c_1}$ is similar), then the image of $c_2$ is in the left subtree of $r_{c_2}$ whereas the image of $c_1$ is outside the left subtree of $r_{c_2}$, so the images of $c_1$ and of $c_2$ are different.
In the remaining case, the subtrees rooted at $r_{c_1}$ and at $r_{c_2}$ are disjoint, so the images of $c_1$ and of $c_2$ are different.
Thus the map for left bad components is injective.
Note that every leaf component is big.
Therefore, we obtain an injective map from left bad components to big components such that the image of a left bad component is among its descendants.
The map for right bad components is symmetric.
Hence the first part of the last property. 
The second part of the last property follows from the first part of that property and the fact that the number of components with demands at least $\Gamma$ is at most $1/\Gamma$ times the total demand in the tree $T$.
This completes the proof of the claim.

%We complete the proof of \cref{fact:decomposition}.

%============================================

%====================================================

\end{document}